\newcommand{\ssection}[1]{\smallskip\phantomsection\addcontentsline{toc}{section}{#1}\textit{#1.---}}
\newcommand{\rectangle}{\,\fboxsep0pt\fbox{\rule{0.6 em}{0.0  pt}\rule{0 pt}{1 ex}}}
\DeclareMathOperator{\rank}{rank}
\DeclareMathOperator{\Sym}{Sym}
\DeclareMathOperator{\tr}{tr}
\newcommand{\appropto}{\mathrel{\vcenter{
  \offinterlineskip\halign{\hfil$##$\cr
    \propto\cr\noalign{\kern2pt}\sim\cr\noalign{\kern-2pt}}}}}
\newcommand{\CC}{\mathbb C}
\newcommand{\RR}{\mathbb R}
\newcommand{\ZZ}{\mathbb Z}
\newcommand{\NN}{\mathbb N}
\newcommand{\Hil}{\mathcal H}
\newcommand{\ot}{\otimes}
\newcommand{\eps}{\varepsilon}
\newcommand{\id}{\mathbbm 1}
\newcommand{\GL}{\mathrm{GL}}
\newcommand{\EE}{\mathbb E}
\newcommand{\Perm}{\operatorname{Perm}}
\newcommand{\Gauss}[1]{{\mathcal G^{#1 \times #1}}}
\newcommand{\Usub}[2]{{U(#1)^{#2 \times #2}}}
\newcommand{\TC}[1]{}
\newcommand{\hook}{\text{hook}}
\newcommand{\ketbra}[2]{\ket{#1}\!\!\bra{#2}}
\newcommand{\WD}{\text{WD}}
\newcommand{\Pl}{\text{Pl}}
\crefname{algo}{Algorithm}{Algorithms}\crefname{fig}{Figure}{Figures}
\newtheorem{thm}{Theorem}[section]\crefname{thm}{Theorem}{Theorems}
\newtheorem{lem}[thm]{Lemma}\crefname{lem}{Lemma}{Lemmas}
\newtheorem{conj}[thm]{Conjecture}\crefname{conj}{Conjecture}{Conjectures}
\crefname{prp}{Proposition}{Proposition}
\newtheorem{rem}[thm]{Remark}\crefname{rem}{Remark}{Remarks}
\newtheorem{cor}[thm]{Corollary}\crefname{cor}{Corollary}{Corollaries}
\newtheorem{dfn}[thm]{Definition}\crefname{dfn}{Definition}{Definitions}
\crefname{exa}{Example}{Examples}
\newtheorem*{thm*}{Theorem}\crefname{thm}{Theorem}{Theorems}
\newtheorem*{lem*}{Lemma}\crefname{lem}{Lemma}{Lemmas}
\newtheorem*{conj*}{Conjecture}\crefname{conj}{Conjecture}{Conjectures}
\title{Permanent of random matrices from representation theory: moments, numerics, concentration, and comments on hardness of boson-sampling}
\author{
  Sepehr Nezami\footnote{ California Institute of Technology. Email: nezami@caltech.edu.} }
\begin{document}

\maketitle
\begin{abstract}
Computing the distribution of permanents of random matrices has been an outstanding open problem for several decades. In quantum computing, ``anti-concentration'' of this distribution is an unproven input for the proof of hardness of the task of boson-sampling. Using a hybrid representation-theoretic and combinatorial approach, we study permanents of random i.i.d. complex Gaussian matrices, and more broadly, submatrices of random unitary matrices. We prove strong lower bounds for all moments of the permanent distribution. Moreover, we provide substantial evidence that our lower bounds are close to being tight, and therefore, constitute accurate estimates for the moments. Let $U(d)^{k\times k}$ indicate the distribution of $k\times k$ submatrices of $d\times d$ Haar distributed random unitary matrices, and $\mathcal  G^{k\times k}$ be the distribution of $k\times k$ matrices with matrix elements being i.i.d. standard complex Gaussian random numbers. {\bf(1)} Using the Schur-Weyl duality (or more precisely, the Howe's $\text{GL}_k \times \text{GL}_t$ duality), we provide an explicit expansion formula for the $2t$-th moment of $|\text{Perm}(M)|$, i.e., $\mathbb E|\text{Perm}(M)|^{2t}$, when $M$ is drawn from $U(d)^{k\times k}$ or $\mathcal G^{k\times k}$. {\bf(2)} When the matrices are drawn from the above distributions, we prove a surprising size-moment duality: the $2t$-th moment of the permanent of random $k\times k$ matrices is equal to the $2k$-th moment of the permanent of $t\times t$ matrices, {\bf(3)} We design an algorithm to exactly compute high moments of relatively small matrices. {\bf(4)} We prove strong lower bounds for arbitrary moments of permanents of matrices drawn from $\mathcal G^{ k\times k}$ or $U(k)$, and conjecture that our lower bounds are close to saturation up to a small multiplicative error. We provide extensive numerical and analytical evidence for our conjecture. {\bf (5)} Assuming our conjectures, we use the large deviation theory to compute the tail of the log-permanent probability density function of Gaussian matrices for the first time. {\bf (6)} We argue that it is unlikely that the permanent distribution can be uniquely determined from the integer moments and one may need to supplement the integer moment calculations with extra assumptions in order to prove the permanent anti-concentration conjecture. 
\end{abstract}
\tableofcontents
\section{Introduction}

Random matrix theory has been very successful in computing the distribution of various quantities derived from random matrices. These include, for example, the derivation of eigenvalue distributions\TC{?}, low order polynomials of matrix elements~\cite{collins2006integration}, and the distribution of matrix determinants~\cite{girko1978central,girko2012theory,tao2006random,nguyen2014random}. On the other hand, the distribution of the permanent of random matrices (when the matrix elements have a vanishing mean) is understood to a much lesser extent~\cite{tao2009permanent,vu2020recent}, despite several decades of research~\cite{girko1978central,girko2012theory,ji2019approximating,borovskikh1994random,rempala1999limiting,szekely1995limit}. 
\par
The case of permanents is more difficult than the other mentioned quantities for several reasons: (1) The permanent is a high order polynomial of many terms, hence, the computation of its moments quickly becomes intractable. (2) The permanent, unlike the determinant, lacks a geometric description and cannot be expressed in terms of matrix eigenvalues. Therefore, most of the tools of the random matrix theory cannot be directly applied to this case. (3) It is known that calculation of the permanent of individual matrices is hard even in the average case~\cite{valiant1979complexity}, and therefore, sampling from the permanent distribution of large matrices is computationally prohibited. 
\par 
In addition to interest from mathematics and computer science communities, understanding certain aspects of the permanent distribution is essential for one of the leading proposals of demonstrating quantum supremacy with photonic systems: the hardness of the task of ``boson sampling'' relies on the ``permanent anti-concentration'' conjecture. Let $\Gauss{k}$ be the distribution of random $k\times k$ matrices with matrix elements being i.i.d. complex Gaussian numbers with mean $0$ and variance $1$, then the conjecture is:
\begin{conj}[The permanent anti-concentration conjecture~\cite{aaronson2011computational}]\label{conj:arkh}
\[\mathrm{Pr}_{M \sim \Gauss{k}}\left( \left|\Perm(M) \right|^2 \leq \epsilon k! \right) \leq A k^B \epsilon^C,\quad \text{for some constants $A$, $B$, and $C>0$ and all $\eps>0$}.\]
\end{conj}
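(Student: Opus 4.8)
We outline a possible route to \Cref{conj:arkh}. The cleanest reformulation is as a one–sided large–deviation bound that reduces to the existence of a single $k$–independent \emph{negative} moment. Write $Y_k:=\log\!\big(|\Perm(M)|^2/k!\big)$ for $M\sim\Gauss{k}$; since $\EE|\Perm(M)|^2=k!$, the variable $Y_k$ is ``centred'' on a logarithmic scale, and the event in \Cref{conj:arkh} is exactly $\{Y_k\le\log\eps\}$. It therefore suffices to prove a left–tail bound $\Pr(Y_k\le-\lambda)\le Ak^{B}e^{-C\lambda}$ for all $\lambda\ge0$. By the Cram\'er/Chernoff inequality, $\Pr(Y_k\le-\lambda)\le e^{-s\lambda}\,\EE[e^{-sY_k}]=e^{-s\lambda}(k!)^{s}\,\EE|\Perm(M)|^{-2s}$ for any $s>0$, so the conjecture follows (with $C=s$) once one establishes
\begin{equation}\label{eq:negmom}
\EE_{M\sim\Gauss{k}}\,|\Perm(M)|^{-2s}\ \le\ A\,k^{B}\,(k!)^{-s}\qquad\text{for some fixed }s>0\text{ independent of }k.
\end{equation}

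The first attempt would be to extract \eqref{eq:negmom} from the moment machinery of this paper. The Howe $\GL_k\times\GL_t$ expansion expresses $\EE|\Perm(M)|^{2t}$ as an explicit character/combinatorial sum in the integer parameter $t$; the plan is to recast that sum in a hypergeometric form admitting meromorphic continuation in $t$ (as happens for Gaussian determinants through Selberg–type integrals) and to evaluate it at $t=-s$. In parallel one must verify that the left side of \eqref{eq:negmom} is even finite: a Carbery--Wright anti–concentration estimate for the degree–$2k$ polynomial $|\Perm(M)|^2$ in the underlying real Gaussian coordinates gives $\Pr(|\Perm(M)|^2\le\delta k!)\le Ck\,\delta^{1/(2k)}$, hence $\EE|\Perm|^{-2s}<\infty$ for every $s<1/(2k)$. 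The difficulty is exactly that this generic input forces an exponent $s$ that \emph{shrinks} with $k$, whereas \eqref{eq:negmom} demands $s$ bounded below uniformly; closing that gap is where the special structure of the permanent, rather than its being an arbitrary low–degree polynomial, must enter.

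Here one meets the obstruction flagged in item~(6): the positive integer moments $\EE|\Perm|^{2t}$, even if known exactly for all $t$, do not determine the law, and two distributions with identical integer moments can behave wildly differently near $0$, so \eqref{eq:negmom} cannot be a formal consequence of the moment formulae alone and a non–moment ingredient is needed. The most promising such ingredient is an inductive ``spreading'' argument along the last row, $\Perm(M)=\sum_{j=1}^{k} M_{kj}\,\Perm\big(M^{(k,j)}\big)$, in which the coefficients $M_{kj}$ are i.i.d.\ standard complex Gaussians independent of the cofactor permanents $\Perm\big(M^{(k,j)}\big)$. Conditioning on the $(k-1)\times(k-1)$ submatrices turns the step into one–dimensional Gaussian anti–concentration of a linear form, \emph{provided} the random vector $\big(\Perm(M^{(k,j)})\big)_{j}$ is, with overwhelming probability, not concentrated near a proper subspace — i.e.\ that the permanents of several overlapping submatrices are \emph{jointly} anti–concentrated. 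Propagating a suitably strengthened joint statement through the induction — with the exact moment computations of this paper fixing the base case $k=O(1)$ and calibrating the constants $A,B$ — is, I expect, the genuine crux of the problem; the size–moment duality and the large–deviation analysis of item~(5) then serve mainly to identify the optimal exponent $C$ and to certify that the resulting bound is close to tight.
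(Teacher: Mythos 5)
The statement you are addressing is the Aaronson--Arkhipov permanent anti-concentration conjecture, which the paper states \emph{as a conjecture} and explicitly does not prove --- it says ``the conjecture will remain open.'' There is therefore no paper proof to compare against; your text is a research sketch, which you yourself present as such. The reductions in your first paragraph are correct: with $Y_k=\log(|\Perm(M)|^2/k!)$ the event in the conjecture is $\{Y_k\le\log\eps\}$, the Markov/Chernoff step is valid, and a $k$-uniform negative-moment bound $\EE|\Perm(M)|^{-2s}\le Ak^{B}(k!)^{-s}$ with a \emph{fixed} $s>0$ would indeed give the conjecture with exponent $C=s$. Your Carbery--Wright remark is also correctly calibrated: as a degree-$2k$ polynomial of the real Gaussian coordinates, $|\Perm(M)|^2$ generically admits anti-concentration only at scale $\delta^{1/(2k)}$, so the exponent $s$ obtainable from that generic input shrinks like $1/k$ and cannot supply the needed uniform bound.

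That last point is the genuine gap, and you flag it yourself. Nothing in the paper closes it: the Howe-duality expansion formula and its consequences give \emph{positive integer} moments, and --- as the paper itself warns in item (6) and in the ``open questions'' section --- the permanent law is likely moment-indeterminate, so no formal manipulation of \cref{eq:expansion-formula} (including a hoped-for analytic continuation in $t$ to $t=-s$) can be guaranteed to recover the true negative moments or the mass near the origin. The inductive row-expansion / joint anti-concentration idea is in the spirit of the Tao--Vu result cited in the paper, but the joint anti-concentration lemma it requires for cofactor permanents is itself open and plausibly of comparable difficulty to the conjecture. So your proposal is coherent and consistent with the paper, but it does not constitute a proof, and the step you identify as ``the genuine crux'' --- upgrading $s\sim1/k$ to $k$-independent $s$ via permanent-specific structure --- is exactly the unresolved content of the conjecture in both the paper and your sketch.
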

This conjecture states that the permanent distribution should not have a considerable weight near the origin. The best known result approaching this conjecture is proven by Tao and Vu~\cite{tao2009permanent}, which states that the permanent of $\pm 1$ matrices is of order $k^{k(\frac12+o(1))}$ with probability $1-o(1)$, which in particular means that the probability of vanishing is $o(1)$. Although it is very plausible that the techniques of~\cite{tao2009permanent} can be extended to the case of random Gaussian matrices, this result does not show the polynomial anti-concentration as required in~\cref{conj:arkh}.
\par 
While proving~\cref{conj:arkh} has been the main motivation of the author, most of the results and heuristics arguments will provide insight into the parts of the distribution away from the origin, and therefore, the conjecture will remain open. 
\par
Our main technical contribution is understanding and bounding the moments of the permanent distribution. Exact results are known for the cases of either small matrices ($1\times 1$ or $2 \times 2$) or small moments ($2$ or $4$):
\begin{align}\label{eq:moments12}
    &\EE_{M \sim \Gauss{k}} \left|\Perm(M) \right|^2=\EE_{M \sim \Gauss{1}} \left|\Perm(M) \right|^{2k} = k!,\quad \text{and},\\
    &\EE_{M \sim \Gauss{k}} \left|\Perm(M) \right|^{2\times 2}=\EE_{M \sim \Gauss{2}} \left|\Perm(M) \right|^{2k} = k!(k+1)!.
\end{align}
Few other moments have also been calculated for small-sized matrices\TC{Cite old papers}. 
\par
To understand and bound the higher moments and large matrices, we take a hybrid representation-theoretic and combinatorial approach, and will be guided by exact moment calculations performed in~\cref{sec:alg}.
\par
We summarize our results here:
\begin{itemize}
    \item {\bf Results regarding the permanent of matrices with i.i.d. standard Gaussian matrix elements:} 
    \begin{enumerate}
        \item We show that $2t$-th moment of permanents of $k \times k$ matrices can be computed by solving a counting problem:
        Start by a $k \times t$ table filled with numbers $1,\cdots, kt$ and let $S_{kt}$ be the permutation group that permutes cells of the table. Consider the subgroups $R ,C \in S_{kt}$ that preserve rows and columns of the table, respectively (see~\cref{fig:rc_tables}). Then
        \[\EE_{M \sim \Gauss{k}} \left|\Perm(M) \right|^{2t} = \text{Number of solutions of }(r_1c_1r_2c_2 = e), \quad r_1,r_2 \in R,c_1,c_2 \in C, \]
        with $e$ being the identity element of $S_{kt}$. In the next steps, we wish to argue that for $k,t\geq 3$, the four step process $\pi \rightarrow r_1c_1r_2c_2 \pi$, $\pi \in S_{kt}$ is sufficiently randomizing in $S_{kt}$, such that the probability of obtaining $e$ after starting with $e$ is close to $1/|S_{kt}|$. 
        \item This relation exhibits the moment-size duality: Exchanging $t$ and $k$ does not change the result of the moment calculation. 
        \item We use representation theory to attack the above combinatorial counting problem. If $\rho_{\text{reg}}$ is the regular representation of $S_{kt}$, then, 
        \[\text{Number of solutions of }(r_1c_1r_2c_2 = e) =\frac1{|S_{kt}|} \sum_{r_1,r_2\in R,c_1,c_2 \in C} \tr[\rho_{\text{reg}}(r_1c_1r_2c_2)]. \]
        We decompose the regular representation into the direct sum of irreps and calculate the first few terms of this sum. This calculation is involved and constitutes the bulk of the paper. We will observe that the terms in the sum get exponentially smaller and the first few terms provide a good approximation of the permanent moments. We show that
        \[ \EE_{M \sim \Gauss{k}} \left|\Perm(M) \right|^{2t} \geq 1.625 \times k!^{2t}t!^{2k}/(kt)!, \text{ for }k,t\geq 4.\]
        \item We develop an algorithm to exactly compute the permanent moments for the cases of $t=3, k\leq 100$ and $t=4,k\leq 10$.
        \item Backed by our analytical and numerical results, we conjecture that 
        \[\EE_{M \sim \Gauss{k}} \left|\Perm(M) \right|^{2t}\leq 2\times k!^{2t}t!^{2k}/(kt)!, \text{ for }t,k\geq 3,\] and that $\EE_{M \sim \Gauss{k}} \left|\Perm(M) \right|^{2t} / [k!^{2t}t!^{2k}/(kt)!]$ quickly approaches a constant. This conjecture confirms the prediction of part (1) about the row and column permutations being sufficiently randomizing.
        \item Assuming our conjecture, we use large deviation theory to compute the tail of the distribution of log-permanent of random Gaussian matrices (See~\cref{sec:concentration}).
    \end{enumerate}
    \item {\bf Results regarding permanents and determinants of minors of Haar random unitary matrices.}
    \begin{enumerate}
        \item Using Howe's $\GL_d \times \GL_t$ duality, we prove an explicit expansion formula for calculating the moments of permanents, a direct generalization of our combinatorial counting results for the i.i.d. Gaussian case.
        \item We show that the size-moment duality holds for the case of moments of permanents of unitary minors.
        \item We prove a lower bound for the moments of permanents in this case. When the minor is equal to the whole unitary matrix, we show that the \emph{Hunter-Jones} conjecture on the moments of the permanent of random unitary matrices indicates that our bound is very close to being tight.
        \item As a bonus, our result provides a simple explicit formula for the moments of the determinant of minors of random unitary matrices.
    \end{enumerate}
\end{itemize}
\subsection{A combinatorial counting problem}\label{sec:intro_1}
As the first step, we will translate the moment calculation problem to a counting problem involving the symmetric group\footnote{Translating calculations involving the unitary group to the ones involving the symmetric group is a central theme in the representation theory of the unitary and the symmetric groups. See, for example, the Schur-Weyl duality or the Frobenius character map~\cite{fulton2013representation,bump2004lie}.}. We only focus on the Gaussian ensemble $\Gauss k$ in this section and discuss the case of minors of Haar random unitary matrices in~\cref{sec:subintro}. Recall the definition of the permanent: 
\[ \Perm(M) = \sum_{\pi \in S_k} \prod_{i=1}^k M_{i ,\pi(i)},\]
Define $L_{k,t}$ to be the set of all $k\times t$ tables, where each column is a permutation of $1,\cdots , k$. For example, a typical element of $L_{k,t}$ could be the following table:
\[k \text{ rows}\left\{\,\,\,\begin{tabular}{|c|c|c|c|c|}
\hline
   $1$  &  $k$ &  $k-1$&  $\cdots$&  $k-4$ \\\hline
$4$     &  $2$ &  $2$&  $\cdots$&  $k-1$ \\\hline 
  $\vdots$ & &   &   &  $\vdots$ \\\hline
$2$     &  $4$ & $1$ & $\cdots$ &  $6$  \\ \hline
\end{tabular} \right.\]
Expanding $\Perm(M)^t$ into the individual monomials, we get $\Perm(M)^t = \sum_{l \in L_{k,t}} \prod_{i=1,j=1}^{t,k} M_{i,l_{ij} }$. Therefore, 
\[ \EE_{M\sim \Gauss k}\left| \Perm(M)\right|^{2t} = \sum_{l,\overline l \in L_{k,t}} \prod_{j}^{k} \left(\EE_{M\sim \Gauss k} \prod_{i}^{t} M_{j,l_{ij} }\overline{M_{j,\overline l_{ij} }}\right).\]
Let $\mathcal G$ be the standard complex normal distribution. Using $\EE_{x\sim \mathcal G} x^a \overline{x}^b=a!\delta_{ab}$, it is straightforward to see that the term in the parenthesis is nothing but the number of row preserving permutations that map the table $l$ to $\overline l$. 
\par
\begin{dfn}[Row and column preserving subgroups]\label{dfn:row-col-pres}
Given a $k\times t$ rectangular table, consider the group $S_{kt}$ of all permutations of the table cells. We define two important subgroups:
\begin{figure}
    \centering
    \includegraphics[width=12cm]{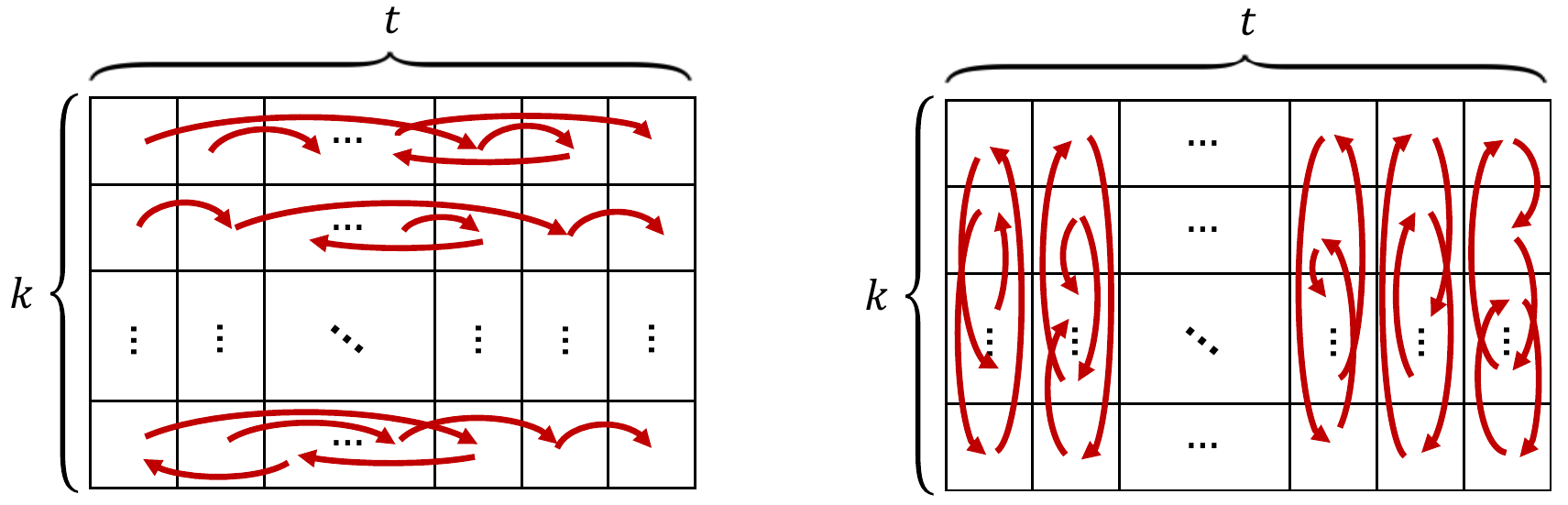}
    \caption{Row preserving and column preserving subgroups of $S_{kt}$. The group $S_{kt}$ naturally acts on the table cells by permuting them. {\bf Left.} A typical element of the row preserving subgroup $R$. These permutations do not mix cells from different columns. {\bf Right.} A typical element of the column preserving subgroup $C$.}
    \label{fig:rc_tables}
\end{figure}
\begin{itemize}
    \item \emph{$R$, the row preserving subgroup}: This subgroup contains all permutation that permute the table elements within individual rows. See~\cref{fig:rc_tables}, left. This group has $t!^k$ elements.
    \item \emph{$C$, the column preserving subgroup}: This subgroup permutes the elements within the columns. See~\cref{fig:rc_tables}, right. Order of this group is $k!^t$.
\end{itemize}
\end{dfn}
\par 
With this notation, we conclude that $\EE_{M\sim \Gauss k}\left| \Perm(M)\right|^{2t} =\sum_{l,\overline l \in L_{k,t}} \sum_{r\in R} \delta_{l,r(l')}$, where $r(l)$ indicates the action of the permutation $r$ on the table $l$. This expression can be simplified even further: let $l_0$ be a canonical table where the first row is all $1$'s, second row is all $2$'s, and so on. Using the fact that all tables in $L_{k,t}$ can be constructed by the action on an element of $C$ on $l_0$, we have that  $\sum_{l,\overline l \in L_{k,t}} \sum_{r\in R} \delta_{l,r(l')}= \sum_{c_1,c_2\in C} \delta_{l_0,c_1 r c_2(l_0)}$. From this, it is easy to see that the following theorem holds:
\begin{thm} \label{thm:comb}
Let $R$ and $C$ be the row preserving and column preserving permutations of a $k\times t$ grid. Then 
\begin{equation}\label{eq:comb_rel}
    \EE_{M\sim \Gauss k}\left| \Perm(M)\right|^{2t} = \sum_{r_1,r_2 \in R,\, c_1,c_2 \in C} \delta(r_1c_1r_2c_2,e),
\end{equation} 
where $e$ is the identity element of the symmetric group $S_{kt}$.
\end{thm}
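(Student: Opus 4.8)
The plan is to make precise each of the identities sketched in the paragraphs preceding the statement. First I would fix conventions: $S_{kt}$ acts on the $kt$ cells of the grid, hence on the set of tables (labelings of cells by elements of $\{1,\dots,k\}$) by push-forward, $(\sigma\cdot l)(x)=l(\sigma^{-1}x)$, so that this is a left action and $R,C$ are as in \cref{dfn:row-col-pres}. Expanding, $\Perm(M)^t=\sum_{l\in L_{k,t}}\prod_{j=1}^k\prod_{i=1}^t M_{j,l_{ij}}$, where the $i$-th column of the table $l$ records the permutation of $\{1,\dots,k\}$ chosen in the $i$-th copy of the permanent; similarly $\overline{\Perm(M)^t}$ is the analogous sum over $\overline l\in L_{k,t}$. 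Taking the expectation and using independence of the matrix entries across distinct rows, $\EE_{M\sim\Gauss k}|\Perm(M)|^{2t}=\sum_{l,\overline l\in L_{k,t}}\prod_{j=1}^k\bigl(\EE\prod_{i=1}^t M_{j,l_{ij}}\overline{M_{j,\overline l_{ij}}}\bigr)$.

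The step that needs genuine care is the evaluation of each row factor. Within a fixed row $j$, the entries $M_{j,1},\dots,M_{j,k}$ are i.i.d.\ standard complex Gaussians, so grouping the product by column value, if value $v$ occurs $a_v$ times among $l_{1j},\dots,l_{tj}$ and $b_v$ times among $\overline l_{1j},\dots,\overline l_{tj}$, independence gives $\prod_{v=1}^k\EE\,M_{jv}^{a_v}\overline{M_{jv}}^{b_v}=\prod_{v=1}^k a_v!\,\delta_{a_v b_v}$, using $\EE_{x\sim\mathcal G}x^a\overline x^b=a!\,\delta_{ab}$. This vanishes unless the $j$-th rows of $l$ and $\overline l$ are rearrangements of each other, in which case it equals $\prod_v a_v!$ --- precisely the number of bijections of the $t$ column slots of row $j$ carrying row $j$ of $l$ to row $j$ of $\overline l$. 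Multiplying over $j$ (a row-preserving permutation being exactly one independent such bijection per row), the product of the row factors is $\#\{r\in R:r\cdot l=\overline l\}$, so $\EE|\Perm(M)|^{2t}=\sum_{l,\overline l\in L_{k,t}}\#\{r\in R:r\cdot l=\overline l\}$.

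Next I would pass from tables to the column group via the canonical table $l_0$ whose $j$-th row is constantly $j$. Since the entries in each column of any $l\in L_{k,t}$ are distinct, the orbit map $C\to L_{k,t}$, $c\mapsto c\cdot l_0$, is a bijection: $C$ acts transitively because it can realize an arbitrary permutation of the rows independently in each column, and freely because a column-preserving permutation fixing a column with distinct entries is the identity (and $|C|=k!^t=|L_{k,t}|$). Likewise, a cell permutation fixes $l_0$ iff it maps each cell to a cell in the same row, i.e.\ $\Stab_{S_{kt}}(l_0)=R$. Substituting $l=c_2\cdot l_0$, $\overline l=c_1\cdot l_0$ and using $r\cdot(c_2\cdot l_0)=c_1\cdot l_0\Longleftrightarrow c_1^{-1}rc_2\in\Stab_{S_{kt}}(l_0)=R$, we obtain $\EE|\Perm(M)|^{2t}=\#\{(c_1,c_2,r)\in C\times C\times R: c_1^{-1}rc_2\in R\}$. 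Writing the constraint as $c_1^{-1}rc_2=r'$ with $r'\in R$ a fourth free variable, i.e.\ $r'^{-1}c_1^{-1}rc_2=e$, and then renaming $r_1:=r'^{-1}$, $r_2:=r$ and replacing $c_1$ by its inverse (each a bijection of $R$, resp.\ $C$, onto itself) turns the count into $\sum_{r_1,r_2\in R,\,c_1,c_2\in C}\delta(r_1c_1r_2c_2,e)$, which is the assertion of \cref{thm:comb}.

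I expect the only real obstacle to be the second paragraph: correctly applying the complex-Gaussian Wick identity to products with repeated column indices and verifying that the resulting row factor is exactly the number of within-row rearrangements (with the correct multiplicities $\prod_v a_v!$, neither over- nor under-counting). The remainder is bookkeeping with the $S_{kt}$-action, resting on the two elementary facts that $C$ acts simply transitively on $L_{k,t}$ and that $\Stab_{S_{kt}}(l_0)=R$.
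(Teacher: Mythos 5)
Your proof is correct and follows the same route as the paper's sketch: expand $|\Perm(M)|^{2t}$ over pairs of tables, apply the complex Wick identity $\EE\,x^a\overline{x}^b=a!\,\delta_{ab}$ row by row to get a count of row-preserving permutations matching $l$ to $\overline l$, then parameterize $L_{k,t}$ by the simply-transitive $C$-action on $l_0$ and use $\Stab_{S_{kt}}(l_0)=R$ to turn the count into solutions of $r_1c_1r_2c_2=e$. You have made explicit a few details the paper leaves implicit (the free/transitive verification, $\Stab(l_0)=R$, and the final bijective renaming that introduces the fourth free variable $r'$), but the argument is the same.
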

The above expression has a remarkable $t\leftrightarrow k$ symmetry:
\begin{cor}[Moment-size duality]\label{cor:gauss_duality}
The $2 t$-th moment of the permanent of random complex Gaussian $k\times k$ matrices is equal to $2 k$-th moment of the permanent of random Gaussian $t\times t$ matrices:
\begin{equation}
    \EE_{M\sim \Gauss k}\left| \Perm(M)\right|^{2t} = \EE_{M\sim \Gauss t}\left| \Perm(M)\right|^{2k}.
\end{equation} 
\end{cor}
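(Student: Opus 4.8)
The plan is to read the corollary off directly from the combinatorial identity of \cref{thm:comb}, using only the elementary fact that transposing a $k\times t$ grid into a $t\times k$ grid interchanges the roles of the row- and column-preserving subgroups. Applying \cref{thm:comb} once with parameters $(k,t)$ and once with $(t,k)$, the claim $\EE_{M\sim\Gauss k}|\Perm(M)|^{2t}=\EE_{M\sim\Gauss t}|\Perm(M)|^{2k}$ is reduced to the purely group-theoretic statement that the number of quadruples $(r_1,c_1,r_2,c_2)\in R\times C\times R\times C$ with $r_1c_1r_2c_2=e$ on a $k\times t$ grid equals the number of quadruples $(\rho_1,\gamma_1,\rho_2,\gamma_2)\in R'\times C'\times R'\times C'$ with $\rho_1\gamma_1\rho_2\gamma_2=e$ on a $t\times k$ grid, where the primed subgroups are those attached to the transposed grid.

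First I would make the transposition explicit. The assignment sending the cell in position $(i,j)$ of the $t\times k$ grid to the cell in position $(j,i)$ of the $k\times t$ grid is a bijection of the two cell-sets, hence induces a group isomorphism $\tau\colon S_{tk}\to S_{kt}$. By construction $\tau$ carries the row-preserving subgroup $R'$ of the $t\times k$ grid onto the column-preserving subgroup $C$ of the $k\times t$ grid, and $C'$ onto $R$. Since $\tau$ is an isomorphism it maps the solution set of $\rho_1\gamma_1\rho_2\gamma_2=e$ bijectively onto the solution set of $c_1r_1c_2r_2=e$ with $c_i\in C$ and $r_i\in R$ on the fixed $k\times t$ grid. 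Thus the $2k$-th moment of the permanent of $t\times t$ matrices equals $\#\{(c_1,r_1,c_2,r_2)\in C\times R\times C\times R: c_1r_1c_2r_2=e\}$.

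It then remains to match this count with the one in \eqref{eq:comb_rel}. Taking inverses in the relation $r_1c_1r_2c_2=e$ gives $c_2^{-1}r_2^{-1}c_1^{-1}r_1^{-1}=e$; because $R$ and $C$ are subgroups, hence closed under inversion, the map $(r_1,c_1,r_2,c_2)\mapsto(c_2^{-1},r_2^{-1},c_1^{-1},r_1^{-1})$ is a bijection from $\{(r_1,c_1,r_2,c_2)\in R\times C\times R\times C: r_1c_1r_2c_2=e\}$ onto $\{(c_1,r_1,c_2,r_2)\in C\times R\times C\times R: c_1r_1c_2r_2=e\}$. Composing this with the bijection induced by $\tau$ proves the equality of the two moments.

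I do not expect a genuine obstacle here; the only care needed is bookkeeping — tracking which grid each subgroup lives on and the order of the four factors under transposition and inversion. (A cyclic rotation of the word $r_1c_1r_2c_2$ would be equally harmless since $uvwx=e\iff vwxu=e$, but taking inverses already lands exactly on the form $c\,r\,c'\,r'$, so no rotation is needed.) As an alternative that exposes the symmetry even more transparently, one could instead start from the character-sum expression $\tfrac1{|S_{kt}|}\sum_{r_1,r_2\in R,\,c_1,c_2\in C}\tr[\rho_{\mathrm{reg}}(r_1c_1r_2c_2)]$ recorded in the introduction and combine $\tr[\rho_{\mathrm{reg}}(g)]=\tr[\rho_{\mathrm{reg}}(g^{-1})]$ with cyclicity of the trace to realize the same $R\leftrightarrow C$ interchange; but the explicit bijection above is the cleanest route.
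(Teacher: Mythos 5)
Your argument is correct and follows the same route as the paper: read the duality off the combinatorial identity in \cref{thm:comb}, noting that transposing the $k\times t$ grid swaps $R$ and $C$, and that the count of solutions of $r_1c_1r_2c_2=e$ is invariant under this swap (via inversion, or equivalently cyclic rotation of the word). The paper simply asserts the $t\leftrightarrow k$ symmetry as evident; you have filled in the explicit bijections, which is the natural way to make that assertion rigorous.
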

We will later see that the analog of~\cref{cor:gauss_duality} holds for the case of submatrices of random unitary matrices, however, the proof will not be as simple.
\par
Let us momentarily discuss one interpretation of~\cref{thm:comb}. Consider the process of randomly choosing $r_1,r_2 \in R$ and $c \in C$, and forming the combination $r_1 c\, r_2$. The resulting permutations will have a probability distribution which we call $p(\pi)$. More precisely, $p(\pi)$ is defined by the following equality:
\begin{equation}\label{eq:defprob}
    \EE_{r_1,r_2 \in R,\, c \in C}\,\, r_1 c \,r_2 = \sum_{\pi \in S_{kt}} p(\pi) \pi.
\end{equation}  
Using $p(\pi) = p(\pi^{-1})$, is straightforward to see that
$\EE_{r_1,r_2 \in R,\, c_1,c_2 \in C} \,\delta(r_1c_1r_2c_2,e)=\sum_{\pi\in S_{kt}} p(\pi)^2$,
and therefore,
\begin{equation}\label{eq:2norm}
    \EE_{M\sim \Gauss k}\left| \Perm(M)\right|^{2t} = k!^{2t}t!^{2k} \| p \|_2^2\geq \frac{k!^{2t}t!^{2k}}{(kt)!},
\end{equation}
where $\| p\|_2=(\sum_\pi p(\pi)^2)^{1/2}$ is the $2$-norm of the distribution $p$, and we used $ \|p \|_2^2 \geq |S_{kt}|^{-1}$. The equality in~\cref{eq:2norm} happens when $p(\pi)$ is a constant function. 
\par 
We will argue that the inequality~(\ref{eq:2norm}) is not saturated for permanents, but surprisingly, it is close to being saturated. When $k,t\geq 3$, we conjecture that  $\|p\|_2^2\leq 2|S_{kt}|^{-1}$, which means that the 2-norm distance between the distribution of $r_1 c r_2$ (or similarly $c_1 r c_2$) and the uniform distribution is smaller than $1/\sqrt{(kt)!}$.
\par
\subsection{Representation theory and the moment bounds}\label{sec:intrep}
We wish to analyze~\cref{eq:comb_rel} using the language of representation theory. Recall the regular representation of the symmetric group, $\rho_{\text{reg}}(\pi)$ for $\pi \in S_{kt}$, which assigns a $(kt)!$ dimensional matrix to each element of $S_{kt}$ (not to be confused with the standard $kt$ dimensional representation). Importantly, we have that $\tr [\rho_{\text{reg}}(\pi)]=(kt)!\delta(\pi,e)$. With this, we can write~\cref{eq:comb_rel} as,
\begin{equation}\label{eq:gauss-bad}
    \EE_{M\sim \Gauss k}\left| \Perm(M)\right|^{2t} = \frac1{(kt)!}\sum_{r_1,r_2\in R,c_1,c_2 \in C}\tr[ \rho_{\text{reg}}(r_1c_1r_2c_2)].
\end{equation} 
\par
\begin{rem}[Notational remark]\label{rem:notational-remark}
We frequently replace $(\sum_{r\in R} r)$ and $(\sum_{c\in C} c)$ by $R$ and $C$ when there is no confusion. For instance,~\cref{eq:gauss-bad} will be written as:
\begin{equation}\label{eq:gauss-good}
    \EE_{M\sim \Gauss k}\left| \Perm(M)\right|^{2t} = \frac1{(kt)!} \tr[ \rho_{\text{reg}}(RCRC)].
\end{equation} 
\end{rem}
The regular representation of a group can be decomposed into irreducible representations. The irreducible representations of $S_{kt}$ are indexed by the ordered partitions of the number $kt$, or as it commonly referred to, the~\emph{Young diagrams} of $kt$ boxes (see~\cref{sec:basym}). Hence, we obtain the following \emph{expansion formula} for the moments of permanents:
\begin{equation}\label{eq:expansion-formula}
\EE_{M\sim \Gauss k}\left| \Perm(M)\right|^{2t} = \sum_{\lambda \vdash kt, l(\lambda)\leq \min(k,t)} \frac{1}{(kt)!} f^\lambda \tr \left[\rho_\lambda(RCRC) \right],
\end{equation}
where $\lambda \vdash kt$ means that $\lambda$ is an ordered partition of $kt$ (or a Young diagram of $kt$ boxes), $l(\lambda)$ is number of parts in the partition $\lambda$ (or the depth of the corresponding Young diagram), and $f^\lambda$ is the dimension of the irreducible representation indexed by $\lambda$ (the constraint $ l(\lambda)\leq \min(k,t)$ is not trivial and will be proven later). All terms in the expansion formula~\cref{eq:expansion-formula} are positive as $\rho_\lambda(R)$ and $\rho_\lambda(C)$ are proportional to projectors\footnote{We can assume that $\rho_\lambda$ is a unitary representation.}.  
\par 
Much of this paper is concerned with studying this sum term by term. The first term corresponds to $\lambda = (kt)$, which identifies the trivial representation with $f^\lambda = 1$. This representation assigns $1$ to every group element, therefore, the total contribution from this term is simply $k!^{2t}t!^{2k}/(kt)!$. Hence,
\begin{equation}\label{eq:lower-bound-1}
\EE_{M\sim \Gauss k}\left| \Perm(M)\right|^{2t} \geq \frac{k!^{2t}t!^{2k}}{(kt)!}.
\end{equation}
\par 
Which is identical to the inequality in~\cref{eq:2norm}. Other terms in the sum can be computed with increasing difficulty in order to improve this simple lower bound. Our techniques for computing these terms are too involved to be explained in the introduction, and heavily rely on the Cauchy's identity, plethysm, and symmetric polynomials. Most of the calculations are necessarily computerized, and the results and techniques are described in details in~\cref{sec:permiid}. Some of the calculated contributions are as follows:
\begin{itemize}
    \item For $\lambda=(kt-1,1)$, $((kt)!)^{-1}f^\lambda \tr [\rho_\lambda(RCRC)] = 0 $.
    \item For $\lambda=(kt-2,2)$, $((kt)!)^{-1}f^\lambda \tr [\rho_\lambda(RCRC)] = \frac{k!^{2t}t!^{2k}}{(kt)!}(1/2-3/(2kt))$.
    \item For $\lambda=(kt-3,3)$ where $t,k\geq  3$, \[((kt)!)^{-1}f^\lambda \tr [\rho_\lambda(RCRC)] = \frac{k!^{2t}t!^{2k}}{(kt)!} \times \frac83 (1/(kt)-6/(k^2t^2)+5/(k^3t^3)).\]
    \item For $\lambda=(kt-4,4)$ where $t,k\geq4$, \[((kt)!)^{-1}f^\lambda \tr [\rho_\lambda(RCRC)] = \frac{k!^{2t}t!^{2k}}{(kt)!} \times \frac1{24} (1+4/k+4/t+\text{Lower order terms}),\]
    see~\cref{sec:RCRC_calc} for the complete expansion.
    \item For $\lambda=(kt-4,2,2)$ with $t,k\geq 3$,
    \[((kt)!)^{-1}f^\lambda \tr [\rho_\lambda(RCRC)] = \frac{k!^{2t}t!^{2k}}{(kt)!} \times \frac1{12} (1-2/k-2/t+\text{Lower order terms}).\]
\end{itemize}
Unfortunately, finding a general formula seems challenging. We can make two important observations at this point: (1) The contribution of terms to $\EE |\Perm (M)|^{2t}/(k!^{2t}t!^{2k}/((kt)!))$ is always $O(1)$ (2) Even the $O(1)$ coefficient of terms get less and less significant as one considers the partitions of $kt$ with smaller first part. As a consequence, we can derive strong lower bounds for the moments of permanents by truncating~\cref{eq:expansion-formula} to include a few explicitly calculated terms. One such bound is the following theorem:  
\begin{thm}[Example moment lower bound]
Let $k,t\geq 4$, then 
\begin{equation}\label{eq:lower-bound-2}
    \EE_{M\sim \Gauss k }\left|\Perm(M)\right|^{2t} \geq \frac{13}8 \frac{k!^{2t}t!^{2k}}{(kt)!}.
\end{equation}
\end{thm}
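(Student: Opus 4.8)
The plan is to extract the bound directly from the expansion formula~\cref{eq:expansion-formula}, every summand $((kt)!)^{-1}f^\lambda \tr[\rho_\lambda(RCRC)]$ of which is non-negative, by keeping only the handful of Young diagrams whose contributions have already been computed above and discarding the rest. Concretely, for $k,t\geq 4$ the six diagrams
\[
\lambda\in\{\,(kt),\ (kt-1,1),\ (kt-2,2),\ (kt-3,3),\ (kt-4,4),\ (kt-4,2,2)\,\}
\]
are all genuine partitions of $kt$ with $l(\lambda)\leq 3\leq\min(k,t)$, so all six really occur in the sum, and dropping the remaining non-negative terms only weakens the inequality. Inserting the explicit values recorded above — the hook $(kt-1,1)$ contributes $0$, and the others contribute $\tfrac12-\tfrac3{2kt}$, $\tfrac83(\tfrac1{kt}-\tfrac6{k^2t^2}+\tfrac5{k^3t^3})$, $\tfrac1{24}(1+\tfrac4k+\tfrac4t+\cdots)$, $\tfrac1{12}(1-\tfrac2k-\tfrac2t+\cdots)$, with the complete expansions of the last two taken from~\cref{sec:RCRC_calc} — reduces the theorem to showing that the sum $g(k,t)$ of these six coefficients satisfies $g(k,t)\geq\tfrac{13}{8}$ for all $k,t\geq 4$; multiplying by $k!^{2t}t!^{2k}/(kt)!$ then yields~\cref{eq:lower-bound-2}.

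First I would read off the constant part of $g$: the trivial diagram gives $1$, the diagram $(kt-3,3)$ contributes only at order $1/(kt)$, and $(kt-2,2)$, $(kt-4,4)$, $(kt-4,2,2)$ contribute $\tfrac12$, $\tfrac1{24}$, $\tfrac1{12}$ to leading order. Since $1+\tfrac12+\tfrac1{24}+\tfrac1{12}=\tfrac{13}{8}$ exactly, the claim collapses to checking that the accumulated $1/k$, $1/t$, $1/(kt)$, \dots\ corrections are non-negative on $k,t\geq4$. Two observations carry this: (i) the linear pieces $+\tfrac1{6k}+\tfrac1{6t}$ of the $(kt-4,4)$ term cancel exactly against $-\tfrac1{6k}-\tfrac1{6t}$ of the $(kt-4,2,2)$ term; and (ii) the $1/(kt)$ coefficient coming from $(kt-2,2)$ and $(kt-3,3)$ is $-\tfrac32+\tfrac83=\tfrac76>0$, and even after subtracting the $-16/(k^2t^2)$ piece of the $(kt-3,3)$ term one has $\tfrac7{6kt}-\tfrac{16}{k^2t^2}=\tfrac1{kt}(\tfrac76-\tfrac{16}{kt})\geq\tfrac1{6kt}>0$ because $kt\geq16$.

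What remains is to bound the genuinely lower-order remainder of $g$ — the ``lower order terms'' in the $(kt-4,4)$ and $(kt-4,2,2)$ contributions, together with the harmless tail $+\tfrac{40}{3k^3t^3}$ of the $(kt-3,3)$ term. The plan is to collect these into one rational function $h(1/k,1/t)$ whose every monomial $k^{-a}t^{-b}$ has $a+b\geq2$, hence absolute value $\leq1/16$ on the region $k,t\geq4$, and to use the strictly positive surplus $\tfrac1{6kt}$ isolated in step~(ii) to absorb it, giving $g(k,t)\geq\tfrac{13}{8}$. The hard part will be exactly this last estimate, in the worst corner $k=t=4$ where the subleading corrections are largest; if the clean monotone bound turns out to be too tight there, the fallback is to evaluate $g(4,4)$, $g(4,t)$ and $g(k,4)$ exactly — a finite computation that can be cross-checked against the moment-computing algorithm of~\cref{sec:alg} — and to invoke the asymptotic estimate only for $\min(k,t)\geq5$.
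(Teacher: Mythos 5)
Your proposal follows precisely the route the paper takes: truncate the term-by-term non-negative expansion \cref{eq:expansion-formula}, retain the six diagrams $(kt)$, $(kt-1,1)$, $(kt-2,2)$, $(kt-3,3)$, $(kt-4,4)$, $(kt-4,2,2)$ (all of which do occur, since $l(\lambda)\le 3\le\min(k,t)$), and discard the rest. The arithmetic $1+\tfrac12+\tfrac1{24}+\tfrac1{12}=\tfrac{13}{8}$ is the crux of the paper's bound, your identification of the exact cancellation of the $1/k$, $1/t$ linear pieces between $(kt-4,4)$ and $(kt-4,2,2)$ is correct and is precisely what makes the constant come out clean, and your remark that $(kt-4,4)$ is what forces the hypothesis $k,t\ge 4$ is on the mark: without it the constants sum only to $19/12\approx 1.583$, and the five-term sum actually evaluates to about $1.537$ at $k=t=4$, so that diagram is not a refinement but essential to reaching $13/8$.

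The one step that would fail as written is your ``plan A'' absorption estimate. The per-monomial bound $k^{-a}t^{-b}\le 1/16$ for $a+b\ge 2$ says nothing about the coefficients multiplying those monomials, and they are not small: $p_{\mathrm{RCRC},4}$ (\cref{tab:rcrcgent}) contains a $-20k^3t^3$ term, which already feeds roughly $-\tfrac{20}{24}\cdot\tfrac1{kt}$ into the $(kt-4,4)$ contribution to $g$ --- an order of magnitude larger than the surplus $\tfrac1{6kt}$ you isolated from the first four diagrams. (It happens to be compensated by positive $1/(kt)$ pieces from $f^\lambda$, the $(k-1)(t-1)$ prefactor and from $(kt-4,2,2)$, but one cannot know this without tracking them.) So what you describe as the fallback --- substituting the full rational expressions for the $(kt-4,4)$ and $(kt-4,2,2)$ terms and verifying $g\ge 13/8$ exactly on the boundary $k=4$ (resp.\ $t=4$), plus a coarse asymptotic bound for $\min(k,t)\ge 5$ --- is not a fallback; it is the proof. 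That step needs the complete polynomials, which the paper records only partially: it prints the full $(kt-4,2,2)$ expression in \cref{sec:lbmgc} but only the leading terms of $p_{\mathrm{RCRC},4}$ in \cref{tab:rcrcgent}. Modulo obtaining those expressions (which the paper computed mechanically but does not display), your argument matches the paper's.
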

We refer the reader to~\cref{sec:lbmgc} for more involved bounds.
\par
In the next subsection, we provide further evidence for approximate tightness of this lower bound. 
\subsection{Exact results, permanent moment growth conjecture, and concentration results}
To argue that our lower bounds~\cref{eq:lower-bound-1} and~\cref{eq:lower-bound-2} closely mimic the correct behavior of the moments of the permanent of random matrices, we need to explicitly compute some of these moments. Unfortunately, a direct calculation of large moments of the permanent of even small matrices requires massive computational resources, as they concentrate very slowly. On the other hand, computing the permanent of large matrices is known to be hard, even for individual instances~\cite{valiant1979complexity}. 
\par 
To avoid such obstructions, we develop an algorithm for exact computation of the moments of permanents. See~\cref{sec:alg} for more details. This algorithm is effective when $\min(k,t)\leq 4$, but quickly becomes intractable for other cases. 
\par
We can compare our best lower bounds derived from the expansion formula~\cref{eq:expansion-formula} to the exact results. See~\cref{fig:first}.
\begin{figure}
    \centering
    \includegraphics[width=14 cm] {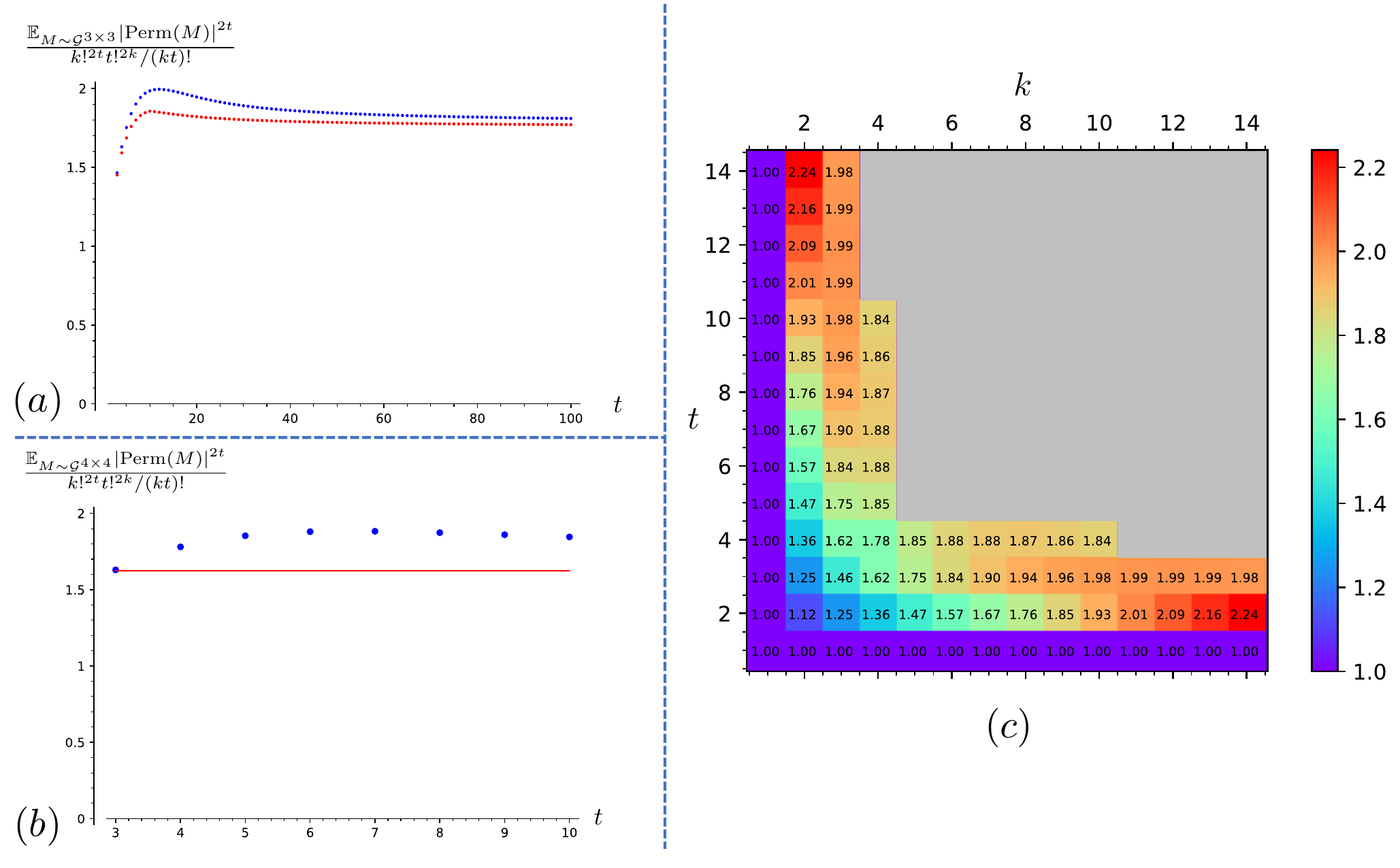}
    \caption{Exact values of the moments of permanents and our lower bounds. Note that due to $t\leftrightarrow k$ symmetry we can exchange $t$ and $k$ in all of the plots and what follows. {\bf (a)} Blue: The plot of $\EE_{M\sim \Gauss k}\left| \Perm(M)\right|^{2t}/(k!^{2t}t!^{2k}/((kt)!))$ for the case of $k=3$. Red: The lower bound derived from including terms in the~\cref{eq:expansion-formula} corresponding to $\lambda = (kt),(kt-1,1), \cdots, (kt-10,10), \text{ and } (kt-4,2,2)$. This lower bound asymptotes to $8849/5040\approx 1.76$ . {\bf (b)} Blue: Exact values for the case of $k=4$. Red: Our universal $13/8 \times k!^{2t}t!^{2k}/(kt)!$ lower bound. {\bf (c)} The plot of $\EE_{M\sim \Gauss k}\left| \Perm(M)\right|^{2t}/(k!^{2t}t!^{2k}/((kt)!))$ for some of the known values of $k$ and $t$. Grey indicates the places that we do not know the exact results, and where our lower bound dictates a value larger than $13/8=1.625$. Our conjecture predicts that the value in the grey area should be smaller than $2$. Note the exceptional case of $t =2$ outside this region, where the ratio grows as $\sqrt{k/\pi}$. As can be seen in part (a), the ratio drops to a value smaller than $2$ as we move to $t=3$. }
    \label{fig:first}
\end{figure}
These results, along with the fast decay of the terms in~\cref{eq:expansion-formula}, motivate the following conjecture:
\begin{conj}[The permanent moment growth conjecture.]\label{conj:intro}
Suppose that $t,k\geq 3$, then the quantity
\begin{equation}\label{eq:permmomnorm}
    \frac{\EE_{M\sim \Gauss k }\left|\Perm(M)\right|^{2t} }{{k!^{2t}t!^{2k}}/{(kt)!}} 
\end{equation}
is always smaller than $2$ and quickly asymptotes to a constant $C$ as we increase both $k$ and $t$. 
\end{conj}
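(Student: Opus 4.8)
The plan is to refine and sum the character expansion~\cref{eq:expansion-formula} term by term, using a \emph{stability} statement for each individual term together with a \emph{uniform geometric tail} bound on the sum of all the remaining terms. Since $\rho_\lambda(R)=t!^{k}P^\lambda_R$ and $\rho_\lambda(C)=k!^{t}P^\lambda_C$, where $P^\lambda_R,P^\lambda_C$ are the self-adjoint projectors onto the $R$- and $C$-invariant subspaces of the irrep $V_\lambda$, dividing~\cref{eq:expansion-formula} by $k!^{2t}t!^{2k}/(kt)!$ gives
\begin{equation*}
\Phi(k,t):=\frac{\EE_{M\sim\Gauss k}|\Perm(M)|^{2t}}{k!^{2t}t!^{2k}/(kt)!}=\sum_{\lambda\vdash kt,\ l(\lambda)\le\min(k,t)}c_\lambda,\qquad c_\lambda:=f^\lambda\,\tr\big[(P^\lambda_RP^\lambda_C)^2\big]\ge 0.
\end{equation*}
(Equivalently, by~\cref{thm:comb} and $R\cap C=\{e\}$, the numerator equals $|RC\cap CR|$, the number of $g\in S_{kt}$ admitting a factorization both as $rc$ and as $c'r'$ with $r,r'\in R$, $c,c'\in C$; and since the rook's graph of the grid is connected we have $\langle R,C\rangle=S_{kt}$ for $k,t\ge2$, so $c_\lambda<f^\lambda\tr[P^\lambda_RP^\lambda_C]$ strictly for every $\lambda\ne(kt)$.) I would organize this sum by the \emph{defect} $m:=kt-\lambda_1$ and the sub-shape $\bar\lambda:=(\lambda_2,\lambda_3,\dots)\vdash m$, so that $c_{(kt)}=1$ is the $m=0$ term and the conjecture asserts that the $m>0$ contributions sum to something $<1$ that converges.

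\emph{Step 1 (stability of individual terms).} For each fixed partition $\nu$, show there is a rational function $g_\nu$, symmetric under $k\leftrightarrow t$ (as it must be, by the Gaussian moment--size duality of~\cref{cor:gauss_duality}), bounded on $\{k,t\ge3\}$, and with a finite limit $c^\infty_\nu:=\lim_{k,t\to\infty}g_\nu(k,t)$, such that $c_{(kt-|\nu|,\nu)}=g_\nu(k,t)$ whenever $k,t>|\nu|$. The explicit values $g_\emptyset=1$, $g_{(1)}=0$, $g_{(2)}=\tfrac12-\tfrac{3}{2kt}$, $g_{(3)},g_{(4)},g_{(2,2)},\dots$ computed in~\cref{sec:permiid} are instances of this. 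The proof should come from pushing the Cauchy-identity/plethysm machinery of~\cref{sec:permiid} to general $\nu$: restricting $V_\lambda$ along $R$ and $C$ by Young's rule gives $\rank P^\lambda_R=K_{\lambda,(t^k)}$ and $\rank P^\lambda_C=K_{\lambda,(k^t)}$, and one expresses $\tr[(P^\lambda_RP^\lambda_C)^2]$ through the Kostka numbers and the Littlewood--Richardson/plethysm coefficients that arise; all of these become polynomial in $(k,t)$ once $\lambda_1$ is large (eventual polynomiality / representation stability), forcing $c_{(kt-|\nu|,\nu)}$ to be an eventually-rational, $k\leftrightarrow t$-symmetric function of $(k,t)$. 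This step alone yields the ``quickly asymptotes to a constant'' half of the statement, with candidate limit $C=\sum_\nu c^\infty_\nu$, \emph{provided} the series converges.

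\emph{Step 2 (uniform geometric tail).} This is the crux and the step I expect to be hardest: find absolute constants $A$ and $\rho<1$ with
\begin{equation*}
\sum_{m\ge M}\ \sum_{\bar\lambda\vdash m}c_{(kt-m,\bar\lambda)}\ \le\ A\,\rho^{M}\qquad\text{for all }k,t\ge3.
\end{equation*}
As the paper notes, no closed form for general $c_\lambda$ is known, so this has to be a global estimate rather than a term-by-term one. Two routes look promising. (a) \emph{Spectral}: show that the alternating-projection operator $P^\lambda_RP^\lambda_CP^\lambda_R$ has a uniform spectral gap for $\lambda\ne(kt)$ and, more to the point, that its spectrum is concentrated near $0$ when the defect $m$ is large, so that $\tr[(P^\lambda_RP^\lambda_C)^2]\le\rho_0^{\,m}\,\tr[P^\lambda_RP^\lambda_C]$ for a fixed $\rho_0<1$; then $\sum_{m\ge M}\sum_{\bar\lambda\vdash m}c_{(kt-m,\bar\lambda)}\le\rho_0^{M}\sum_\lambda f^\lambda\tr[P^\lambda_RP^\lambda_C]$, a dimension count that can be compared to the $m=0$ contribution. (b) \emph{Combinatorial}: use the identity $|RC\cap CR|=\sum_{\phi\in L_{k,t}}\mathrm{ec}_t(H_\phi)$, where $H_\phi$ is the $t$-regular bipartite multigraph on (rows)$\,\times\,$($\phi$-level-sets) with one edge per cell and $\mathrm{ec}_t$ counts proper $t$-edge-colorings; the ``baseline'' value $k!^{2t}t!^{2k}/(kt)!$ is exactly what a Latin-square independence heuristic predicts for $\EE_{\phi\in L_{k,t}}\mathrm{ec}_t(H_\phi)$, so the task becomes bounding the deviation, for which Schrijver-/Br\'egman-type permanent inequalities and entropy bounds for edge-colorings of regular bipartite graphs are the natural tools.

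\emph{Step 3 (assembly).} The explicitly computed contributions of $\lambda=(kt),(kt-1,1),\dots,(kt-10,10)$ and $(kt-4,2,2)$ from~\cref{sec:permiid} sum, in the worst case $\min(k,t)=3$, to $8849/5040\approx1.76$ (and to strictly less when $\min(k,t)\ge4$). Choosing $M$ large enough that $A\rho^{M}<2-\tfrac{8849}{5040}$ in Step 2 then gives $\Phi(k,t)<2$ for all $k,t\ge3$, and combining with Step 1 gives $\Phi(k,t)\to C=\sum_\nu c^\infty_\nu<2$ with geometric-in-$m$ control on the partial sums. The finitely many small cases that escape the asymptotics of Step 1 (e.g.\ $\min(k,t)=3$ with the other index below the stability threshold) are checked directly with the exact algorithm of~\cref{sec:alg}. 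The whole difficulty is concentrated in Step 2; Steps 1 and 3 are essentially bookkeeping built on top of results already in the paper.
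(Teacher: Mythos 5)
The statement you are trying to prove is not proven in the paper at all: it is stated as a conjecture (\cref{conj:intro}, restated as \cref{conj:maintext}), and the paper only supplies evidence for it --- the term-by-term calculations of $\tr[\rho_\lambda(RCRC)]$ for shapes with small defect, the exact moments from the algorithm of \cref{sec:alg}, and the heuristic magic-square/R\'enyi-divergence argument of \cref{sec:conjsection}. Your proposal is likewise not a proof, and the gap is exactly where you say it is: Step 2. Everything else (the rewriting $\Phi(k,t)=\sum_\lambda f^\lambda\tr[(P^\lambda_RP^\lambda_C)^2]$, positivity, the $k\leftrightarrow t$ symmetry, the expectation of eventual rationality of each fixed-defect term) is consistent with what the paper already does, but the uniform tail estimate is precisely the open problem, and neither of your two routes closes it as sketched. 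Concretely, route (a) is quantitatively broken as stated: even granting a uniform bound $\tr[(P^\lambda_RP^\lambda_C)^2]\le\rho_0^{m}\tr[P^\lambda_RP^\lambda_C]$ with fixed $\rho_0<1$, your comparison sum is $\sum_\lambda f^\lambda\tr[P^\lambda_RP^\lambda_C]=(kt)!/(k!^{t}t!^{k})=e^{\Theta(kt)}$ (only the identity survives in $\tr_{\mathrm{reg}}(RC)$), so $\rho_0^{M}\cdot e^{\Theta(kt)}$ is useless for any constant cutoff $M$; the decay in the defect would have to be super-exponential (the paper's two-row data suggest roughly $1/m!$) and simultaneously beat the growth of the number and dimensions of shapes, which is exactly what nobody knows how to prove. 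Moreover, the rank bound $\Pl^{k,t}_\lambda$ from \cref{thm:pleth_ineq} only reduces $\tr[\rho_\lambda(RCRC)]$ to $\tr[\rho_\lambda(RC)]$, for which the paper also has no general formula.

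There is a second, quantitative problem with Step 3 even if Step 2 were available in the crude form $A\rho^{M}$: the inequality you are trying to prove is nearly saturated. The exact table for $k=3$ shows the normalized moment reaching $1.9923$ at $t=11$, while the explicitly computed terms (the red curve of \cref{fig:first}(a)) asymptote to $8849/5040\approx1.7557$; so the uncomputed remainder is itself $\approx 0.236$ against your available slack $2-8849/5040\approx0.244$. Any uniform-in-$(k,t)$ tail bound would therefore have to be sharp to within a few parts in a thousand at $k=3$, $t\approx 11$, which rules out the kind of lossy geometric estimate you propose and means the ``$<2$'' half cannot be obtained by ``computed terms plus crude tail'' bookkeeping. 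Also note that Step 1 alone does not give the ``quickly asymptotes to a constant'' half, as you concede: termwise stability of $c_{(kt-|\nu|,\nu)}$ says nothing about the sum without the uniform control of Step 2. In short, your plan is a reasonable research program whose hard core coincides with what the paper leaves open; as a proof of the statement it has an essential missing step, and the paper offers no proof to compare it against.
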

We know that the constant of~\cref{conj:intro} is larger than $1.625 = 13/8$. See~\cref{sec:lbmgc,sec:conjsection} for details and more evidence. We emphasize that this conjecture is nothing but the statement that when one chooses $r_1,r_2$ randomly form $R$ and $c$ randomly from $C$, then $r_1cr_2$ is close to be randomly distributed in $S_{kt}$ (see the paragraph before~\cref{sec:intrep}).  
\par 
We can provide yet another interpretation of normalized permanent moment~\cref{eq:permmomnorm}. Consider randomly choosing $t$ permutations in $S_k$, and summing their standard $k$-dimensional representation matrices. The resulting matrix will be a \emph{weak $t$-magic square}, which is a $k \times k$ matrix where row and column sums are equal to $t$. As we increase $t$, one expects that the individual matrix elements of the resulting magic squares become independent. In~\cref{sec:conjsection}, we show that the normalized permanent moment~\cref{eq:permmomnorm} is directly related to a measure of independence of such magic square matrix elements.
\par
Lastly, we discuss the shape of the distribution of random log-permanents when $k$ is very large. Define the random variable 
\begin{equation*}
   Y_k := \frac{1}{k} \log \,\left[\frac{|\Perm(M)|}{\sqrt{k!}}\right]\quad \text{for}\quad M\sim \Gauss{k},
\end{equation*}  
and let $p_{Y_k}(y)$ be the probability density function of $Y_k$. We normalized $\Perm$ by a factor $\sqrt{k!}$, because $\sqrt{k!} = \sqrt{\EE |\Perm(M)|^2}$, and also $\sqrt{k!}$ is close to where the most of the mass of the permanent probability distribution is expected to be according to~\cite{tao2009permanent}. Using the techniques of large deviation theory and assuming~\cref{conj:intro}\footnote{To be more accurate, we use a slightly different version of this conjecture. See~\cref{conj:ext}.}, we show that
 \[p_{Y_k}(y) = e^{-ke^{2y+1}\omega(y)+o(k)}, \text{ for }y>0.21.\]
 Here, we used the small $o$ notation, and $\omega(y)$ is a function that we can compute and is very close to $1$ for $y>2$ and starts decaying to zero as we decrease $y$ to smaller values. See~\cref{fig:omegaplot} for a plot of $\omega(y)$.
 \par 
 Moreover, we argue that under milder assumptions, 
 \[ e^{-4y k+o(k)} \leq p_{Y_k}(y)\leq e^{-6y k+o(k)}, \text{ for }0<y<0.048.\]
 \par 
 See~\cref{sec:concentration} for detailed arguments. Interestingly, if we replace the permanent by the determinant in the definition of $Y_k$, we see that the tail of the distribution decays mush faster, which suggests that the tail of random permanent distribution is much heavier than the tail of the random determinant distribution.

\subsection{Permanent of submatrices of random unitary matrices}\label{sec:subintro}
In this section, we intend to generalize our expansion formula~\cref{eq:expansion-formula} to the case of submatrices of Haar random unitary matrices. Define $\Usub{d}{k}$ to be the distribution of the leading $k\times k$ minors of Haar random unitary $d\times d$ matrices. In the limit of $d\rightarrow \infty$, the distribution $\Usub{d}{k}$ approaches $\Gauss k$ normalized by a factor of $d^{-1/2}$. On the other hand, when $k=d$, $\Usub d k$ is simply the distribution of Haar random unitary matrices. 
\par
Using a representation theory duality, called the ``Howe $\GL_d\times \GL_t$ duality'', we prove a generalization of our expansion formula~\cref{eq:expansion-formula}:
If $k\leq d$, we have 
\begin{equation}\label{eq:generalized-expansion-formula}
\EE_{M\sim \Usub d k}\left| \Perm(M)\right|^{2t} = \sum_{\lambda \vdash kt, l(\lambda)\leq \min(k,t)} \frac{1}{\WD_\lambda(d)} \left(\frac{f^\lambda}{(kt)!}\right)^2 \tr \left[\rho_\lambda(RCRC) \right],
\end{equation}
where $\WD_\lambda(d)$ is the dimension of the irreducible representation of $U(d)$ indexed by the partition $\lambda$ (it is defined in~\cref{eq:weyl_dim_formula_definition}). Moreover, it is straightforward to derive~\cref{eq:expansion-formula} from~\cref{eq:generalized-expansion-formula}. Note that this expansion enjoys the $k\leftrightarrow t$ symmetry:
\begin{cor}[Generalized moment-size duality]\label{cor:gen-gauss_duality}
Let $t,k\leq d$. Then,
\begin{equation}
    \EE_{M\sim \Usub d k}\left| \Perm(M)\right|^{2t} = \EE_{M\sim \Usub d t}\left| \Perm(M)\right|^{2k}.
\end{equation} 
\end{cor}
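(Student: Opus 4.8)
The plan is to deduce \cref{cor:gen-gauss_duality} directly from the generalized expansion formula \cref{eq:generalized-expansion-formula}, which already has all the symmetry built in: one just needs to check that every ingredient on the right-hand side is invariant under the simultaneous swap $k \leftrightarrow t$. Fix $d$ and write $n := kt$; note $n$ is symmetric in $k,t$. The summation in \cref{eq:generalized-expansion-formula} runs over $\lambda \vdash n$ with $l(\lambda) \le \min(k,t)$, and this index set is manifestly unchanged when we exchange $k$ and $t$. Likewise $f^\lambda$ (the dimension of the $S_n$-irrep) and $\WD_\lambda(d)$ (the dimension of the $U(d)$-irrep) depend only on $\lambda$ and $d$, not on the factorization $n = kt$, so they are automatically symmetric. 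Hence the entire content of the corollary reduces to the single assertion
\begin{equation*}
\tr\!\left[\rho_\lambda\big(R_{k\times t}\,C_{k\times t}\,R_{k\times t}\,C_{k\times t}\big)\right] = \tr\!\left[\rho_\lambda\big(R_{t\times k}\,C_{t\times k}\,R_{t\times k}\,C_{t\times k}\big)\right]
\end{equation*}
for every $\lambda \vdash n$, where the subscript records the shape of the grid whose row/column subgroups are being used.

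\textbf{Key step: the transpose map.} The natural move is to exhibit an explicit isomorphism between the two situations. Let $T$ be the bijection on cell-labels induced by transposing the $k\times t$ grid into a $t\times k$ grid; $T$ is a permutation of the $n$ cells, i.e. an element of $S_n$, and conjugation by $T$ is an automorphism $\varphi = \mathrm{Ad}_T$ of $S_n$. By construction, transposing the grid swaps the roles of rows and columns, so $\varphi$ carries the row-preserving subgroup of the $k\times t$ grid onto the column-preserving subgroup of the $t\times k$ grid, and vice versa:
\begin{equation*}
T\, R_{k\times t}\, T^{-1} = C_{t\times k}, \qquad T\, C_{k\times t}\, T^{-1} = R_{t\times k}.
\end{equation*}
(Here I am again using the \cref{rem:notational-remark} convention of writing $R$ for $\sum_{r\in R} r$, so these are identities in the group algebra $\CC[S_n]$, obtained by summing the cell-wise statement over the respective subgroups.) Applying $\mathrm{Ad}_T$ to the product therefore gives $T\,(R_{k\times t}C_{k\times t}R_{k\times t}C_{k\times t})\,T^{-1} = C_{t\times k}R_{t\times k}C_{t\times k}R_{t\times k}$ in $\CC[S_n]$.

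\textbf{Finishing.} Now push this through the representation and take traces. Since $\rho_\lambda$ is a homomorphism, $\rho_\lambda(T\,X\,T^{-1}) = \rho_\lambda(T)\,\rho_\lambda(X)\,\rho_\lambda(T)^{-1}$, and the trace is conjugation-invariant, so
\begin{equation*}
\tr\!\left[\rho_\lambda(R_{k\times t}C_{k\times t}R_{k\times t}C_{k\times t})\right] = \tr\!\left[\rho_\lambda(C_{t\times k}R_{t\times k}C_{t\times k}R_{t\times k})\right].
\end{equation*}
Finally, $\tr[\rho_\lambda(CRCR)] = \tr[\rho_\lambda(RCRC)]$ by cyclicity of the trace (move the last $R$ to the front), giving exactly the needed identity for the $t\times k$ grid. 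Summing the equal terms over $\lambda$ with the symmetric prefactors yields \cref{cor:gen-gauss_duality}; specializing $d\to\infty$ (or invoking that \cref{eq:expansion-formula} follows from \cref{eq:generalized-expansion-formula}) recovers \cref{cor:gauss_duality} as a sanity check. I do not anticipate a genuine obstacle here: the only thing to be careful about is bookkeeping — verifying that $\mathrm{Ad}_T$ really does interchange $R$ and $C$ at the level of the cell-permutation action rather than merely up to relabeling, and that the constraint $l(\lambda)\le\min(k,t)$ (which the excerpt defers to a later proof) is itself symmetric, which it visibly is. The mild subtlety worth a sentence in the writeup is that the duality as phrased requires $t,k\le d$; for $d\to\infty$ this is vacuous, and for finite $d$ it is exactly the hypothesis under which \cref{eq:generalized-expansion-formula} was established.
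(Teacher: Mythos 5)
Your proof is correct and takes essentially the same approach the paper intends: the paper's ``proof'' of \cref{cor:size-moment-general} is the single phrase ``it is easy to observe from \cref{thm:main_expansion_formula},'' and your transpose-permutation $T$ together with conjugation invariance and cyclicity of the trace is exactly the argument that phrase is eliding. The same observation silently underlies \cref{cor:gauss_duality} as well.
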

\par
Later in the paper we use ~\cref{eq:generalized-expansion-formula} to compute $2$nd and $4$th moments of permanents of submatrices of random unitary matrices, as well as $6$th moment of a few small matrices.
\par
Using~\cref{eq:generalized-expansion-formula} we can prove a different lower bound for the moments of permanents in the general case:
\begin{thm}\label{eq:general-lower-bound}
Let $k\leq d$, then 
\begin{align}
    \EE_{M\sim \Usub d k} \left| \Perm(M) \right|^{2t}  \geq \binom{\binom{d+k-1}{k}+t-1}{t}^{-1}.
\end{align}
\end{thm}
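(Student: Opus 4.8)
The plan is to bypass the plethysm expansion in~\eqref{eq:generalized-expansion-formula} and instead realize $\Perm(M)$ directly as a diagonal matrix element of a symmetric power of $U$, so that the bound falls out of a single ``maximally entangled vector'' estimate. Recall the standard identity for the induced (second-quantized) operator on $\Sym^k(\CC^d)$: with the induced inner product and orthonormal Fock basis $\{|\mathbf n\rangle\}$ labelled by occupation vectors $\mathbf n=(n_1,\dots,n_d)$, $n_i\ge 0$, $\sum_i n_i=k$, one has $\langle\mathbf m|\Sym^k(A)|\mathbf n\rangle=(\mathbf m!\,\mathbf n!)^{-1/2}\,\Perm\!\big(A[\mathbf m,\mathbf n]\big)$, where $A[\mathbf m,\mathbf n]$ repeats row $i$ (column $j$) with multiplicity $m_i$ ($n_j$). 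Taking $\mathbf m=\mathbf n=(1^k,0^{d-k})$ --- which is a legitimate occupation vector precisely because $k\le d$ --- and setting $v:=|\mathbf n\rangle$, a \emph{unit} vector, this reads $\Perm(M)=\langle v|\Sym^k(U)|v\rangle$ for $M\sim\Usub{d}{k}$ the leading $k\times k$ block of Haar $U\in U(d)$.

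I would then pass to the $t$-th power. Since $\Sym^k(U)^{\ot t}$ commutes with the $S_t$ that permutes the tensor factors, it preserves the symmetric subspace $\mathcal W:=\Sym^t\!\big(\Sym^k(\CC^d)\big)$, where it acts as the $U(d)$-representation $\Phi(U):=\Sym^t(\Sym^k(U))$; moreover $w:=v^{\ot t}$ is a unit vector in $\mathcal W$ with $\langle w|\Phi(U)|w\rangle=\langle v|\Sym^k(U)|v\rangle^t=\Perm(M)^t$. Consequently
\[
\EE_{M\sim\Usub{d}{k}}|\Perm(M)|^{2t}=\EE_U\big|\langle w|\Phi(U)|w\rangle\big|^2=\big\langle w\ot\bar w\big|\,\Pi\,\big|w\ot\bar w\big\rangle=\big\|\Pi\,|w\ot\bar w\rangle\big\|^2,
\]
where $\Pi:=\EE_U\big[\Phi(U)\ot\overline{\Phi(U)}\big]$ is the orthogonal projector onto the $U(d)$-invariant subspace of $\mathcal W\ot\overline{\mathcal W}$.

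To conclude I would test $\Pi|w\ot\bar w\rangle$ against the normalized maximally entangled vector $\Omega:=(\dim\mathcal W)^{-1/2}\sum_j|f_j\rangle\ot|\overline{f_j}\rangle$ for any orthonormal basis $\{f_j\}$ of $\mathcal W$: it is $U(d)$-invariant, of unit norm, and $\langle\Omega|w\ot\bar w\rangle=\|w\|^2/\sqrt{\dim\mathcal W}=1/\sqrt{\dim\mathcal W}$. Since $\Pi\Omega=\Omega$ and $\Pi$ is self-adjoint, $\|\Pi|w\ot\bar w\rangle\|\ge|\langle\Omega|w\ot\bar w\rangle|$, hence $\EE_{M\sim\Usub{d}{k}}|\Perm(M)|^{2t}\ge 1/\dim\mathcal W$. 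The bound then follows from $\dim\mathcal W=\dim\Sym^t\!\big(\Sym^k(\CC^d)\big)=\binom{\dim\Sym^k(\CC^d)+t-1}{t}=\binom{\binom{d+k-1}{k}+t-1}{t}$.

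The tensor-power and maximally-entangled-vector steps are routine; the one load-bearing point I would be careful about is fixing conventions in the first step so that $\Perm(M)=\langle v|\Sym^k(U)|v\rangle$ holds exactly and $v$ is a genuine unit vector --- i.e.\ that the occupation vector $(1^k,0^{d-k})$ is multiplicity-free, so no combinatorial prefactor $\mathbf n!$ intrudes. As a sanity check, at $t=1$ the argument gives the \emph{equality} $\EE_{M\sim\Usub{d}{k}}|\Perm(M)|^2=\binom{d+k-1}{k}^{-1}$ (Schur's lemma applied to the irreducible $\Sym^k(\CC^d)$), so the inequality is sharp there and cannot be improved by this method in general. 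The same estimate can also be extracted from~\eqref{eq:generalized-expansion-formula} by grouping its positive terms into $U(d)$-isotypic contributions and applying Cauchy--Schwarz, using $\sum_\lambda(\text{mult.\ of }\lambda)\,\WD_\lambda(d)=\dim\mathcal W$, but the direct route above is cleaner.
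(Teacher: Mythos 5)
Your argument is correct, and it is a genuinely different (and in my view cleaner) route than the paper's. The paper proves this bound in its Section on lower bounds (see \cref{sec:lbhj}) by first invoking the expansion formula~\cref{eq:generalized-expansion-formula}, then applying the plethysm inequality $\tr[\rho_\lambda(RCRC)]\ge(\tr[\rho_\lambda(RC)])^2/\Pl_\lambda^{t,k}$ from~\cref{thm:pleth_ineq}, and finally packaging the summands into two normalized distributions $p_\lambda$ and $q_\lambda$ on Young diagrams and applying Cauchy--Schwarz (with the normalization $\sum_\lambda \Pl_\lambda^{t,k}\WD_\lambda(d)=\dim\Sym^t(\Sym^k(\CC^d))$ entering exactly as in your argument). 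Your proof bypasses all of that machinery by exhibiting $\Perm(M)$ as the diagonal matrix element $\langle v|\Sym^k(U)|v\rangle$ with $v$ the multiplicity-free occupation state $(1^k,0^{d-k})$ (hence no $\mathbf n!$ prefactor, which is why $k\le d$ is used), passing to $\Sym^t(\Sym^k(\CC^d))$, and then lower-bounding $\|\Pi\,|w\ot\bar w\rangle\|$ by its overlap with the unit $U(d)$-invariant vector $\Omega$. The steps are all sound: $\Pi=\EE_U[\Phi(U)\ot\overline{\Phi(U)}]$ is the self-adjoint idempotent projecting onto the invariant subspace, $\Omega$ lies in its range with $\langle\Omega|w\ot\bar w\rangle=(\dim\mathcal W)^{-1/2}$ since $\|w\|=1$, and $\dim\mathcal W=\binom{\binom{d+k-1}{k}+t-1}{t}$. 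Your $t=1$ sanity check recovers the exact value~\cref{eq:hmt1}, as it must since $\Sym^k(\CC^d)$ is irreducible and the invariant subspace of $\mathcal W\ot\overline{\mathcal W}$ is then one-dimensional. One small point of comparison: the paper's proof is stated under $t\le k$ and then relies on the moment--size duality (\cref{cor:size-moment-general}, which needs $t\le d$) to obtain the symmetrized form of~\cref{thm:minorbound}, whereas your argument needs only $k\le d$ and no restriction on $t$ to produce the stated bound $\binom{\binom{d+k-1}{k}+t-1}{t}^{-1}$ directly. What the paper's approach buys is the tighter $\min/\max$ form of~\cref{thm:minorbound} in the regime $t<k\le d$; your direct route is otherwise shorter, self-contained, and makes the origin of $\dim\Sym^t(\Sym^k(\CC^d))$ transparent.
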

The proof of this theorem is technical and will be discussed in~\cref{sec:lbhj}. 
\par 
This bound is much weaker than~\cref{eq:lower-bound-2} when $d\rightarrow \infty$, but becomes more and more relevant as $k$ approaches $d$. To be more quantitative, note the following conjecture suggested by Nick Hunter-Jones and supported by numerical experiments:
\begin{conj}[The Hunter-Jones conjecture]\label{conj:hjConj}
    For the random unitary matrices, the following holds:
    \[\EE_{M\sim U(d)} \left| \Perm(M) \right|^{2t} \approx \frac{t!}{\binom{2d-1}{d}^t}\cong \binom{\binom{2d-1}{d}+t-1}{t}^{-1}.\]
\end{conj}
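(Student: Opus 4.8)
The plan is to recast the conjecture as an \emph{approximate design} statement. Write $\ket{\psi}\in V_d:=\Sym^d(\CC^d)$ for the normalized symmetrization of $e_1\ot\cdots\ot e_d$, and let $\pi_d\colon U(d)\to U(V_d)$ be the degree-$d$ symmetric-power irrep, so $\dim V_d=N:=\binom{2d-1}{d}$. One checks $\bra{\psi}M^{\ot d}\ket{\psi}=\tfrac1{d!}\sum_{\sigma,\tau\in S_d}\prod_i M_{\sigma(i),\tau(i)}=\Perm(M)$ (expand and sum over one of the two permutations), and since $M^{\ot d}$ restricts on the symmetric subspace to $\pi_d(M)$ this reads $\Perm(M)=\bra{\psi}\pi_d(M)\ket{\psi}$. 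Hence, with $\ket{\Psi}:=\ket{\psi}^{\ot t}\ot\ket{\bar\psi}^{\ot t}$ inside $\mathcal K:=V_d^{\ot t}\ot\overline{V_d}^{\ot t}$,
\[\EE_{M\sim U(d)}|\Perm(M)|^{2t}=\bra{\Psi}\,P^{U(d)}\,\ket{\Psi},\]
where $P^{U(d)}=\EE_M\bigl[\pi_d(M)^{\ot t}\ot\overline{\pi_d(M)}^{\ot t}\bigr]$ is the orthogonal projector onto the $U(d)$-invariants of $\mathcal K$. Because $\pi_d(U(d))\subseteq U(V_d)$, the projector $P^{U(N)}$ onto $U(N)$-invariants obeys $0\le P^{U(N)}\le P^{U(d)}$, and $\bra{\Psi}P^{U(N)}\ket{\Psi}=\EE_{V\sim U(N)}|\bra{\psi}V\ket{\psi}|^{2t}=\binom{N+t-1}{t}^{-1}$ --- which is exactly \cref{eq:general-lower-bound} specialized to $k=d$. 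Setting $Q:=P^{U(d)}-P^{U(N)}\ge0$, the conjecture is equivalent to the elementary asymptotic $\binom{N+t-1}{t}^{-1}=\tfrac{t!}{N^t}(1+O(t^2/N))$ together with the genuinely hard bound
\[\bra{\Psi}\,Q\,\ket{\Psi}=o\!\left(\binom{N+t-1}{t}^{-1}\right)\qquad(d\to\infty,\ t\text{ fixed}).\]

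To control $\bra{\Psi}Q\ket{\Psi}$ I would use the Howe-duality expansion \cref{eq:generalized-expansion-formula} at $k=d$, which is precisely the decomposition of $\bra{\Psi}P^{U(d)}\ket{\Psi}$ along the $U(d)$-isotypic components of $\mathcal K$: the $\lambda$-summand collects the contribution of the $U(d)$-irrep $W_\lambda$ occurring in $V_d^{\ot t}$, weighted by $1/\WD_\lambda(d)$ and by the overlap of $\ket{\psi}^{\ot t}$ with the associated multiplicity space (which carries the plethysm coefficients of $\Sym^t\circ\Sym^d$). The explicit evaluations recorded after \cref{eq:expansion-formula} for $\lambda=(kt{-}m,\dots)$, $m\le4$, are the first terms of this expansion, and their rapid geometric decay is the evidence; turning this into a proof requires a \emph{uniform} bound on the entire tail, pitting the growth of $1/\WD_\lambda(d)$ (estimated via the Weyl dimension formula $\WD_\lambda(d)=\prod_{(i,j)\in\lambda}(d+j-i)/h(i,j)$) against the combinatorial growth of $f^\lambda$ and of the relevant Kostka numbers $K_{\lambda,(t^d)}$, $K_{\lambda,(d^t)}$. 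An equivalent and possibly more workable target: the pushforward of Haar measure under $M\mapsto\pi_d(M)\ket{\psi}$ is a probability measure $\nu$ on the unit sphere of $V_d$ whose $t$-th frame potential equals $\bra{\Psi}P^{U(d)}\ket{\Psi}$, so the conjecture asserts that $\nu$ is an approximate complex-projective $t$-design in $V_d$; its support is the $U(d)$-orbit of the monomial $x_1\cdots x_d$, i.e.\ the set of (scaled) products of $d$ mutually orthonormal linear forms.

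The principal obstacle is exactly the one blocking \cref{conj:intro} in the Gaussian case: no closed form is known for the general term of \cref{eq:generalized-expansion-formula}, so term-by-term decay does not suffice and one needs a rigorous ``sufficient randomization'' estimate for the whole tail. This is made delicate by the fact that the orbit supporting $\nu$ is exponentially thin in $\dim V_d$ --- it has real dimension $d^2-d$ while $\mathbb P(V_d)$ has dimension $2N-2\sim 4^d/\sqrt{\pi d}$ --- so approximate-design-ness cannot follow from covering-number arguments and would likely need new analytic input: hypercontractive estimates on the variety of split polynomials $\prod_{i=1}^d\ell_i(x)$, an orbit-specific Weingarten resummation, or a comparison with the known Gaussian-state designs. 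A secondary difficulty is the dependence on $t$: the scheme above is aimed at fixed $t$ with $d\to\infty$ --- the regime in which the lower bound \cref{eq:general-lower-bound} is expected to be tight and in which $\binom{N+t-1}{t}^{-1}\cong t!/\binom{2d-1}{d}^t$ --- and allowing $t$ to grow with $d$ lets ever more $\lambda$'s contribute, demanding correspondingly stronger tail control.
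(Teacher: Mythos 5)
Because \cref{conj:hjConj} is a conjecture, the paper does not prove it either: its support for the statement consists of the matching lower bound \cref{eq:general-lower-bound} (proved in \cref{sec:lbhj}), numerical evidence (\cref{fig:relerr}), and exact evaluations at small $(d,t)$. Your proposal is therefore not a proof and you say so; the useful comparison is between your route to the lower bound and the paper's, and between your reformulation of the remaining gap and the paper's. Your derivation of $\EE_{M\sim U(d)}|\Perm(M)|^{2t}\geq\binom{N+t-1}{t}^{-1}$, $N=\binom{2d-1}{d}$, is genuinely different and slicker than \cref{thm:minorbound}: the paper goes through the Howe expansion \cref{eq:generalized-expansion-formula}, the plethysm inequality \cref{thm:pleth_ineq}, and a Cauchy--Schwarz estimate against the distributions of \cref{eq:defpq}, whereas you observe directly that $\Perm(M)=\bra\psi\pi_d(M)\ket\psi$ on $V_d=\Sym_d(\CC^d)$ and that $\pi_d(U(d))\subseteq U(N)$ forces $P^{U(N)}\leq P^{U(d)}$ as projectors, so
\[
\EE_{M\sim U(d)}|\Perm(M)|^{2t}=\bra\Psi P^{U(d)}\ket\Psi\geq\bra\Psi P^{U(N)}\ket\Psi=\binom{N+t-1}{t}^{-1}.
\]
The trade-off is that your argument is specific to $k=d$ (the vector $\psi$ is built from a full orthonormal frame), while \cref{thm:minorbound} covers all $k\le d$. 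Your design-theoretic reformulation --- that the conjecture is exactly the statement that the pushforward of Haar measure under $M\mapsto\pi_d(M)\ket\psi$ is an approximate projective $t$-design in $V_d$, and that no covering/volume argument can work because the supporting orbit of split forms has real dimension $d^2-d$ against $2N-2\sim 4^d/\sqrt{\pi d}$ --- is correct and, as far as the paper shows, not stated there; it is an illuminating repackaging of the same obstacle the paper faces (uniform tail control of \cref{eq:generalized-expansion-formula}). All of this is sound, and you are right that the residual bound $\bra\Psi Q\ket\Psi=o\bigl(\binom{N+t-1}{t}^{-1}\bigr)$ is the genuinely hard part; neither you nor the paper closes it.
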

The value of the permanent moment predicted by~\cref{conj:hjConj} is very close to the lower bound predicted by~\cref{eq:general-lower-bound} for $k=d$. This suggests approximate tightness of~\cref{eq:lower-bound-2} for $k=d$. 
\par 
The Hunter-Jones conjecture further suggests that the distribution of the square permanent of random unitary matrices should be close to an exponential distribution with mean $t!/\binom{2d-1}{d}$. This is because the $t$-th moment of such distribution is ${t!}/{\binom{2d-1}{d}^t}$, matching the value predicted by the conjecture.
\par
Lastly, we point out that~\cref{eq:generalized-expansion-formula} can be easily modified to the case of determinants, and in that case, the sum greatly simplifies. In~\cref{sec:detmom} we prove the following exact formula:
\begin{thm}[Moments of determinants of minors of random unitary matrices]\label{thm:mom-det}
Let $k\leq d$ be integers. Then, 
\begin{align}
\EE_{U \sim \Usub{d}{k}} \left| \det \left(M\right) \right|^{2t} =\prod_{i=1,j=1}^{k,t}\frac{i+j-1}{(d-k)+i+j-1}.
\end{align}
\end{thm}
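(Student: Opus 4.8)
The plan is to derive the determinant counterpart of the expansion formula~\cref{eq:generalized-expansion-formula} and then observe that, in contrast with the permanent, the sum over Young diagrams collapses to a single term. First I would repeat the derivation of~\cref{eq:generalized-expansion-formula} essentially verbatim, the only change being that in the expansion $\det(M)^t=\sum_{l\in L_{k,t}}\big(\prod_j\operatorname{sgn}(\text{column }j\text{ of }l)\big)\prod_{i,j}M_{i,l_{ij}}$ each column permutation now carries its sign. Since the columns of the $k\times t$ table are precisely what the column-preserving subgroup $C$ acts on, this replaces every occurrence of the group-algebra element $C=\sum_{c\in C}c$ inside the argument of $\rho_\lambda$ by the signed element $C_-:=\sum_{c\in C}\operatorname{sgn}(c)\,c$. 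The result is
\[\EE_{M\sim\Usub{d}{k}}\left|\det(M)\right|^{2t}=\sum_{\lambda\vdash kt}\frac{1}{\WD_\lambda(d)}\left(\frac{f^\lambda}{(kt)!}\right)^{2}\tr\!\left[\rho_\lambda(RC_-RC_-)\right].\]

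The key point is that $RC_-$ is exactly the Young symmetrizer $c_{\lambda_0}$ of the rectangular diagram $\lambda_0=(t^k)$ (the partition of $kt$ with $k$ parts all equal to $t$): $R$ symmetrizes along the $k$ rows of length $t$ and $C_-$ antisymmetrizes along the $t$ columns of length $k$. I would then invoke the two standard facts about Young symmetrizers: $c_{\lambda_0}^{2}=\tfrac{(kt)!}{f^{\lambda_0}}c_{\lambda_0}$, and $\CC[S_{kt}]c_{\lambda_0}\cong S^{\lambda_0}$ is irreducible. Using $\rho_{\mathrm{reg}}=\bigoplus_\mu\rho_\mu^{\oplus f^\mu}$ and comparing ranks ($\operatorname{rank}\rho_{\mathrm{reg}}(c_{\lambda_0})=\dim S^{\lambda_0}=f^{\lambda_0}$, while $\rho_{\lambda_0}(c_{\lambda_0})\neq0$), one gets $\rho_\mu(c_{\lambda_0})=0$ for $\mu\neq\lambda_0$ and $\rho_{\lambda_0}(c_{\lambda_0})=\tfrac{(kt)!}{f^{\lambda_0}}P$ with $P$ a rank-one idempotent. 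Since $RC_-RC_-=(RC_-)^2=c_{\lambda_0}^2$, this gives $\tr[\rho_\mu(RC_-RC_-)]=\tfrac{(kt)!}{f^{\lambda_0}}\tr[\rho_\mu(c_{\lambda_0})]$, which equals $\big(\tfrac{(kt)!}{f^{\lambda_0}}\big)^2$ if $\mu=\lambda_0$ and $0$ otherwise.

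Substituting back, every term vanishes except $\lambda=\lambda_0$, and the surviving term telescopes:
\[\EE_{M\sim\Usub{d}{k}}\left|\det(M)\right|^{2t}=\frac{1}{\WD_{\lambda_0}(d)}\left(\frac{f^{\lambda_0}}{(kt)!}\right)^{2}\left(\frac{(kt)!}{f^{\lambda_0}}\right)^{2}=\frac{1}{\WD_{(t^k)}(d)}.\]
(Here $l(\lambda_0)=k\le d$, so $\WD_{\lambda_0}(d)>0$, and the answer is symmetric in $k\leftrightarrow t$ since $\WD_{(t^k)}(d)=\WD_{(k^t)}(d)$ — the rectangle and its transpose have equal content and hook multisets — consistent with the determinant analog of~\cref{cor:gen-gauss_duality}.) It then remains to evaluate $\WD_{(t^k)}(d)$ with the hook--content form of the Weyl dimension formula, $\WD_\lambda(d)=\prod_{(a,b)\in\lambda}\frac{d+b-a}{h_\lambda(a,b)}$. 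For the $k\times t$ rectangle, setting $i:=k+1-a$ and $j:=t+1-b$ gives $h(a,b)=(t-b)+(k-a)+1=i+j-1$, while the same substitution on the numerator gives $d+(b-a)=(d-k)+i+j-1$; as $(a,b)$ runs over the rectangle so does $(i,j)\in\{1,\dots,k\}\times\{1,\dots,t\}$, hence $\WD_{(t^k)}(d)=\prod_{i=1}^{k}\prod_{j=1}^{t}\frac{(d-k)+i+j-1}{i+j-1}$, and taking the reciprocal is exactly the claimed identity.

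The theorem is short once this machinery is in hand; the only place that needs genuine care is the first step — checking that the determinant's signs attach precisely to the column-preserving subgroup (so that the combination $RC_-$, and nothing else, appears) and that the rectangular diagram $\lambda_0=(t^k)$ really does occur in the expansion. The latter is automatic: passing from $C$ to $C_-$ flips the relevant branching condition from ``$\rho_\lambda(C)\neq0$ iff $l(\lambda)\le t$'' to ``$\rho_\lambda(C_-)\neq0$ iff $\lambda_1\le t$'', so the determinant sum is supported on Young diagrams fitting inside the $k\times t$ box, and $\lambda_0$ is precisely that box. Everything after Step~1 is the routine Young-symmetrizer collapse together with a hook-content computation.
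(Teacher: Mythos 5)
Your proof is correct and follows essentially the same route as the paper: the signed column sum turns $RC$ into the Young symmetrizer $c_{\rectangle}$ of the $k\times t$ rectangle, the expansion formula collapses to the single rectangular irrep via $c_{\rectangle}c_{\rectangle}=\hook(\rectangle)\,c_{\rectangle}$ together with the support/rank fact, leaving $1/\WD_{\rectangle}(d)$, which is then evaluated by the hook--content form of the Weyl dimension formula exactly as in~\cref{sec:detmom}. One negligible nit: the single substitution $i=k+1-a$, $j=t+1-b$ does not send the content factors cellwise to $(d-k)+i+j-1$ (for the numerator you want $j=b$), but since the product runs over the whole rectangle the reindexing is immaterial and your final identity stands.
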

\section{Background on representation theory}
In this section, we review the essential representation theory background and tools that we use in the rest of the paper. To avoid a very lengthy section, we omit the proofs and refer interested readers to the relevant resources~\cite{fulton2013representation,fulton1997young,bump2004lie, howe1995perspectives}.
\subsection{Basics of the representation theory}
Let us start with an abstract group $G$\footnote{We assume basic knowledge of group theory.}. The group is naturally defined by its product rule, i.e., given any two elements $g,h \in G$, one can find a third element $k \in G$ which is the result of composition of $g$ and $h$. We write this relation as $k = g h$. In many cases, one can assign a matrix to each group element, such that the standard matrix multiplication of the matrices mimics the abstract group multiplication laws. More precisely, for an integer $n$, one aims to find a map $\rho$ from the group $G$ to the space of $n\times n$ matrices, such that,
\[ \rho(g) \rho(h) = \rho(g h), \text{for all }g,h \in G.\]
Multiplication on the left-hand-side is the matrix multiplication, while the multiplication of the right-hand-side is the group composition. We call a map $\rho$ a \emph{representation of the group $G$} when it satisfies the above property.
\par
If $\rho_1$ and $\rho_2$ are representations of the group $G$, then their direct sum $\rho_1 \oplus \rho_2$ is also a representation. Conversely, given a generic representation, one can try to decompose it into the direct sum of smaller representations until further decomposition is not possible. In this way, one always ends up with representations that cannot be decomposed into smaller ones, which are called the \emph{irreducible representations} or \emph{irreps}\footnote{We ignore the important distinction between irreducible and indecomposable representations, as the notions match for the representations that we study in this paper.}.
\par
If $\rho$ is a representation of dimension $n$, i.e., it maps any element of $G$ to a $n\times n$ matrix, then one naively needs $|G|n^2$ numbers to uniquely identify a representation. However, there is a more compact and elegant way of describing the representations using~\emph{characters}, which are nothing but the trace of the representation matrices:
\[ \chi(g) := \tr \rho(g).\]
The character $\chi$ is a vector in a complex vector space of dimension $|G|$\footnote{For infinite groups, it might be more convenient for some to think of the character as a complex valued function defined on $G$.}, where its value on the identity element is equal to the dimension of the representation matrix.  Surprisingly, one can always reconstruct a representation from its characters. 
\par
The characters of irreducible representations form an orthogonal set of vectors in the $|G|$ dimensional complex vector space with respect to the normalized inner product
\[ \EE_{g}\, \overline{\chi_i(g)} \chi_j(g)=\frac{1}{|G|} \sum_{g\in G} \overline{\chi_i(g)} \chi_j(g) = \delta_{i,j},\text{ for irreducible character $\chi_i$ and $\chi_j$.}\]
Moreover, the character of the direct sum of representations is simply the sum of the characters of the representations. Hence, one can always read the irreducible content of a representation by writing its character vector in the orthogonal basis given by the irreducible characters. If the decomposition of a representation $\rho$ with character $\chi$ has $n_i$ copies of an irrep $\rho_i$ with character $\chi_i$, then,
\[\EE\, \overline{\chi(g)}\chi_i(g) = n_i.\]
We call $n_i$ the \emph{degeneracy} of the representation $\rho_i$ in $\rho$. 
\par
Some important representations are the followings:
\begin{itemize}
\item The \emph{trivial} representation, $\rho_{\text{trivial}}$, which assigns $1$ to all group elements: $\rho_{\text{trivial}}(g) = 1$ for all $g \in G$. 
\item The \emph{regular} representation: This representation acts on a vector space of dimension $|G|$ with basis elements indexed by the group elements, $\{e_g\}$. The representation is simply defined as: \[\rho_{\text{reg}} (g) e_ {g'} = e_{gg'}.  \]
The character of the regular representation is $\chi_{\text{reg}}(g) = |G|\delta_{g,e}$, where $e$ is the identity element of the group. For any irreducible character $\chi_i$ with dimension $d_i$ and degeneracy $n_i$ in the regular representation, we have,
\begin{equation} \label{eq:reg-dec}
n_i = \frac{1}{|G|} \sum_{g} \overline{\chi_{\text{reg}}(g)} \chi_i(g) = \chi_i(e) = d_i.
\end{equation}
Therefore, any irrep appears in the regular representation, and it appears with a degeneracy equal to its dimension.
\end{itemize}
\subsection{The symmetric group and the unitary group}\label{sec:basym}
This paper is mostly concerned with two groups: the symmetric group on $t$ elements, $S_t$, and the group of unitary $d\times d$ matrices, $U(d)$. 
\par
We start by discussing the Young diagrams that index the irreps of both groups. A~\emph{Young diagram} is an ordered partition of an integer, which is usually presented as a collection of boxes as shown in~\cref{fig:yt}. left.
We usually indicate Young diagrams with Greek letters, and use ``$\vdash$'' or ``$|\cdot|$'' to show the number of boxes, e.g., $\lambda \vdash n$ or $|\lambda|=n$. We use subscripts to identify individual parts of the partitions. For example, $\lambda_i$ is the $i$-th largest element in the partition, or equivalently, length of the $i$-th row of the corresponding diagram. \emph{Depth} of the diagram, $l(\lambda)$, is defined as the number parts in the partition or the number of rows in the corresponding diagram (see~\cref{fig:yt}. left). Lastly, we define $\tilde \lambda$ to be the transpose Young diagram, where the rows and columns are exchanged. See~\cref{fig:yt}. right.
\par
\begin{figure}
    \centering
    \includegraphics[width=10cm]{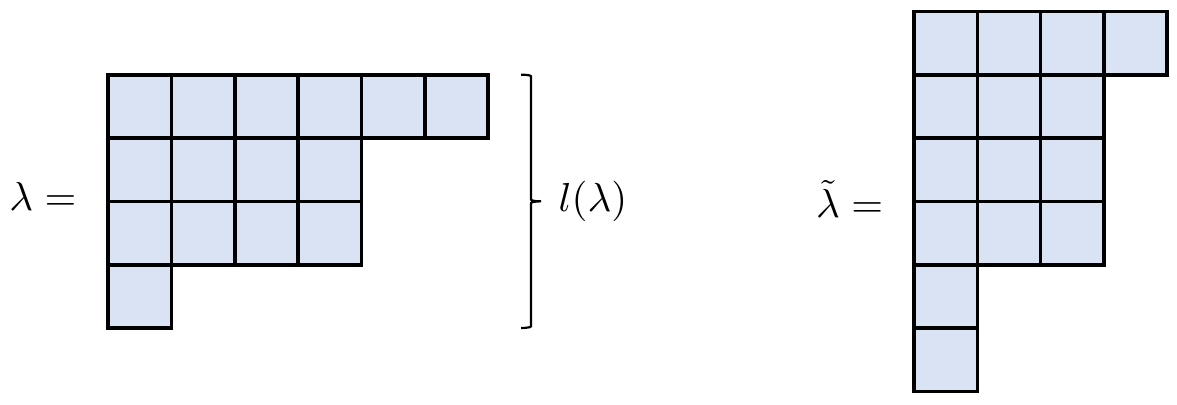}
    \caption{{\bf Left.} The Young diagram $\lambda$ with $15$ boxes. We indicate the number of boxes by $|\lambda|=15$ or $\lambda \vdash 15$. The corresponding partition is $\lambda = (6,4,4,1)$, and the depth of the diagram is the number of rows: $l(\lambda)=4$. {\bf Right}. $\tilde \lambda$, the transpose of $\lambda$.}
    \label{fig:yt}
\end{figure}
\par 
A Young diagram can be filled with integers to form a \emph{Young tableau}. There are two important types of Young tableaux:
\begin{enumerate}
    \item \emph{Standard Young tableau} is a Young diagram filled with integers $1,\cdots, n$, each one appearing once. We also assume that the numbers in each row and column are strictly increasing (see~\cref{fig:yt2} .left). For any Young diagram $\lambda$, the number of standard tableaux is given by the \emph{hook length formula}: for every box in the Young diagram, count the number of boxes directly below, or directly to the to the right of that box (including the box itself). Multiply the numbers assigned to all of the boxes and call the result $\hook(\lambda)$. Number of standard Young tableaux, $f^\lambda$, is given by~\cite{fulton1997young,fulton2013representation}:
    \begin{equation}\label{eq:hook_length_definition}
        f^\lambda = \frac{n!}{\hook(\lambda)} = \frac{n! \prod_{1\leq i<j\leq l(\lambda)}(\lambda_i-\lambda_j) }{\lambda_1! \lambda_2! \cdots \lambda_{l(\lambda)}! }.
    \end{equation} 
    \item \emph{Semi-standard Young tableau} is a Young diagram of $n$ boxes filled with a subset of numbers $1,\cdots, d$, for some integer $d$. This time, the constraint is that the numbers appearing in each row are weakly increasing, while the numbers appearing in each column are strictly increasing. See~\cref{fig:yt2}. right. The number of such tableaux is given by the Weyl dimension formula~\cite{fulton1997young}:
    \begin{equation}\label{eq:weyl_dim_formula_definition}
        \WD_\lambda(d) = \prod_{1\leq i<j\leq d} \frac{\lambda_i - \lambda_j+j-i}{j-i} = f^\lambda \frac{\prod_{(i,j)\in \lambda}(d+j-i)}{n!},
    \end{equation}
    when $(i,j)\in \lambda$ indicates that there is a box in the $i$-th row and $j$-th column of $\lambda$.
\end{enumerate}
\begin{figure}
    \centering
    \includegraphics[width=9cm]{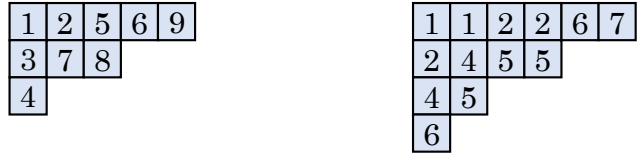}
    \caption{{\bf Left.} A standard Young tableau of $9$ boxes. The diagram is filled with numbers $1,\cdots, 9$, and rows and columns are strictly increasing. {\bf Right.} A semi-standard Young tableau of $13$ boxes, with content $1,\cdots, 7$. Here the rows are weakly increasing, while columns are strictly increasing.}
    \label{fig:yt2}
\end{figure}
The Hook length formula and the Weyl dimension formula are related by $n! \WD_\lambda(d) =f^\lambda\prod_{(i,j)\in \lambda}(d+(j-i))$, see~\cite{fulton1997young}, page 55, Eq. (9). Using this equality, we can derive the following expression that relates $\WD_\lambda(d)$ as $d\rightarrow \infty$ to $f^\lambda$: 
\begin{equation}\label{eq:limiting_behavior_dimension}
  \lim_{d\rightarrow \infty} \frac{\WD_\lambda(d)}{d^n} = \frac{f^\lambda}{n!}  \lim_{d\rightarrow\infty}\prod_{(i,j)\in \lambda}(1+(j-i)/d) = \frac{f^\lambda}{n!}. 
\end{equation}
\par
Representations of the symmetric group, $S_t$, are indexed by the Young diagrams $\lambda \vdash t$. We usually indicate these irreps by $\rho_\lambda(\pi)$, for $\pi \in S_t$. Dimension of $\rho_\lambda(\pi)$ is $f^\lambda$ which we defined in \cref{eq:hook_length_definition}. There are a few notable representations:
\begin{itemize}
    \item The \emph{trivial representation}, which is the irreducible representation corresponding to $\lambda = (t)$. 
    \item The \emph{sign representation} corresponds to $\lambda = (1,1,1,\cdots ,1)$ (with $t$ ones). This is a one-dimensional irreducible  representation that assigns $\pm 1$ to each permutation according to its sign. 
    \item The \emph{standard representation} is the conventional matrix representation of the permutations. It assigns a $t\times t$ matrix $\rho_{\text{std}}$ to each $\pi \in S_t$, where $\rho_{\text{std}} (\pi) \ket i =\ket {\pi(i)}$. This representation is reducible and decomposes to a trivial irrep and one copy of $\rho_{(t-1,1)}$. 
\end{itemize}
One can use Young diagrams to construct the representations matrices. One way to do so is by defining the Young symmetrizer. First, define two subgroups of $S_t$ as follows:
\begin{itemize}
    \item \emph{The row preserving subgroup $R_\lambda$.} Fill the Young diagram $\lambda$ with numbers $1$ to $t$, placing $1, \cdots, \lambda_1$ in the first row, $\lambda_1+1,\cdots, \lambda_1+\lambda_2$ in the second row, and so on. With this filling, define $R_\lambda$ to be the subgroup of $S_t$ that preserves rows of this Young tableau. It is easy to see that $|R_\lambda| =\prod_{i} \lambda_i!$.
    \item \emph{The column preserving subgroup $C_\lambda$.} Consider the Young tableau defined above. Similar to $R_\lambda$, we define the column preserving subgroup to be the subgroup of $S_t$ that preserves the columns of $\lambda$.
\end{itemize}
The Young symmetrizer, $c_\lambda$, is
\[ c_\lambda = \sum_{\pi \in R_\lambda, \sigma \in C_\lambda} \text{sgn}(\sigma) \sigma \pi,\]
where $\pi$ and $\sigma$ are considered to be abstract group elements (i.e., the elements of the group algebra), or equivalently, the matrices of the regular representation of the group. ``$\text{sgn}$'' is the sign of the permutation.
\par
We can observe that $c_\lambda$ is an unnormalized and non-orthogonal projector:
\begin{equation}\label{eq:cchook}
    c_\lambda c_\lambda = \hook(\lambda) c_\lambda,
\end{equation} 
where the image of $c_\lambda$ is the irrep $\rho_\lambda$ of the symmetric group. See, e.g.,~\cite{fulton2013representation} chapter 4 for details.
\par
The characters of any element of the symmetric group is identified by its conjugacy class, which is in turns determined by the cycle type of the permutation. For $\pi \in S_t$, suppose that $\pi$ has cycles of length $\mu_1 \geq \mu_2 \geq \cdots \geq \mu_l$. Then $\mu=(\mu_1,\mu_2,\cdots,\mu_l)$ is a partition of $t$, or equivalently, a Young diagram of size $|\lambda| =t$. Therefore, the conjugacy classes of the symmetric group are also indexed by the Young diagrams of size $t$. We indicate the character of the symmetric group by the matrix $\chi$:
\[ \chi_{\mu \lambda} = \text{character of the representation }\rho_\lambda \text{ on an element with the cycle type } \mu.\]
\par
Interestingly, the representations of the unitary group are also indexed by the Young diagrams, but a different set of them. The group $U(d)$ has an infinite number of finite dimensional irreducible representations indexed by the Young diagrams of depth $\leq d$. The dimension of the irreducible representation $\sigma^{(d)}_\lambda(U)$ of $U(d)$ is equal to the number of semi-standard Young tableaux and is given by the Weyl dimension formula $\WD_\lambda(d)$. 
\subsection{Symmetric polynomials}
Before proceeding, we need to briefly review the theory of symmetric polynomials. A discussion of this theory is essential for any in-depth analysis of the representation theory of the symmetric and the unitary groups. The material in this section can be found in~\cite{fulton1997young,fulton2013representation}. 
\par 
Consider the space of symmetric polynomials of $d$ variables, i.e., the polynomials $p(x_1,\cdots, x_d)$ that remain invariant under the permutation of their $d$ variables. The goal of this subsection is to remind the reader of different bases for this space, as well as transition matrices between them. 
\par 
\begin{itemize}
    \item \emph{Monomial symmetric polynomials.} Monomial symmetric polynomials are the simplest basis for the space of symmetric polynomials. For any partition $\lambda \vdash t$, define $m_\lambda$ to be 
\[m_\lambda = x_1^{\lambda_1} x_2^{\lambda_2} \cdots x_d^{\lambda_d} + \text{all distinct monomials derived by permuting the variables.} \]
For instance, if $d=3$, $m_{(1,1,0)} = x_1 x_2 +x_1 x_3 + x_2x_3$. It can be easily seen that $m_\lambda$ is a polynomial of degree $|\lambda|$, and the set $\{m_\lambda\}$ forms a basis for all symmetric polynomials.
\item \emph{Power-sum symmetric polynomials.} First, define $p_i = x_1^i + \cdots + x_d^i$. Next, we define $p_\lambda$ for a partition $\lambda$ to be 
\[ p_\lambda = p_{\lambda_1} p_{\lambda_2} \cdots p_{\lambda_{l(\lambda)}}.\]
The set of all polynomials $p_\lambda$ with $\lambda \vdash t$ forms a basis for the space of symmetric polynomials of degree $t$ in $d$ variables.
\item \emph{Elementary symmetric polynomials.} Define $e_i = \sum_{1\leq l_1<l_2<\cdots<l_i\leq d} x_{l_1}x_{l_2}\cdots x_{l_i}$. For instance, $e_2 = x_1 x_2 + x_1x_3 +\cdots +x_{d-1}x_d$. We define the elementary symmetric polynomial $e_\lambda$ as follows:
\begin{equation*} 
e_\lambda = e_{\lambda_1} e_{\lambda_2} \cdots e_{\lambda_{l(\lambda)}}.
\end{equation*}
Again, these polynomials form a basis for the space of symmetric polynomials. The change of basis between $m_\mu$ and $e_\mu$ is given by
\begin{equation}\label{eq:transition_m_to_e}
    e_\lambda = \sum_{\mu} \text{IB}_{\lambda\mu} m_\mu,
\end{equation} 
where $\text{IB}_{\lambda\mu}$ is the number of $0,1$ matrices with the row sums of $\mu$ and the column sums given by $\nu$. 
\item \emph{Complete homogeneous symmetric polynomials.} 
Let $ h_i  = \sum_{1\leq l_1\leq l_2\leq \cdots\leq l_i\leq d} x_{l_1}x_{l_2}\cdots x_{l_i}$, and $h_\lambda =  h_{\lambda_1} \cdots h_{\lambda_{l(\lambda)}}$. The change of basis from the symmetric monomials to this basis is given by $\text{IM}_{\lambda\mu}$ coefficients that count the number of integer valued matrices with the row sum equal to $\lambda$ and the column sum of $\mu$:
\begin{equation}\label{eq:transition_m_to_h}
h_\lambda = \sum_{\mu} \text{IM}_{\lambda\mu} m_\mu. 
\end{equation}
\item \emph{Schur polynomials.} Lastly, we discuss the arguably most important polynomial, the Schur polynomial $s_\lambda$. One way to define the Schur polynomials is by starting with the determinantal expression
\begin{align*}
a_{(\lambda_1+d-1, \lambda_2+d-2, \dots , \lambda_d)} (x_1, x_2, \dots , x_d) :=
\det \left[ \begin{matrix} x_1^{\lambda_1+d-1} & x_2^{\lambda_1+d-1} & \dots & x_d^{\lambda_1+d-1} \\
x_1^{\lambda_2+d-2} & x_2^{\lambda_2+d-2} & \dots & x_d^{\lambda_2+d-2} \\
\vdots & \vdots & \ddots & \vdots \\
x_1^{\lambda_d} & x_2^{\lambda_d} & \dots & x_d^{\lambda_d} \end{matrix} \right].
\end{align*}
In particular, $a_{(d-1, d-2, \dots , 0)} (x_1, x_2, \dots , x_d)  = \prod_{1\leq i <j \leq d} (x_i-x_j) $ is the Vandermonde determinant. Then, the Schur polynomial is defined as
\begin{equation}\label{eq:schur-def-1}
    s_\lambda  = \frac{a_{(\lambda_1+d-1, \lambda_2+d-2, \dots , \lambda_d)}}{a_{(d-1, d-2, \dots , 0)}}.
\end{equation}
There is a different way of defining the Schur polynomial using the semi-standard tableaux:
\begin{equation}\label{eq:schur-def-2}
    s_\lambda = \sum_{T \text{ is a semi-standard filling of }\lambda} \quad \prod_{i=1}^d x_i^{\text{number of times }i \text{ appears in }T}.
\end{equation}
It is not obvious from this definition that $s_\lambda$ is symmetric. The degree of $s_\lambda$ is $|\lambda|$.
\par
Next, we discuss the transition matrices between the different polynomial bases and the Schur basis. If $|\lambda|= |\mu|$, define the \emph{Kostka} coefficient $K_{\lambda\mu}$ to be the number of Young tableaux which are the semi-standard fillings of the Young diagram of shape $\lambda$ with content $\mu$ (i.e., there are $\mu_1$ number of $1$'s, $\mu_2$ number of $2$'s, etc., in the tableau). Then, one can see that 
\begin{equation}\label{eq:transition_m_to_s}
s_\lambda = \sum_\mu K_{\lambda\mu} m_\mu.
\end{equation}
\par
Another important example is the transition matrix between $p_\lambda$ and $s_\lambda$. This is given by the characters of the symmetric group:
\begin{equation}\label{eq:transition_s_to_p}
p_\lambda = \sum_\mu \chi_{\lambda \mu } s_\mu.
\end{equation}
 It is possible to use this relation to recover the Frobenius formula for the character of the symmetric group~\cite{fulton2013representation}.
Lastly, we have the following two relations:
\begin{align}
    &h_\lambda = \sum_\mu K_{\mu \lambda} s_\mu, \text{ and,} \label{eq:transition_s_to_h}\\
    &e_\lambda = \sum_\mu K_{\tilde \mu \lambda } s_\mu. \label{eq:transition_s_to_e}
\end{align}
\end{itemize}
In addition to what is mentioned above, the Schur polynomials have another significance: they are the characters of the unitary group $U(d)$. If a unitary $U \in U(d)$ has eigenvalues $a_1,a_2,\cdots,a_d$, the character of the irrep $\sigma^{(d)}_\lambda(U)$ is simply given by $s_\lambda(a_1,a_2,\cdots,a_d)$. 
\par
Interestingly, the transition matrices can be used to derive formulas for $\text{IB}_{\lambda\mu}$ and $\text{IM}_{\lambda\mu}$. Combining~\cref{eq:transition_s_to_e} and~\cref{eq:transition_m_to_s}, we get $e_\lambda = \sum_\mu K_{\tilde \mu \lambda} s_\mu= \sum_{\mu\nu} K_{\tilde\mu\lambda} K_{\mu\nu} m_\nu$. From~\cref{eq:transition_m_to_e} we also know that $e_\lambda = \sum_{\nu} \text{IB}_{\lambda\nu}m_{\nu}$. Hence, we find a simple formula for the number of $0-1$ matrices with prescribed row and column sums:
\begin{equation}\label{eq:IB_formula}
    \text{IB}_{\mu\nu} = \sum_{\lambda} K_{\tilde \lambda\mu} K_{\lambda\nu}.
\end{equation}
Similarly,~\cref{eq:transition_m_to_h} gives the number of integer matrices with prescribed row and column sums as
\begin{equation}\label{eq:IM_formula}
    \text{IM}_{\mu\nu} = \sum_{\lambda} K_{ \lambda\mu} K_{\lambda\nu}.
\end{equation}
\par 
We end this section by discussing a condition for vanishing of the Kostka numbers. Let us define the lexicographical order on the Young diagrams:
\[\mu > \nu \text{ if the first non zero }\mu_i-\nu_i\text{ is positive}. \]
Then, one can see that 
\begin{equation}\label{eq:zerokost} K_{\mu\nu} = 0 \quad\text{ if } \mu<\nu.
\end{equation}
Moreover, $K_{\mu\mu}=1$ for all $\mu$.
This means that if we order the Young diagrams by the lexicographical order, the Kostka matrix is a lower triangular matrix with identity on the diagonals. The inverse of the matrix $K$, which we indicate by $(K^{-1})_{\mu\nu}$, exists and is a lower triangular matrix as well:
\begin{equation}\label{eq:zerokostinv} (K^{-1})_{\mu\nu} = 0 \quad\text{ if } \mu<\nu.
\end{equation}
\subsection{Schur-Weyl duality, Howe duality, and the left-right action on the regular representation}\label{sec:rep_backgroun_dualities}
In this section, we discuss some of the deep dualities and results in the representation theory of $S_t$ and $U(d)$. We will extensively use these results in the rest of this manuscript.
\subsubsection{\texorpdfstring{$U(d) \text{ and } S_t$}{}: the Schur-Weyl duality} \label{sec:sw} Let $\Hil = \CC^d$. For any $U\in U(d)$, the operator $U^{\ot t}$ acts on $\Hil^{\ot t}$. Similarly, for any $\pi \in S_t$, $\pi$ acts on $\Hil^{\ot t}$ through a representation $r$ that permutes the $t$ replicas of the vector space:
\begin{equation}\label{eq:def_r}
r(\pi) \,\left[{\psi_1} \ot  {\psi_2 }\ot \cdots \ot  {\psi_d} \right]= {\psi_{\pi^{-1}(1)}} \ot  {\psi_{\pi^{-1}(2)}} \ot \cdots \ot {\psi_{\pi^{-1}(d)} }\quad \text{for }\ket{\psi_i} \in \Hil.
\end{equation}
The action of the permutation group and the symmetric group commute, and therefore, one can decompose $\Hil^{\ot t}$ into irreducible representations of both groups. The celebrated Schur-Weyl duality gives the form of this decomposition and states that:
\begin{equation}\label{eq:def_SW}
    \Hil^{\ot t} = \bigoplus_{\lambda \vdash t,\, l(\lambda) \leq d} V^{\text{Sym}}_\lambda \ot V^{U(d)}_\lambda,  
 \end{equation} 
 where $V^{\text{Sym}}_\lambda$ is the vector space that the symmetric group acts on with the irrep $\rho_\lambda$, and $V^{U(d)}_\lambda$ is a space that the irrep $\sigma^{(d)}_\lambda$ of $U(d)$ acts on.  
In other words, the vector space is decomposed into a direct sum of so called \emph{isotypic} subspaces $V^{\text{Sym}}_\lambda \ot V^{U(d)}_\lambda$, where the groups $S_t$ and $U(d)$ act as a simple tensor product. This duality lies at the heart of the representation theory of symmetric and unitary groups.
\par 
\subsubsection{\texorpdfstring{$U(d_1) \text{ and }  U(d_2)$}{}: the Howe duality}\label{sec:howe} Consider a slightly different scenario and let $\Hil = \CC^{d_1} \ot \CC^{d_2}$. The groups $U(d_1)$ and $U(d_2)$ have commuting actions on $\Hil$, where $U(d_1)$ acts on $\CC^{d_1}$ and $U(d_2)$ acts on $\CC^{d_2}$ tensor factor. One can consider $t$-th symmetric power of $\Hil$, i.e., the image of $\Hil^{\ot t}$ under the projection to the symmetric subspace $\frac{1}{t!}\sum_{\pi \in S_t} r(\pi)$ (see~\cref{eq:def_r} for the definition of $r(\pi)$). We call this space $\Sym_t(\Hil)$. The commuting actions of $U(d_1)$ and $U(d_2)$ extend to the symmetric subspace, and $\Sym_t$ decomposes into isotypic subspaces
\begin{equation}\label{eq:howe_duality}
    \Sym_t( \CC^{d_1} \ot \CC^{d_2}) = \bigoplus_{\lambda \vdash t, \,l(\lambda) \leq \min(d_1,d_2)} V^{U(d_1)}_\lambda \ot V^{U(d_2)}_\lambda.
\end{equation}
Again, $V^{U(d)}_\lambda$ is a subspace that the irrep $\sigma_\lambda^{(d)}$ acts on. This duality and a number of similar dualities involving other classical groups are usually called Howe dualities~\cite{howe1995perspectives,goodman2009symmetry}.~\Cref{eq:howe_duality} and the Schur-Weyl duality~\cref{eq:def_SW} are very similar in nature, and in fact, can be derived from each other (see~\cite{howe1995perspectives} and~\cref{sec:expproof}).  
\subsubsection{\texorpdfstring{$S_t \text{ and }  S_t$}{}: the left-right action on the regular representation}\label{sec:pw} Here we discuss the decomposition of the regular representation. This is the most basic of the relations that we discuss in this section, and generically, is not considered a duality. We include it here as it has a similar structure to the above dualities.
\par
Consider a $t!$ dimensional vector space with a basis indexed by the group elements $\{e_ \pi\}_{\pi\in S_t}$. The left regular representation act on this space by $\rho_L (\alpha) e_ \pi =  e_ {\alpha \pi}$. Similarly, the right regular representation $\rho_R$ acts by $\rho_R(\alpha) e_ \pi  = e_ {\pi \alpha^{-1}}$. It is immediate to see that these two actions commute, and $\CC^{t!}$ bears a representation of $S_t\times S_t$. In a similar fashion as the above dualities, one can see that $\CC^{t!}$ decomposes in subspaces carrying a tensor factorization of the irreps of left and right symmetric groups:
\[ \CC^{t!} = \bigoplus_{\lambda \vdash t} V^{\text{Sym}}_\lambda \ot V^{\text{Sym}}_\lambda.\]
The first tensor factor in the summand corresponds to the action of the permutation group form the left, while the second one corresponds to the right action $\rho_R$.

\section{From permanents to the symmetric group and the Howe duality}
Equipped with the basics of the representation theory, we can start our analysis of the permanent moments. We still need more tools, which will be introduced as we proceed. 
\par
In this section, we prove one of our central results (\cref{eq:generalized-expansion-formula}) that provides a framework for systematically computing the moment of permanents of minors of random unitary matrices. As we will observe, our result can be used to give an alternative proof of the Gaussian expansion formula~\cref{eq:expansion-formula}.
\par
Recall $\Usub{d}{k}$, the ensemble of leading $k\times k$ minors of Haar random $d\times d$ unitary matrices\footnote{Choosing the leading minor has no significance, and any other minor could be used as well. This is because the random unitary ensemble is invariant under the permutation of rows and columns.}.  
We are interested in computing $\EE _{M \sim \Usub{d}{k}}\left|\Perm M\right| ^{2t}$, and
as a first step, we wish to write it in a different form. Using the simple identity $\Perm (A\ot \id_t) = \Perm(A)^t$, we have
\[ \EE _{M \sim \Usub{k}{d}}\left|\Perm M\right| ^{2t} = \EE _{M \sim \Usub{k}{d}}\left|\Perm (M \ot \id_t )\right| ^{2}.  \]
Let $\Hil = \CC^{dt}=\CC^d \ot \CC^t$. One basis for $\Hil$ is given by $e_ i \ot e_ j := e_{i,j}$, where $i \in \{1,\cdots, d\}$ and $j \in \{1,\cdots, t \}$. Consider a vector $\Omega \in \Hil^{\ot kt}$ defined as
\begin{equation}\label{eq:omega-def}  
\Omega = \bigotimes_{ 1 \leq i \leq k,1 \leq j \leq t} e_{i,j}  = \arraycolsep=1.2pt\def\arraystretch{1.5} \begin{array}{cccccccccc}
e_ {1,1} &\ot& e_ {1,2}&\ot& e_ {1,3} &\ot& \cdots &\ot& e_ {1,t}&\ot\\
e_ {2,1} &\ot& e_ {2,2}&\ot& e_ {2,3} &\ot& \cdots &\ot& e_ {2,t}&\ot\\
e_ {3,1} &\ot& e_ {3,2}&\ot& e_ {3,3} &\ot& \cdots &\ot& e_ {3,t}&\ot\\
\vdots && \vdots&& \vdots && \ddots && \vdots&\\
e_ {k,1} &\ot& e_ {k,2}&\ot& e_ {k,3} &\ot& \cdots &\ot& e_ {k,t}&\\
\end{array}.
\end{equation}
Let $V$ and $U$ be $t\times t$ and $d\times d$ matrices, respectively, and define $M$ to be the leading $k\times k$ minor of $U$. The permutation group $S_{tk}$ acts on the set of $kt$ pairs $\{(i,j)\}_{1\leq k, 1\leq t}$, and one can see that:
\[ \Perm (M\ot V) = \sum_{\pi \in S_{tk}}\,\,\prod_{i=1, j = 1, \pi((i,j)) = (r,s)}^{t,k} U_{ir}V_{js} = \langle \Omega_{\Sym}, \Sym_{kt}(U\ot V)\Omega_{\Sym}\rangle,  \]
where $\Omega_{\Sym} = \left(\frac{1}{\sqrt {(kt)!}} \sum_{r(\pi) \in S_{tk}} \pi\right) \Omega$ is the symmetrized and normalized version of $\Omega$ (see~\cref{eq:def_r} for the definition of $r$), and $\langle \cdot,\cdot \rangle$ is the standard complex inner product.

Using the Howe duality (\cref{eq:howe_duality}), we can decompose $\Sym_{kt} (U\ot V)$ into irreps of $U(d)$ and $U(t)$:
\[\Sym_{kt} (\CC^d\ot \CC^t) = \bigoplus_{\lambda \vdash kt, l(\lambda) \leq \min (d,t)} V_\lambda^{U(d)}\ot V_\lambda^{U(t)}, \]
where $U(d)$ acts on $V_\lambda^{U(d)}$ and $U(t)$ acts on $V_\lambda^{U(t)}$. Written in terms of the representation matrices, we have 
\begin{align}\label{eq:howe_body}
    \Sym_{kt} (U\ot V) = \bigoplus_{\lambda \vdash kt, l(\lambda) \leq \min (d,t)} \sigma_\lambda^{(d)}(U) \ot \sigma_\lambda^{(t)}(V).
\end{align}
For our calculations we set $V = \id_t$, and therefore,
\begin{multline*}
\EE _{U \sim \Usub d k}\left|\Perm (M \ot \id_t )\right| ^{2} =\\
\EE _{U \sim \Usub d k}\, \langle \Omega_{\Sym} ,\Sym_{kt}(U\ot \id_t) \Omega_{\Sym}\rangle \,\langle \Omega_{\Sym}, \Sym_{kt}(U^\dagger\ot \id_t) \Omega_{\Sym}\rangle= \\
\sum_{\lambda \vdash kt, l(\lambda) \leq \min (d,t)} \EE _{U \sim \Usub d k} \,\langle \Omega_{\Sym}, \sigma_\lambda(U) \ot \sigma_\lambda(\id_t)\Omega_{\Sym} \rangle\,\langle \Omega_{\Sym},  \sigma_\lambda(U^\dagger) \ot \sigma_\lambda(\id_t) \Omega_{\Sym}\rangle.
\end{multline*}
To derive the third line from the second line, we used~\cref{eq:howe_body} for $\Sym_{kt}(U\ot \id_t)$ and $\Sym_{kt}(U^\dagger\ot \id_t)$, and used Schur's lemma to argue that only the terms with coinciding irreps survive. Therefore, we only have one sum over the irreps.
\par
Let us define the projection of $ \Omega_{\Sym}$ to the isotypic subspace indexed by $\lambda$, i.e., $ \Omega_{\Sym}^{\lambda} := \sigma_\lambda(\id) \ot \sigma_\lambda(\id_t)\,\Omega_{\Sym}$, and define $P_{\Omega_{\Sym}^{\lambda}}$ to be the projector to $\Omega_{\Sym}^{\lambda}$ (In conventional quantum mechanics bra-ket notation, $P_{\Omega_{\Sym}^{\lambda}} = \ketbra{\Omega^\lambda_{\Sym}}{\Omega^\lambda_{\Sym}}$, when $\ket {\Omega^\lambda_{\Sym}} :=\Omega^\lambda_{\Sym}$). Using Schur's lemma, it is straightforward to see that 
\begin{multline}\label{eq:first_expansion}
    \sum_{\lambda \vdash kt, l(\lambda) \leq \min (d,t)} \EE _{U \sim \Usub d k} \,\langle \Omega_{\Sym}, \sigma_\lambda(U) \ot \sigma_\lambda(\id_t)\Omega_{\Sym} \rangle\,\langle \Omega_{\Sym},  \sigma_\lambda(U^\dagger) \ot \sigma_\lambda(\id_t) \Omega_{\Sym}\rangle = \\ 
    \sum_{\lambda \vdash kt, l(\lambda) \leq \min (d,t)} \frac{1}{\WD_\lambda(d)} \tr \left[ \left( \tr_{ V_\lambda^{U(d)}}(P_{\Omega^\lambda_{\Sym}}) \right)^2\right].
\end{multline}
Where $ \tr_{ V_\lambda^{U(d)}}$ is the partial trace. Let us explain the tensor factorization once more in details: in order to compute the moments of permanents, one has to consider the isotypic subspaces individually, and consider the restriction $\Omega_{\Sym}^{\lambda}$ of $\Omega_{\Sym}$. According to Howe's duality, this vector lives in the tensor product vector space  $V_\lambda^{U(d)}\ot V_\lambda^{U(t)}$, and one can compute the partial trace of the projector to this vector with respect to one of these tensor factors: $ \tr_{ V_\lambda^{U(d)}}(P_{\Omega^\lambda_{\Sym}})$. The rest is straightforward algebra.
\par 
After simplifying this expression, we obtain the following result:
\begin{thm} \label{thm:main_expansion_formula}
Consider a $k \times t$ rectangle and let $R$ and $C$ be the row preserving and the column preserving subgroups (see~\cref{dfn:row-col-pres}). Then 
\begin{align}\label{eq:main_avg_formula}
\EE_{M\sim \Usub d k} \left| \Perm(M) \right|^{2t} = \sum_{\lambda \vdash kt, l(\lambda)\leq \min(k,t)} \frac{1}{\WD_\lambda(d)} \left(\frac{f^\lambda}{(kt)!}\right)^2\tr \left[\rho_\lambda(RCRC) \right],
\end{align}
where $f^\lambda$ is the dimension of the irrep of $S_{tk}$ with the Young diagram $\lambda$, $\WD_\lambda(d)$ is the dimension of the irrep $U(d)$ with the same Young diagram, and $R = \sum_{r \in R} r$, and $C= \sum_{c\in C} c$ according to~\cref{rem:notational-remark} 
\end{thm}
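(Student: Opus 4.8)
The plan is to pick up exactly where the discussion above leaves off, namely at \cref{eq:first_expansion}, and to identify the quantity $\tr_{V_\lambda^{U(d)}}(P_{\Omega^\lambda_{\Sym}})$ as an operator on $V_\lambda^{U(t)}$ whose square has a trace expressible through the Young symmetrizer. First I would make the Howe duality concrete for the vector $\Omega$ of \cref{eq:omega-def}: the tensor factorization of $\CC^d\ot\CC^t$ together with the fact that $\Omega$ is a product of basis vectors $e_{i,j}$ means that the stabilizer of $\Omega$ inside $S_{kt}$ acting on $\Hil^{\ot kt}$ is exactly the row-preserving subgroup $R$ of the $k\times t$ grid (permuting the $t$ replicas within a fixed row $i$ leaves $e_{i,1}\ot\cdots\ot e_{i,t}$ untouched), while the column-preserving subgroup $C$ arises from relabelling the $\CC^d$-index. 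So $\Omega_{\Sym}=\tfrac{1}{\sqrt{(kt)!}}\sum_\pi r(\pi)\Omega$ has a reduced description in terms of the $R$- and $C$-cosets, and projecting to the $\lambda$-isotypic block replaces the normalized symmetrizer by the projector onto $\rho_\lambda$.

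The key computational step is to evaluate $\tr[(\tr_{V_\lambda^{U(d)}}P_{\Omega^\lambda_{\Sym}})^2]$ in the group algebra. I would write $P_{\Omega^\lambda_{\Sym}}$ using the central projector $z_\lambda=\tfrac{f^\lambda}{(kt)!}\sum_\pi \chi_{\lambda}(\pi^{-1})\pi$ of the regular representation onto the $\lambda$-isotypic component, observe that the $\CC^d$-partial trace of the Howe summand corresponds, under the group-algebra/Schur--Weyl dictionary, to restricting attention to the right-hand tensor factor carrying $\sigma_\lambda^{(t)}$, and that contracting $\Omega$ against its symmetrized partner in that factor produces the sum over $R$ on one side and over $C$ on the other. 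Composing two such contractions (one from each inner product in \cref{eq:first_expansion}, with a $U$ and a $U^\dagger$) yields the alternating product $R C R C$ inside $\rho_\lambda$. Tracking the prefactors: each symmetrizer contributes a $1/\sqrt{(kt)!}$, the isotypic projector contributes $f^\lambda$, and Schur's lemma over the $U(d)$-factor contributes the $1/\WD_\lambda(d)$ already visible in \cref{eq:first_expansion}; collecting these gives exactly $\frac{1}{\WD_\lambda(d)}\big(\frac{f^\lambda}{(kt)!}\big)^2\tr[\rho_\lambda(RCRC)]$, summed over $\lambda\vdash kt$ with $l(\lambda)\leq\min(d,t)$. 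I would then observe that the summand vanishes unless $l(\lambda)\leq k$ as well: since $\rho_\lambda(C)$ is (proportional to) the projector attached to the column group of the $k\times t$ grid and $\rho_\lambda(R)$ to the row group, $\tr[\rho_\lambda(RCRC)]\neq 0$ forces $\lambda$ to contain both a $k\times 1$ and a $1\times t$ "frame", hence $l(\lambda)\le k$; this sharpens the constraint to $l(\lambda)\le\min(k,t)$ and makes the formula symmetric in $k\leftrightarrow t$.

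The main obstacle I anticipate is bookkeeping the two dualities simultaneously and not double-counting symmetry factors: Howe duality lives on $\Sym_{kt}(\CC^d\ot\CC^t)$, whereas the statement is phrased purely in $S_{kt}$ via $\rho_\lambda(RCRC)$, so I must pass cleanly from "partial trace over $V_\lambda^{U(d)}$ of a rank-one projector" to "trace of a product of two idempotents in the group algebra of $S_{kt}$". The cleanest route is probably to first prove the identity in the $d\to\infty$ limit, where $\WD_\lambda(d)/d^{kt}\to f^\lambda/(kt)!$ by \cref{eq:limiting_behavior_dimension} and the unitary minor converges to a Gaussian matrix, recovering \cref{eq:expansion-formula}; this checks the $\tr[\rho_\lambda(RCRC)]$ factor independently. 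Then I would handle finite $d$ by noting that the only $d$-dependence enters through Schur orthogonality on the $U(d)$-factor, i.e. through $\EE_{U\sim\Usub d k}[\sigma_\lambda(U)\ot\sigma_\lambda(U^\dagger)]$ projecting onto the trivial $U(d)$-subrepresentation with the $1/\WD_\lambda(d)$ normalization --- which is exactly \cref{eq:first_expansion}. Once the $U(d)$-average is disposed of, all remaining manipulations are internal to $\CC[S_{kt}]$ and are the routine algebra the text alludes to ("the rest is straightforward algebra"), culminating in \cref{eq:main_avg_formula}. \QEDB
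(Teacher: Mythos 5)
Your outline correctly identifies the starting point (\cref{eq:first_expansion}) and the general shape of the argument, but there are three concrete problems, the last of which is a genuine gap that leaves the heart of the theorem unproved.

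First, the stabilizer claim is false as stated. You assert that ``the stabilizer of $\Omega$ inside $S_{kt}$ is exactly the row-preserving subgroup $R$'' because ``permuting the $t$ replicas within a fixed row $i$ leaves $e_{i,1}\ot\cdots\ot e_{i,t}$ untouched.'' But $e_{i,1},\dots,e_{i,t}$ are \emph{distinct} basis vectors of $\CC^d\ot\CC^t$, so $\Omega$ has trivial stabilizer in $S_{kt}$. What is true — and what the proof actually needs — is that after splitting $\Hil^{\ot kt}\cong (\CC^d)^{\ot kt}\ot (\CC^t)^{\ot kt}$ and writing $\Omega=\Omega^A\ot\Omega^B$, the $\CC^d$-factor $\Omega^A$ has stabilizer $R$ (each row of $\Omega^A$ is a repeated $e_i$), while $\Omega^B$ has stabilizer $C$. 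This distinction is not cosmetic: it is exactly the statement $\tr[P_{\Omega_A}r_A(\pi)]=\delta_{\pi\in R}$ and $\tr[P_{\Omega_B}r_B(\pi)]=\delta_{\pi\in C}$ that makes $R$ and $C$ appear separately rather than as $R\cap C = \{e\}$.

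Second, and more seriously, the central computation is asserted rather than carried out. You write that ``contracting $\Omega$ against its symmetrized partner\dots produces the sum over $R$ on one side and over $C$ on the other'' and that ``composing two such contractions\dots yields the alternating product $RCRC$,'' and then that ``collecting [the prefactors] gives exactly'' the stated formula. But obtaining the factor $(f^\lambda/(kt)!)^2\tr[\rho_\lambda(RCRC)]$ from $\tr\!\bigl[\bigl(\tr_{V_\lambda^{U(d)}}P_{\Omega_\Sym^\lambda}\bigr)^2\bigr]$ \emph{is} the theorem; it is not something one can ``observe.'' One actually has to expand $\Omega_\Sym^\lambda$ as a double sum over $S_{kt}$ weighted by characters, square it, insert the swap on the $A$-factor, reduce the resulting eightfold character sum using the $\delta_{\pi\in R}$ and $\delta_{\pi\in C}$ constraints, and then repeatedly apply the convolution identity $\sum_{\pi}\chi_\lambda(\alpha\pi)\chi_\lambda(\pi^{-1}\beta)=\frac{(kt)!}{f^\lambda}\chi_\lambda(\alpha\beta)$ to collapse it to $\chi_\lambda(c_1r_1c_2r_2)$. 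Without this, the powers of $f^\lambda$ and $(kt)!$ in the prefactor, and the fact that exactly $RCRC$ (and not, say, $RCCR$ or $RC$) appears, are unsupported.

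Third, the argument for the depth constraint $l(\lambda)\le\min(k,t)$ is not correct. You claim that $\tr[\rho_\lambda(RCRC)]\ne 0$ ``forces $\lambda$ to contain both a $k\times1$ and a $1\times t$ frame, hence $l(\lambda)\le k$.'' Containing a $k\times 1$ column would rather suggest $l(\lambda)\ge k$, and in any case nothing here establishes the bound. The actual argument is representation-theoretic: $\rho_\lambda(R)$ is proportional to the projector onto the $R$-invariants of $\rho_\lambda|_R$, so it is nonzero iff $\rho_\lambda$ occurs in $\operatorname{Ind}_R^{S_{kt}}\mathbf{1}$. One realizes this induced module on $(\CC^d)^{\ot kt}$ via $(\sum_{r\in R}r)U^{\ot kt}\cong[\Sym_t(U)]^{\ot k}$, and the Pieri rule shows that only partitions of depth at most $k$ appear; the $l(\lambda)\le t$ half comes from the symmetric argument with $C$. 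Your $d\to\infty$ consistency check is a reasonable sanity test but cannot substitute for the missing algebra, since the $d$-independence of $\tr\!\bigl[\bigl(\tr_{V_\lambda^{U(d)}}P_{\Omega_\Sym^\lambda}\bigr)^2\bigr]$ is itself a nontrivial output of the computation, not an input you can use to shortcut it.
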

The proof of this theorem is lengthy and will be reported in the next subsection. Before going to the proof we discuss some of the consequences of~\cref{thm:main_expansion_formula}.
\par
We can immediately derive the formula for the average of permanents of Gaussian matrices from~\cref{eq:main_avg_formula}. Consider the limit $d\rightarrow \infty$ while keeping $k$ fixed. Matrix elements of $M\sim \Usub d k$ become more and more independent and will be distributed according to the Gaussian distribution. However, it is easy to see that the standard deviation of the individual matrix elements of $M$ is $\frac{1}{\sqrt d}$, therefore, we need to normalize $M$ by a factor of $d^{1/2}$ to obtain a random i.i.d. complex Gaussian matrix (where the standard deviation of individual elements is $1$). Hence, we have the following easy remark:
\begin{rem}\label{rem:gauss_usub}
Let $M'$ be the leading $k\times k$ minor of $d\times d$ Haar random unitary matrices. Let $M$ be $d^{1/2} M'$, as $d\rightarrow\infty$. Then, the different matrix elements of $M$ are i.i.d Gaussians with standard deviation of $1$. 
\end{rem}
We conclude that, 
\begin{multline*} 
\EE_{M \sim \Gauss k} \left| \Perm(M) \right|^{2t} = \lim_{d\rightarrow \infty} \EE_{M \sim \Usub d k} \left| \Perm(d^{1/2} M) \right|^{2t} =
\\
 \sum_{\lambda} \left[\lim_{d\rightarrow \infty} \frac{d^{kt/2}}{\WD_\lambda(d)} \right] \left(\frac{f^\lambda}{(kt)!}\right)^2\tr \left[\rho_\lambda(RCRC) \right] =  \frac{1}{(kt)!}\sum_{\lambda} f^\lambda\tr \left[\rho_\lambda(RCRC) \right] =\frac{\tr (RCRC)}{(kt)!},
\end{multline*}
where the first equality follows from~\cref{rem:gauss_usub}, the second one uses~\cref{eq:main_avg_formula}, the third one is consequence of~\cref{eq:limiting_behavior_dimension}, and the last one is simply the decomposition of the regular representation (see, e.g.,~\cref{eq:limiting_behavior_dimension} and the text that follows). This provides an alternative proof of~\cref{thm:comb}.
\par 
We also obtain the generalized version of the size-moment duality:
\begin{cor}[size-moment duality]\label{cor:size-moment-general}
From~\cref{thm:main_expansion_formula}, it is easy to observe that as long as $k,t\leq d$,
\[\EE_{M \sim \Usub d k} \left| \Perm(M) \right|^{2t} =\EE_{M \sim \Usub d t} \left| \Perm(M) \right|^{2k}. \]
In other words, one can exchange the matrix dimension and the moment when calculating moments of permanents.
\end{cor}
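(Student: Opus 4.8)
The plan is to read off the duality directly from the structure of the expansion formula~\cref{eq:main_avg_formula} in~\cref{thm:main_expansion_formula}, which is the key input. The right-hand side of that formula is a sum over Young diagrams $\lambda \vdash kt$ with $l(\lambda) \le \min(k,t)$ of the quantity
\[
\frac{1}{\WD_\lambda(d)}\left(\frac{f^\lambda}{(kt)!}\right)^2 \tr\left[\rho_\lambda(RCRC)\right].
\]
I would argue that every factor here is symmetric under the exchange $k \leftrightarrow t$. First, $\lambda \vdash kt$ and the constraint $l(\lambda) \le \min(k,t)$ are manifestly symmetric in $k$ and $t$, so the index set of the sum is unchanged. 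Second, $\WD_\lambda(d)$ and $f^\lambda$ depend only on the Young diagram $\lambda$ and on $d$, not on the individual values of $k$ and $t$; likewise $(kt)!$ is symmetric. So the only thing that needs an argument is that $\tr[\rho_\lambda(RCRC)]$ is invariant under swapping the roles of the row-preserving and column-preserving subgroups of the $k \times t$ grid versus the $t \times k$ grid.

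For that step I would observe that transposing the $k\times t$ rectangle to a $t\times k$ rectangle is a relabeling of the $kt$ cells, i.e., it is implemented by conjugation by a fixed permutation $\tau \in S_{kt}$ (the one sending cell $(i,j)$ of the $k\times t$ grid to cell $(j,i)$ of the $t\times k$ grid). Under this conjugation the row-preserving subgroup $R$ of the $k\times t$ grid maps isomorphically onto the column-preserving subgroup $C'$ of the $t\times k$ grid, and $C$ maps onto $R'$. Hence $\tau(RCRC)\tau^{-1} = C'R'C'R'$ as elements of the group algebra, and therefore $\rho_\lambda(RCRC)$ is conjugate to $\rho_\lambda(C'R'C'R')$, so they have equal trace. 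Finally, $\tr[\rho_\lambda(C'R'C'R')] = \tr[\rho_\lambda(R'C'R'C')]$ by cyclicity of the trace. Putting these together, the summand for the $k\times t$ problem at diagram $\lambda$ equals the summand for the $t\times k$ problem at the same $\lambda$, and summing over $\lambda$ gives $\EE_{M\sim \Usub d k}|\Perm(M)|^{2t} = \EE_{M\sim \Usub d t}|\Perm(M)|^{2k}$, which is the claim; note both sides require $k,t\le d$ for~\cref{thm:main_expansion_formula} to apply, which is exactly the hypothesis.

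I do not expect any serious obstacle here: once~\cref{thm:main_expansion_formula} is in hand, the corollary is essentially a symmetry observation, and the only mildly delicate point is making the bookkeeping of the conjugating permutation $\tau$ precise (verifying that $\tau R \tau^{-1}$ is literally the column-preserving subgroup of the transposed grid and vice versa, which is immediate from~\cref{dfn:row-col-pres}). An alternative, even shorter route in the Gaussian limit would be to invoke~\cref{cor:gauss_duality} together with~\cref{rem:gauss_usub}, but that only handles $d\to\infty$; the honest proof of~\cref{cor:gen-gauss_duality} for finite $d$ really does want the transpose-symmetry argument applied to~\cref{eq:main_avg_formula} directly, so that is the route I would write up.
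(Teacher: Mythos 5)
Your proof is correct and matches the paper's intended argument: the paper simply asserts the corollary is "easy to observe" from~\cref{thm:main_expansion_formula}, and the content of that observation is precisely the transposition-symmetry you spell out, namely that conjugation by the permutation $\tau$ implementing the transpose of the $k\times t$ grid carries $R$ to $C'$ and $C$ to $R'$, so $\tr[\rho_\lambda(RCRC)]=\tr[\rho_\lambda(R'C'R'C')]$ while the remaining factors $\WD_\lambda(d)$, $f^\lambda$, $(kt)!$, and the index set $\{\lambda\vdash kt:\ l(\lambda)\le\min(k,t)\}$ are manifestly $k\leftrightarrow t$ symmetric. The only thing the paper leaves to the reader is exactly the bookkeeping you supply, so there is nothing to add.
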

\subsection{Proof of\texorpdfstring{~\cref{thm:main_expansion_formula}}{}}\label{sec:expproof}
In this section, we start from~\cref{eq:first_expansion} and prove~\cref{thm:main_expansion_formula}. Unlike the previous discussion, we use the Schur-Weyl duality as we need to obtain more detailed information than what the Howe duality naively presents. In fact, the early parts of our calculations mimic the proof of the Howe duality from the Schur-Weyl duality~\cite{howe1995perspectives}. This section is rather detailed, and we encourage the reader to skip it in the first reading of the manuscript.
\par
We wish to analyze $P_{\Omega_{\Sym}^\lambda}$ (see the text above~\cref{eq:first_expansion}) as our first step. Recall the definition of our initial vector space $\Hil = \CC^d \ot \CC^t$, and the definition of $\Omega$ in~\cref{eq:omega-def}. 

One may naturally assume that the tensor factors of $\Hil^{\ot kt}$ are ordered in following way:
\[ \Hil^{\ot kt}  = \arraycolsep=1.2pt\def\arraystretch{1.5} \begin{array}{cccccccccc}
(\CC^d\ot \CC^t) &\ot& (\CC^d\ot \CC^t)&\ot& (\CC^d\ot \CC^t) &\ot& \cdots &\ot& (\CC^d\ot \CC^t)&\ot\\
(\CC^d\ot \CC^t) &\ot& (\CC^d\ot \CC^t)&\ot& (\CC^d\ot \CC^t) &\ot& \cdots &\ot& (\CC^d\ot \CC^t)&\ot\\
\vdots && \vdots&& \vdots && \ddots && \vdots&\\
(\CC^d\ot \CC^t) &\ot& (\CC^d\ot \CC^t)&\ot& (\CC^d\ot \CC^t) &\ot& \cdots &\ot& (\CC^d\ot \CC^t)&
\end{array}.\]
To proceed, we need to re-arrange the tensor factors in a different way:
\[ \Hil^{\ot kt}  = (\CC^d)^{\ot kt} \ot (\CC^t)^{\ot kt}  =  \arraycolsep=1.2pt\def\arraystretch{1.5} \begin{array}{cccccccccc}
\CC^d &\ot& \CC^d&\ot& \CC^d &\ot& \cdots &\ot& \CC^d&\ot\\
\CC^d &\ot& \CC^d&\ot& \CC^d &\ot& \cdots &\ot& \CC^d&\ot\\
\vdots && \vdots&& \vdots && \ddots && \vdots&\\
\CC^d &\ot& \CC^d&\ot& \CC^d &\ot& \cdots &\ot& \CC^d&
\end{array} \quad \ot\quad \begin{array}{cccccccccc}
\CC^t &\ot& \CC^t&\ot& \CC^t &\ot& \cdots &\ot& \CC^t&\ot\\
\CC^t &\ot& \CC^t&\ot& \CC^t &\ot& \cdots &\ot& \CC^t&\ot\\
\vdots && \vdots&& \vdots && \ddots && \vdots&\\
\CC^t &\ot& \CC^t&\ot& \CC^t &\ot& \cdots &\ot& \CC^t&
\end{array}.\]
We call the first sector, i.e., $(\CC^d)^{\ot kt}$, the vector space $\Hil_A$, and the second sector the vector space $\Hil_B$. When acting by $(U\ot \id_t)^{\ot kt}$ on $\Hil^{\ot kt}$, we assume that $U$ acts on $\Hil_A$.
Correspondingly, we write $ \Omega$ as the tensor product of $\Omega^A$ and $\Omega^B$:
\[ \Omega  = \Omega^A \ot  \Omega^B  =  \arraycolsep=1.2pt\def\arraystretch{1.5} \begin{array}{cccccccccc}
e_ {1} &\ot& e_ {1}&\ot& e_ {1} &\ot& \cdots &\ot& e_ {1}&\ot\\
e_ {2} &\ot& e_ {2}&\ot& e_ {2} &\ot& \cdots &\ot& e_ {2}&\ot\\
e_ {3} &\ot& e_ {3}&\ot& e_ {3} &\ot& \cdots &\ot& e_ {3}&\ot\\
\vdots && \vdots&& \vdots && \ddots && \vdots&\\
e_ {k} &\ot& e_ {k}&\ot& e_ {k} &\ot& \cdots &\ot& e_ {k}&\\
\end{array}
\quad \ot\quad 
\begin{array}{cccccccccc}
e_ {1} &\ot& e_ {2}&\ot& e_ {3} &\ot& \cdots &\ot& e_ {t}&\ot\\
e_ {1} &\ot& e_ {2}&\ot& e_ {3} &\ot& \cdots &\ot& e_ {t}&\ot\\
e_ {1} &\ot& e_ {2}&\ot& e_ {3} &\ot& \cdots &\ot& e_ {t}&\ot\\
\vdots && \vdots&& \vdots && \ddots && \vdots&\\
e_ {1} &\ot& e_ {2}&\ot& e_ {3} &\ot& \cdots &\ot& e_ {t}&\\
\end{array}.\]
\par 
According to the Schur-Weyl duality (\cref{sec:sw}) the vector spaces $(\CC^d)^{\ot kt}$ and $(\CC^t)^{\ot kt}$ decompose into the irreps:
\begin{align*}
    &\Hil_A = \bigoplus_{\lambda \vdash kt, l(\lambda)\leq d}  V^{\text{Sym},A}_\lambda \ot V^{U(d),A}_\lambda, \quad \text{and},\\
    &\Hil_B = \bigoplus_{\lambda \vdash kt, l(\lambda)\leq t} V^{\text{Sym},B}_\lambda \ot V^{U(t),B}_\lambda.
\end{align*} 
The symmetric group $S_{kt}$ acts on the $V^{\Sym}$ sectors through $\rho_\lambda^A$ and $\rho_\lambda^B$, $U(d)$ acts on $V^{U(d),A}$ sector via $\sigma_\lambda^A$, and $U(t)$ acts on $V^{U(t),B}$ with $\sigma_\lambda^B$. With this notation, 
\begin{equation}\label{eq:defHil} \Hil^{\ot kt} = \bigoplus_{\lambda_1 \vdash kt, l(\lambda_1)\leq d,\lambda_2 \vdash kt, l(\lambda_2)\leq t}  V^{\Sym,A}_{\lambda_1} \ot V^{U(d),A}_{\lambda_1} \ot V^{\Sym,B}_{\lambda_2} \ot V^{U(t),B}_{\lambda_2} .
\end{equation}
\par 
As usual, we represent the action of the permutation group on the $\Hil_A$ by $r_A(\pi)$, and on $\Hil_B$ by $r_B(\pi)$, for $\pi \in S_{tk}$.
\par 
Now, we re-write $ \Omega_{\Sym}$ as:
\begin{equation}\label{eq:4597} 
\Omega_{\Sym} = \frac{1}{\sqrt{(kt)!}} \sum_{\pi \in S_{kt}} r_A(\pi)\ot r_B(\pi) \,{\Omega} = \frac{1}{\sqrt{(kt)!}} \sum_{\pi \in S_{kt}} \bigoplus_{\lambda_1,\lambda_2} \rho^A_{\lambda_1}(\pi) \ot \sigma^A_{\lambda_1}(\id) \ot \rho^B_{\lambda_2}(\pi) \ot \sigma^B_{\lambda_2}(\id)\,\Omega.
\end{equation}
Consider the vector spaces $V_\lambda^{\Sym,A}$ and $V_\lambda^{\Sym,B}$. One can find real and identical basis vectors $e_i^{A,\lambda}$ and $e_i^{B,\lambda}$, with $i=1,\cdots, f^\lambda$ for these vector spaces. With this, we can define the standard maximally entangled vector between these spaces:
\[\phi^+_\lambda = \frac{1}{\sqrt{f^\lambda}}\sum_{i=1}^{f^\lambda}  e_i^{A,\lambda}\ot e_i^{B,\lambda}. \]
Furthermore, we can define the rank $1$ projector to $\phi^+_\lambda$ as $P^+_\lambda$ (Again, using Dirac's notation $P^+_\lambda = \ketbra{\phi^+_\lambda}{\phi^+_\lambda}$). As a result of Schur's lemma and reality of permutation representations, we can easily see that $\phi^+_\lambda$ is the unique $+1$ eigenvalue of all $\rho^A_\lambda(\pi)\ot \rho^B_\lambda(\pi)$, for all $\pi \in S_{kt}$. Therefore, it is easy to see that
\[ \frac{1}{(kt)!} \sum_{\pi \in S_{kt}} \rho^A_{\lambda_1}(\pi) \ot \rho^B_{\lambda_2}(\pi) = \delta_{\lambda_1,\lambda_2} \,P^+_{\lambda_1} \,.\]
\par 
Substituting this relation into~\cref{eq:4597}, we get that,
\[  \Omega_{\Sym} = \sqrt{(kt)!} \,\bigoplus_{\lambda} P^+_{\lambda}  \ot \sigma^A_{\lambda}(\id) \ot \sigma^B_{\lambda}(\id)\,  \Omega. \]
Note that there is only one sum over the Young diagrams, which means that the support of $\Omega_{\Sym}$ lies in the diagonal $\lambda_1=\lambda_2$ sector of the vector space~\cref{eq:defHil}. 
\par
It is straightforward to see that for every vector $\Omega$, one can always find a vector $\omega_\lambda \in V^{U(d),A}_\lambda \ot V^{U(d),B}_\lambda$, such that \[ P^+_\lambda \Omega = \phi^+_\lambda \ot \omega_\lambda.\]
(Again, using Dirac's notation $\ket \omega= \braket {\phi^+_\lambda| \Omega}$, where $\ket \Omega=\Omega$). Therefore, 
\begin{equation}\label{eq:29081}
    \Omega_{\Sym }= \bigoplus_{\lambda} \phi^+_\lambda  \ot  \omega_\lambda
\end{equation}
\par 
Recall that our goal was to derive an explicit expression for $\Omega_{\Sym}^\lambda$. We can read $\Omega_{\Sym}^\lambda$ from $\Omega_{\Sym}$ using the relation $\frac{f^\lambda}{(kt)!} \sum_{\pi \in S_{kt}} \chi_\lambda(\pi) r_A(\pi) = \id_{V_\lambda^{\Sym,A}} \ot \id_{V_\lambda^{U(d),A}}$, where $\chi_\lambda$ is the character of the symmetric group. Using this, we get that
\[  \Omega_{\Sym}^\lambda  =\frac{f^\lambda}{(kt)!} \sum_{\pi \in S_{kt}} \chi_\lambda(\pi) r_A(\pi) \,{\Omega}_{\Sym}  = \frac{f^\lambda}{((kt)!)^{3/2}} \sum_{\pi_1,\pi_2 \in S_{kt}} \chi_\lambda(\pi_1 \pi_2^{-1}) r_A(\pi_1) \ot r_B(\pi_2)\, \Omega.\]
Now, we can compute the quantity $\tr \left[ \left( \tr_{ V_\lambda^{U(d)}}(P_{\Omega^\lambda_{\Sym}}) \right)^2\right]$. Consider two copies of the state $ P_{\Omega^\lambda_{\Sym}}$: \[ (P_{\Omega^\lambda_{\Sym}})_1 \ot (P_{\Omega^\lambda_{\Sym}})_2 \in V^{\Sym,A_1}_{\lambda_1} \ot V^{U(d),A_1}_{\lambda_1} \ot V^{\Sym,B_1}_{\lambda_2} \ot V^{U(d),B_1}_{\lambda_2} \ot V^{\Sym,A_2}_{\lambda_1} \ot V^{U(d),A_2}_{\lambda_1} \ot V^{\Sym,B_2}_{\lambda_2} \ot V^{U(d),B_2}_{\lambda_2} . \]
Let $F_{V_\lambda^{U(d),A}}$ be the operator that swaps the vector spaces $V_\lambda^{U(d),A_1}$ and $V_\lambda^{U(d),A_2}$, and $F_{V_\lambda^{\Sym,A}}$ be the corresponding operator for the vector spaces $V^{\Sym,A_2}_{\lambda}$ and $V^{\Sym,A_1}_{\lambda}$. We have 
\[\tr \left[ \left( \tr_{ V_\lambda^{U(d)}}(P_{\Omega^\lambda_{\Sym}}) \right)^2\right] = \tr{\left[ (P_{\Omega^\lambda_{\Sym}})^{\ot 2} F_{V_\lambda^{U(d),A}}\right]} = \tr{\left[ (P_{\omega^\lambda_{\Sym}})^{\ot 2} F_{V_\lambda^{U(d),A}}\right]},\] 
where $P_{\omega^\lambda_{\Sym}}$ is the projector to ${\omega^\lambda_{\Sym}}$. Since $f^\lambda \tr{\left[ (P^+)^{\ot 2} F_{V_\lambda^{\Sym,A}}\right] }=1$, we conclude that,
\begin{multline*}
\tr \left[ \left( \tr_{ V_\lambda^{U(d)}}(P_{\Omega^\lambda_{\Sym}}) \right)^2\right] =
\tr{\left[ (P_{\omega^\lambda_{\Sym}})^{\ot 2} F_{V_\lambda^{U(d),A}}\right]}
=\\
f^\lambda \tr{\left[ (P_{\omega^\lambda_{\Sym}})^{\ot 2} F_{V_\lambda^{U(d),A}} \ot\left[ (P^+)^{\ot 2} F_{V_\lambda^{\Sym,A}}\right] \right]} =\\
f^\lambda \tr{\left[ (P_{\Omega^\lambda_{\Sym}})^{\ot 2} (F_{V_\lambda^{\Sym,A}}  \ot F_{V_\lambda^{U(d),A}} ) \right]}=  f^\lambda \tr{\left[ (P_{\Omega^\lambda_{\Sym}})^{\ot 2} F_A \right]},
\end{multline*}
with $F_A$ being the operator that simply swaps $\Hil_A$ tensor factors of the two copy vector space.
Substituting~\cref{eq:29081} into this expression leads to the following equation:
\begin{multline*}
    \tr \left[ \left( \tr_{ V_\lambda^{U(d)}}(P_{\Omega^\lambda_{\Sym}}) \right)^2\right]  = \\
    \frac{(f^\lambda)^5}{((kt)!)^6}\sum_{\alpha_1,\alpha_2,\beta_1,\beta_2,\gamma_1,\gamma_2,\delta_1,\delta_2 \in S_{kt}} \chi (\alpha_1 \alpha_2^{-1}) \chi (\beta_1 \beta_2^{-1}) \chi (\gamma_1 \gamma_2^{-1}) \chi (\delta_1 \delta_2^{-1}) \times \\ \tr{\left[r_{A_1}(\alpha_1)r_{B_1}(\alpha_2) r_{A_2}(\beta_1)r_{B_2}(\beta_2) (P_{\Omega_A} \ot P_{\Omega_B})^{\ot 2} r_{A_1}(\gamma_1)r_{B_1}(\gamma_2) r_{A_2}(\delta_1)r_{B_2}(\delta_2)    F_A     \right] }. 
\end{multline*}
The trace in the above expression reduces to
\begin{multline*}
\tr{\left[r_{A_1}(\alpha_1)r_{B_1}(\alpha_2) r_{A_2}(\beta_1)r_{B_2}(\beta_2) (P_{\Omega_A} \ot P_{\Omega_B})^{\ot 2} r_{A_1}(\gamma_1)r_{B_1}(\gamma_2) r_{A_2}(\delta_1)r_{B_2}(\delta_2)    F_A     \right] } = \\
\tr{\left[r_{A_1}(\alpha_1) r_{A_2}(\beta_1) (P_{\Omega_A} )^{\ot 2} r_{A_1}(\gamma_1) r_{A_2}(\delta_1)F_A \right] } \times 
\tr{\left[ r_{B}(\beta_2) P_{\Omega_B} r_{B}(\delta_2) \right]} \times \tr{\left[r_{B}(\alpha_2)P_{ \Omega_B}  r_{B}(\gamma_2) \right]} = \\
\tr{\left[ r_{A}(\beta_1) P_{\Omega_A}  r_{A}(\gamma_1)\right] }  \times \tr{\left[ r_{A}(\alpha_1) P_{\Omega_A}  r_{A}(\delta_1)\right] }\times 
\tr{\left[ r_{B}(\beta_2) P_{\Omega_B} r_{B}(\delta_2) \right]} \times \tr{\left[r_{B}(\alpha_2) P_{\Omega_B}  r_{B}(\gamma_2) \right]} = \\
\tr{\left[P_{\Omega_A}  r_{A}(\gamma_1\beta_1)\right] }  \times \tr{\left[ P_{\Omega_A}  r_{A}(\delta_1\alpha_1)\right] }\times 
\tr{\left[ P_{\Omega_B} r_{B}(\delta_2\beta_2) \right]} \times \tr{\left[ P_{\Omega_B}  r_{B}(\gamma_2\alpha_2) \right]}.
\end{multline*}
As define in~\cref{dfn:row-col-pres}, let $R$ be the row preserving group and $C$ be the column preserving group for the $k \times t$ rectangle above. It is evident that for any $\pi \in S_{kt}$, $\tr \left[P_{\Omega_A} r_A(\pi) \right] = \delta_{\pi \in R}$, and $\tr \left[P_{\Omega_B} r_B(\pi) \right] = \delta_{\pi \in C}$. Therefore, 
\begin{multline*}
    \tr \left[ \left( \tr_{ V_\lambda^{U(d)}}(P_{\Omega^\lambda_{\Sym}}) \right)^2\right]   = \\
    \frac{(f^\lambda)^5}{((kt)!)^6}\sum_{\alpha_1,\alpha_2,\beta_1,\beta_2,\gamma_1,\gamma_2,\delta_1,\delta_2 \in S_{kt}} \chi (\alpha_1 \alpha_2^{-1}) \chi (\beta_1 \beta_2^{-1}) \chi (\gamma_1 \gamma_2^{-1}) \chi (\delta_1 \delta_2^{-1})\,  \delta_{\gamma_1 \beta_1 \in R}\, \delta_{\delta_1 \alpha_1 \in R} \,\delta_{\delta_2 \beta_2 \in C}\, \delta_{\gamma_2 \alpha_2 \in C} \\ = 
    \frac{(f^\lambda)^5}{((kt)!)^6}\sum_{r_1,r_2 \in R, c_1,c_2 \in C, \alpha_1,\alpha_2,\beta_1,\beta_2 \in S_{tk}} \chi (\alpha_1 \alpha_2^{-1}) \chi (\alpha_2 c_1 r_1 \beta_1^{-1} ) \chi (\beta_1 \beta_2^{-1})  \chi (\beta_2  c_2 r_2 \alpha_1^{-1}). 
\end{multline*}
The following expression uses the Schur's lemma to "stitch" the different terms:
\begin{multline*}
 \sum_{\pi \in S_{tk}} \chi_\lambda ( \alpha \pi) \chi_\lambda (\pi^{-1} \beta)  = \sum_{\pi \in S_{tk}} \chi_\lambda ( \alpha \pi) \chi_\lambda (\beta^{-1} \pi) = \tr \left( \left[ \rho_\lambda ( \alpha) \ot  \rho_\lambda ( \beta^{-1}) \right]\left[ \sum_{\pi \in S_{tk}} \rho_\lambda(\pi)\ot\rho_\lambda(\pi) \right]\right) = \\
 ((kt)!) \tr \left( \left[ \rho_\lambda ( \alpha) \ot  \rho_\lambda ( \beta^{-1}) \right] P^+_{\lambda}  \right) = ((kt)!) \tr \left( \left[ \rho_\lambda ( \alpha)  \rho_\lambda ( \beta) \ot  \id \right] P^+_{\lambda}  \right)  = \frac{(kt)!}{f^\lambda} \chi_\lambda(\alpha \beta),
\end{multline*}
where $P^+_\lambda$ is again the maximally entangled vector between two copies of the vector space where $\rho_\lambda$ acts on, and we used the transpose trick for the one to the last equality. Using this relation multiple times, we conclude that,
\begin{equation*}
    \tr \left[ \left( \tr_{ V_\lambda^{U(d)}}(P_{\Omega^\lambda_{\Sym}}) \right)^2\right]    = 
    \frac{(f^\lambda)^2}{((kt)!)^2}\sum_{r_1,r_2 \in R, c_1,c_2 \in C} \chi_\lambda ( c_1 r_1  c_2 r_2 ) = \frac{(f^\lambda)^2}{((kt)!)^2} \tr_\lambda(RCRC). 
\end{equation*}
We used the notation introduced in~\cref{rem:notational-remark}.
\par
It remains to prove that if $l(\lambda) > k$ then $\tr_\lambda(RCRC) = 0$. To do so, it is sufficient to show that $\rho_\lambda(R)$ is zero if $l(\lambda)>k$. Consider the vector space $(\CC^d)^{\ot t k }$, and the commuting action of $U(d)$ and $S_{tk}$. We have that
 \begin{align}\label{eq:98687} \left(\sum_{r\in R} r\right) U^{\ot kt} = \bigoplus_{\lambda \vdash kt,  l(\lambda)\leq d} \rho_\lambda(R) \ot \sigma_\lambda(U).  
 \end{align}
 On the other hand, 
 \[ \frac{1}{|R|} \left(\sum_{r\in R} r\right) U^{\ot kt} = \left[\left(\frac{1}{t!}\sum_{\pi \in S_t} \pi \right)U^{\ot t}\right]^{\ot k} = \left[\Sym_t(U)\right]^{\ot k}. \]
By the Pieri's formula~\cite{fulton2013representation}, if $\lambda$ is any Young diagram, then $\rho_\lambda \ot \Sym_t = \bigoplus_\mu \rho_\mu$, where $\mu$ is constructed by adding $t$ boxes to $\lambda$ in a way that no two boxes are added to the same column. Therefore, the length of the irreps appearing in $\rho_\lambda \ot \Sym_t$ is at most one more than the length of $\lambda$. With this, we can easily conclude that no irrep of length more than $k$ appear in $\left[ \Sym_t (U)\right] ^{\ot k}$. Comparing with~\cref{eq:98687}, this means that $\rho_\lambda(R) =0 $ when $l(\lambda)>k$.

\section{Permanent of random i.i.d. Gaussian matrices}\label{sec:permiid}
In this section, we describe our results on lower bounding, computing, and estimating $\EE_{M \sim \Gauss k} |\Perm (M)|^{2t}$. Our starting point is the expansion formula in~\cref{rem:gauss_usub}, which we repeat here for the reader's convenience:
\begin{align}\label{eq:repeated-expansion-formula}
\EE_{M \sim \Gauss k} \left| \Perm(M) \right|^{2t} =  \frac{1}{(kt)!}\sum_{\lambda} f^\lambda\tr \left[\rho_\lambda(RCRC) \right].
\end{align}
Terms in the right-hand side sum are all positive, and therefore, any subset of them will constitute a valid lower bound for the moments of random permanents. The challenging task is to explicitly compute $\tr \left[\rho_\lambda(RCRC)\right]$. We can only compute this quantity for the Young diagrams of depth $2$ and $3$, where there are only a few boxes in the second and the third row. The first goal of this section is to show how this calculation is done.
\par 
In~\cref{sec:pleth}, we introduce the notion of plethysm and show how it can be used to bound $\tr \left[\rho_\lambda(RCRC)\right]$ in terms of $\tr \left[\rho_\lambda(RC)\right]$. Next, in~\cref{sec:citw,sec:comments}, we use Cauchy identity to provide formulas for $\tr \left[\rho_\lambda(RC)\right]$ when $l(\lambda)=2$, and a few explicit results for $l(\lambda)=3$. We extend these results to the case of $\tr \left[\rho_\lambda(RCRC)\right]$ in~\cref{sec:RCRC_calc}, and prove a lower bound for the moments of permanent in~\cref{sec:lbmgc}.
\par 
The second goal of this section is to argue that the lower bounds that we drive in~\cref{sec:lbmgc} are very close to being tight. To justify this assertion, we provide explicit calculation of the permanent moments using a new algorithm that we develop in~\cref{sec:alg}. Next, in~\cref{sec:conjsection}, we formally state our permanent moment growth conjecture and provide further analytical arguments to support it. Lastly, in~\cref{sec:concentration}, we assume our moment growth conjecture and use large deviation theory to predict the form of the tail of the log-permanent distribution. 

\subsection{Plethysm} \label{sec:pleth}
In this section, we first discuss the notion of plethysm and then proceed to connect it to our calculations. 
\par
Consider a complex vector space $\CC^d$ (with $d$ being large) where the $\GL(d)$ group acts on. One can construct the order $t$ symmetric representation of this group that acts on $\Sym_t (\CC^d)$. This is an irreducible representation as we mentioned before, so it cannot be decomposed into smaller irreps of $\GL(d)$.
\par
Now, proceed one step further and consider $k$-th symmetric power of $\Sym_t(\CC^d)$, i.e., $\Sym_k(\Sym_t(\CC^d))$. This space carries a representation of $\GL(d)$ and can be decomposed into its irreps $\sigma_\lambda$ (that act on $V_\lambda^{U(d)}$), each one repeated $\Pl^{k,t}_\lambda$ times:
\begin{equation}\label{eq:pleth}
    \Sym_k(\Sym_t(\CC^d)) =  \bigoplus_{\lambda\vdash kt} \CC^{{\Pl^{k,t}_\lambda}}  \otimes V_\lambda^{U(d)}.
\end{equation}
We call the coefficients $\Pl_\lambda^{k,t}$ the ``plethysm'' coefficients. Finding a combinatorial formula for the plethysm coefficients is a major open problem in representation theory, and except for the cases where $k<4$, no explicit relation is known for general $t$.
\par
Now, we state the main result of this subsection:
\begin{thm}\label{thm:pleth_ineq}
Let $\Pl_\lambda^{k,t}$ be the plethysm coefficient defined in~\cref{eq:pleth}, $\rho_\lambda$ be the irrep of $S_{kt}$ corresponding to the Young diagram $\lambda$, and $R$ and $C$ be the row and the column preserving subgroups defined in~\cref{dfn:row-col-pres}. Then, the following inequalities hold:
\begin{equation}\label{eq:pleth_ineq}
    (\tr[ \rho_\lambda(RC)])^2/\Pl_\lambda^{k,t}\leq \tr[ \rho_\lambda(RCRC)]\leq (\tr[ \rho_\lambda(RC)])^2.
\end{equation}
\end{thm}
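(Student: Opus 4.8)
The plan is to reduce the two-sided estimate to a statement about the eigenvalues of a single positive semidefinite operator, and then to bound its rank using Schur--Weyl duality.

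First I would pass to projectors. Taking $\rho_\lambda$ unitary and using that $R,C$ are subgroups, $\rho_\lambda(R)=|R|\,P_R$ and $\rho_\lambda(C)=|C|\,P_C$, where $P_R,P_C$ are the orthogonal projectors onto the $R$- and $C$-invariant subspaces of $V_\lambda$. Setting $N:=P_CP_RP_C=(P_RP_C)^{\dagger}(P_RP_C)\succeq 0$, idempotency of the projectors and cyclicity of the trace give $\tr N=\tr(P_RP_C)=\tr[\rho_\lambda(RC)]/(|R||C|)$ and $\tr(N^2)=\tr(P_RP_CP_RP_C)=\tr[\rho_\lambda(RCRC)]/(|R|^2|C|^2)$, so \cref{eq:pleth_ineq} becomes $(\tr N)^2/\Pl_\lambda^{k,t}\le\tr(N^2)\le(\tr N)^2$. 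Writing the eigenvalues of $N$ as $\nu_i\ge 0$, the upper bound is simply $\sum_i\nu_i^2\le(\sum_i\nu_i)^2$, and the lower bound is Cauchy--Schwarz over the nonzero eigenvalues, $(\sum_i\nu_i)^2\le\rank(N)\sum_i\nu_i^2$. Hence the whole statement comes down to proving $\rank(N)=\rank(P_RP_C)\le\Pl_\lambda^{k,t}$.

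For that rank bound I would proceed as follows. Let $\Delta\cong S_k$ be the subgroup of $S_{kt}$ permuting the $k$ rows of the rectangle as whole blocks; a row transposition acts inside every column by the same transposition, so $\Delta\subseteq C$, hence $V_\lambda^C\subseteq V_\lambda^{\Delta}$, i.e. $P_\Delta P_C=P_C$. This gives the factorization $P_RP_C=P_RP_\Delta P_C$, so $\rank(P_RP_C)\le\rank(P_RP_\Delta)$. It then remains to recognize $\rank(P_RP_\Delta\big|_{V_\lambda})$ as the plethysm coefficient: in the Schur--Weyl decomposition $(\CC^d)^{\ot kt}=\bigoplus_\mu V_\mu^{\Sym}\ot V_\mu^{U(d)}$ with $d\ge l(\lambda)$, the operator $\frac1{|R|}\sum_{r\in R}r$ projects onto $\Sym_t(\CC^d)^{\ot k}$ and $\frac1{|\Delta|}\sum_{\pi\in\Delta}\pi$ then symmetrizes the $k$ blocks, producing $\Sym_k(\Sym_t(\CC^d))$; comparing with \cref{eq:pleth}, the multiplicity of $\sigma^{(d)}_\lambda$ there is $\dim(P_\Delta P_R V_\lambda^{\Sym})=\rank(P_RP_\Delta\big|_{V_\lambda})=\Pl_\lambda^{k,t}$. (If $l(\lambda)>\min(k,t)$ then $\rho_\lambda(R)=0$ or $\rho_\lambda(C)=0$ and all terms vanish; if $\Pl_\lambda^{k,t}=0$ then $P_RP_\Delta=0$, hence $P_RP_C=0$ and $\tr[\rho_\lambda(RC)]=0$, so no genuine division occurs and the inequality is to be read in cross-multiplied form.)

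I expect the only non-routine step to be this last identification of $\rank(P_RP_C)$ with $\Pl_\lambda^{k,t}$. The obvious bounds $\rank(P_RP_C)\le\dim V_\lambda^{R}=K_{\lambda,(t^k)}$ and $\le\dim V_\lambda^{C}=K_{\lambda,(k^t)}$ are in general strictly larger than the plethysm coefficient (plethysm is not symmetric, so $\Pl_\lambda^{k,t}$ need not equal either Kostka number), so one genuinely has to route $P_RP_C$ through the wreath-product symmetrizer $P_RP_\Delta$ to obtain the sharp rank bound. Everything else is elementary linear algebra.
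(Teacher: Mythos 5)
Your proof is correct and is essentially the paper's argument in a slightly repackaged form: the paper works with $A=\rho_\lambda\bigl(\tfrac{\hat R}{|\hat R|}\tfrac{C}{|C|}\tfrac{\hat R}{|\hat R|}\bigr)$ where $\hat R = R\rtimes C_d$ (with $C_d$ your $\Delta$), uses the same elementary inequality $\tr(A)^2/\rank(A)\le\tr(A^2)\le\tr(A)^2$, and bounds the rank by $\rank\rho_\lambda(\hat R)=\Pl_\lambda^{k,t}$ via the same Schur--Weyl identification of $\Sym_k(\Sym_t(\CC^d))$. Your choice of $N=P_CP_RP_C$ and the insertion of $P_\Delta$ in the middle (rather than the paper's two-sided symmetrization by $P_{\hat R}$, using $C=C_dCC_d/(k!)^2$) leads to the same trace, trace-of-square, and rank bound, so the two write-ups are interchangeable.
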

\begin{proof}
Recall that the row preserving group $R$ (see~\cref{dfn:row-col-pres}) is constructed as the product of $k$ permutation groups that individually permute the content of rows of a $k\times t$ rectangle. Let us consider the \emph{diagonal} subgroup of $R$, $R_d$, where the same permutation acts on different rows. To be more explicit, each element of $R$ can be identified by a sequence of $k$ permutations in $S_t$: $(\pi_1,\cdots, \pi_k)$, where $\pi_i$ simply permutes the element of the $i$-th row. With this notation, each element of $R_d$ is of the form $(\pi,\pi,\cdots, \pi)$, with $\pi \in S_t$. Therefore, $|R_d| = t!$. 
\par
Moreover, it is easy to see that if $c\in C$ and $r_d \in R_d$, then $cr_d = r_dc$, therefore, two groups $C$ and $R_d$ commute and one can construct the product group $\hat C = C \times R_d$ of order $t! (k!)^t$. In a similar fashion, we can define the diagonal $C_d$ group and the corresponding product group $\hat R$ of order $k! (t!)^k$.
\par
Let us consider the action of the group $\hat R$ on the vector space $(\CC^d)^{\ot kt}$. The subgroup $R$ of $\hat R$ will project to $(\Sym_t(\CC^d))^{\ot k}$, and then the diagonal $C_d$ will further project to $\Sym_k(\Sym_t(\CC^d))$. Using the Schur-Weyl duality for permutations in $\hat R$, we get:
\[\Sym_k(\Sym_t(\CC^d)) = \text{support}(r(\hat R)) = \bigoplus \text{support}\left(\rho_\lambda(\hat R)\right) \ot V_\lambda^{U(d)}. \]
When comparing with the definition of the plethysm coefficients, this gives the following important result: \[\rank(\rho_\lambda(\hat R)) = \Pl_\lambda^{k ,t}.\]
Let us return to the quantity of interest, $\tr [\rho_\lambda(RCRC)]$. We have that,
\begin{equation}\label{eq:pleth-0}
\tr[ \rho_\lambda(RCRC)] = |R|^2 |C|^2 \tr\left[ \rho_\lambda\left(\frac{R}{|R|}\frac{C}{|C|}\frac{R}{|R|}\frac{C}{|C|}\right)\right] = |R|^2 |C|^2 \tr\left[ \left(\rho_\lambda\left(\frac{\hat R}{|\hat R|}\frac{C}{|C|}\frac{\hat R}{|\hat R|}\right)\right)^2\right], 
\end{equation} 
where we used the fact that $R/|R|$ and $C/|C|$ are projectors, and $C = C_dCC_d/(k!)^2$. Note that,
\begin{equation}\label{eq:pleth-1}
    \rank \rho_\lambda\left(\frac{\hat R}{|\hat R|}\frac{C}{|C|}\frac{\hat R}{|\hat R|}\right)\leq \rank \rho_\lambda(\hat R) = \Pl_\lambda^{k,t}.
\end{equation} 
In general, for any positive Hermitian matrix $A$ we have $\tr (A)^2/\rank(A)\leq\tr(A^2) \leq \tr(A)^2$. Setting $A = \rho_\lambda\left(\frac{\hat R}{|\hat R|}\frac{C}{|C|}\frac{\hat R}{|\hat R|}\right)$ we obtain:
\begin{equation}\label{eq:pleth-2}
    \frac{\left(\tr \left[\rho_\lambda\left(\frac{\hat R}{|\hat R|}\frac{C}{|C|}\frac{\hat R}{|\hat R|}\right)\right]\right)^2}{\Pl_\lambda^{k,t}}\leq\tr\left[ \left(\rho_\lambda\left(\frac{\hat R}{|\hat R|}\frac{C}{|C|}\frac{\hat R}{|\hat R|}\right)\right)^2\right] \leq \left(\tr \left[\rho_\lambda\left(\frac{\hat R}{|\hat R|}\frac{C}{|C|}\frac{\hat R}{|\hat R|}\right)\right]\right)^2.
\end{equation} 
But, $\tr \left[\rho_\lambda\left(\frac{\hat R}{|\hat R|}\frac{C}{|C|}\frac{\hat R}{|\hat R|}\right)\right] = \tr \left[\rho_\lambda\left(\frac{ R}{| R|}\frac{C}{|C|}\frac{ R}{|R|}\right)\right] = \frac{1}{|R||C|}\tr \left[\rho_\lambda\left(RC\right)\right]$. Combining~\cref{eq:pleth-0} and~\cref{eq:pleth-2}, we get the desired result.
\end{proof}
The significance~\cref{thm:pleth_ineq} is in the fact that, generically, the plethysm coefficients are much smaller than $\rank \rho_\lambda(R)$, and $\tr \rho_\lambda(RC)$ is easier to compute than $\tr \rho_\lambda(RCRC)$. For several important cases $\Pl_\lambda^{k,t} =1$, which fully reduces the much harder calculation of $\tr \rho_\lambda(RCRC)$ to the calculation of $\tr \rho_\lambda(RC)$.
\par
\subsubsection{Special cases of plethysm}\label{sec:pleth-ex}
Before ending this subsection, we wish to compute $\Pl_\lambda^{k,t}$ for $l(\lambda)\leq 2$ as well as a few cases where $l(\lambda)=3$.
\par
As our first step, note that the character of $\Sym_k(\Sym_t(\CC^d))$ can be easily computed. Let $\lambda_1,\cdots, \lambda_d$ be the eigenvalues of an element of $\GL(d)$. Then, the eigenvalues of $\Sym_t(\CC^d)$ are all monomials $u_i := \lambda_1^{t_1^i} \cdots \lambda_d^{t_d^i}$, where $\{t^i\}$ is the set of vectors of integers that are solutions to $t_1^i + \cdots +t_d^i =  t$. There are $q_t  =\binom{d+t-1}{d-1}$ solutions to this equation, so $i$ runs from $1$ to $q_t$. Then, the character of this representation is simply 
\begin{equation}\label{eq:pleth-character}
f(\lambda_1,\cdots, \lambda_d) = h_k(u_1,\cdots,u_{q_t})=\sum_{i_1\leq \cdots \leq i_k} u_{i_1} \cdots u_{i_k}.
\end{equation}
\par
We wish to expand $f$ in terms of Schur polynomials, and it can be concluded from~\cref{eq:pleth} that the coefficient of each polynomial is equal to the plethysm coefficient:
\[f(\lambda_1,\cdots,\lambda_d) = \sum_\mu \Pl_\mu^{k,t} \,s_\mu(\lambda_1,\cdots,\lambda_d). \]
\par 
Let us count how many times the term $\lambda_1^{kt}$ appears in $f$. To produce such a term in the expansion of $h_k(u)$, we are only allowed to include one term, i.e., $u_1^k$, where $u_1=\lambda_1^t$. Therefore, this coefficient appears once. We can generalize this argument to count the coefficient of the term $\lambda_1^{kt-a} \lambda_2^{a}$ in $f$. To make such coefficients, we are only allowed to use the terms $u_{\omega_i}:= \lambda_1^{t-i}\lambda_2^{i}$. Assume that in the expansion of $h_k$ these coefficients appear with the powers of $k_i$, i.e., we are considering terms such as $u_{\omega_1}^{k_1}\cdots u_{\omega_t}^{k_t}$. These coefficients satisfy $\sum_{i=0}^t k_i = k$, and $\sum_{i=0}^t i k_i = a$. This is equal to the number of partitions of $a$ to at most $k$ parts, where the size of each part is limited by $t$. In simpler words, this is the number of Young diagrams of the $a$ boxes that fit into a $k\times t$ rectangle. Let us call the number of Young diagrams of $a$ box that fit in a $t\times k$ rectangle $q(t,k,a)$. Therefore, we have
\[ \text{coefficient of }\lambda_1^{kt-a}\lambda_2^a \text{ in }f = q(t,k,a):=\text{\# Young diagrams of }a\text{ boxes that fit in a }k\times t \text{ rectangle}.\]
\par
On the other hand, we can see from the definition of the Schur polynomials (\cref{eq:schur-def-2}) that the term $\lambda_1^{kt-a}\lambda_2^a$ can only appear in $s_\mu$ where the diagram $\mu$ can be filled with the content $(kt-a,a)$. This means that $\mu$ should be of the form $(kt-b,b)$, with $b\leq a$. More precisely, the coefficient of $\lambda_1^{kt-a}\lambda_2^a$ in $s_{(kt-b,b)}$ is given by $K_{(kt-b,b),(kt-a,a)} = 1$ as long as $b\leq a$. Hence, \[Pl_{(kt-a,a)}^{k,t} = \text{ Coefficient of }\lambda_1^{kt-a}\lambda_2^a \text{ in }f - \text{ Coefficient of }\lambda_1^{kt-a+1}\lambda_2^{a-1}\text{ in }f =  q(t,k,a)- q(t,k,a-1).\]
In particular, assuming that $k\geq t \geq 2$, the first few plethysm coefficients are:
\begin{itemize}
    \item $\lambda = (kt)$: $\Pl_\lambda^{k,t} = 1$.
    \item $\lambda = (kt-1,1)$: $\Pl_\lambda^{k,t} = q(t,k,1)-q(t,k,0) =1-1= 0$.
    \item $\lambda = (kt-2,2)$: $\Pl_\lambda^{k,t} = q(t,k,2)-q(t,k,1) =2-1= 1$.
    \item $\lambda = (kt-3,3)$: $\Pl_\lambda^{k,t}=q(t,k,3)-q(t,k,2)=q(t,k,3)-2$. If $t=2$, then $q(2,k)=2$, and if $t\geq 3$, then $q(2\geq 3,k)=3$. Therefore, 
    \begin{align*}
        &\Pl^{t,k}_{(kt-3,3)} = 1 \text{ if } t\geq 3,\\
        &\Pl^{t,k}_{(kt-3,3)} = 0 \text{ if } t=2.
    \end{align*}
    \item $\lambda = (kt-4,4)$. Using the same logic as the previous case and counting the number of Young diagrams that fit in a $t\times k$ box, we get the following values for the plethysm coefficients:
    \begin{align*}
        &\Pl^{t,k}_{(kt-4,4)} = 2 \text{ if } t\geq 4,\\
        &\Pl^{t,k}_{(kt-4,4)} = 1 \text{ if } t= 3, \\
        &\Pl^{t,k}_{(kt-4,4)} = 1 \text{ if } t= 2.
    \end{align*}

\end{itemize}

\par
Similar logic can be used to compute a few plethysm coefficients with $l(\lambda)>2$:
\begin{itemize}
    \item $\lambda=(kt-2,1,1)$. In this case, we need to compute the coefficient of $\lambda_1^{kt-2}\lambda_2 \lambda_3$ in $f$. Let us define $u_1 = \lambda_1^t$, $u_2 = \lambda_1^{t-1}\lambda_2$, $u_3 = \lambda_1^{t-1}\lambda_3$, and $u_4 = \lambda_1^{t-2} \lambda_2 \lambda_3$. Using the definition of the character~\cref{eq:pleth-character}, there are only two ways to produce $\lambda_1^{kt-2}\lambda_2 \lambda_3$ in $f$: either as $u_1^{k-1}u_4$, or $u_1^{k-2}u_2u_3$. On the other hand, there are only $4$ Schur polynomials that contain this term: $s_{(kt)},s_{(kt-1,1)},s_{(kt-2,2)},s_{(kt-2,1,1)}$.
    Hence,
    \begin{multline*}
    f = \Pl_{(kt)} s_{(kt)}+\Pl_{(kt-1,1)} s_{(kt-1,1)}+\Pl_{(kt-2,2)} s_{(kt-2,2)}+\Pl_{(kt-2,1,1)} s_{(kt-2,1,1)}\\
    +\text{ terms that do not contain }\lambda_1^{kt-2}\lambda_2\lambda_3=\\
    s_{(kt)}+s_{(kt-2,2)}+\Pl_{(kt-2,1,1)} s_{(kt-2,1,1)}+\text{ terms that do not contain }\lambda_1^{kt-2}\lambda_2\lambda_3.
    \end{multline*}
    But the number of times that $\lambda_1^{kt-2}\lambda_2 \lambda_3$ appears in $s_{(kt)}+s_{(kt-2,2)}$ is exactly $2$ as well. Therefore, $\Pl_{(kt-2,1,1)}=0$.
    \item $\lambda = (kt-3,2,1)$, with $t\geq 3$. Again, we need to compute the coefficient of $\lambda_1^{kt-3}\lambda_2^{2}\lambda _3$ in $f$. There are $4$ ways to produce this term in $f$ according to~\cref{eq:pleth_ineq}: $(\lambda_1^t)^{k-1}\times (\lambda_1^{t-3}\lambda_2^2\lambda_3)$, $(\lambda_1^t)^{k-2}\times (\lambda_1^{t-2}\lambda_2^2)\times (\lambda_1^{t-1}\lambda_3)$,
    $(\lambda_1^t)^{k-2}\times (\lambda_1^{t-1}\lambda_2)\times (\lambda_1^{t-2}\lambda_2\lambda_3)$, and $(\lambda_1^t)^{k-3}\times (\lambda_1^{t-1}\lambda_2)^2\times (\lambda_1^{t-1}\lambda_3)$. Similarly, we have that
    \begin{align*}
        f = s_{(kt)}+s_{(kt-2,2)}+s_{(kt-3,3)}+\Pl^{k,t}_{(kt-3,2,1)}s_{(kt-3,2,1)}+ \cdots+\text{ terms that do not contain }\lambda_1^{kt-2}\lambda_2^2\lambda_3.
    \end{align*}
    One can easily see that $\lambda_1^{kt-2}\lambda_2^2\lambda_3$ appears once in $s_{(kt)}$ and $s_{(kt-3,3)}$ and twice in $s_{(kt-2,2)}$. This follows from~\cref{eq:transition_m_to_s} and $K_{(kt),(kt-3,2,1)} =K_{(kt-3,3),(kt-3,2,1)}=1$, and $K_{(kt-2,2),(kt-3,2,1)}=2$. Therefore, all of the 4 copies of $\lambda_1^{kt-2}\lambda_2^2\lambda_3$ that exist in $f$ are already captured by the two row Young diagrams, and we have $\Pl_{(kt-3,2,1)}^{k,t}=0$.
    \item $\lambda = (kt-4,3,1)$, and $t\geq 4$. The similar logic as the previous parts will show that $\Pl_{(kt-4,3,1)}^{k,t}=0$
    \item $\lambda = (kt-4,2,2)$, and $t\geq 3$. In this case we have that $\Pl_{(kt-4,2,2)}^{k,t}=1$
    \item $\lambda = (kt-5,3,2)$, and $t\geq 3$. Similar logic shows that $\Pl_{(kt-5,3,2)}^{k,t}=1$
\end{itemize}
And lastly, for the case of $t=2$:
\begin{itemize}
    \item When $t=2$, then it is easy to see that $Q(2,k,a)=1+\lfloor a/2 \rfloor$. Hence, 
    \begin{equation}\label{eq:pleth-k-2}
    \Pl_{(2t-a,a)}^{2,k} = 1 \text{ if }a\text{ is even and }0\text{ otherwise.}
    \end{equation}
\end{itemize}

\subsection{The Cauchy identity and \texorpdfstring{$\tr \left[\rho_\lambda (RC) \right]$}{} for \texorpdfstring{$l(\lambda)=2$}{}}\label{sec:RC_calc}\label{sec:citw}
In this section, we are mainly concerned with computing $\tr \left[\rho_\lambda (RC) \right]$ when $l(\lambda)\leq 2$. 
\par
Again, start with the Schur-Weyl duality and consider the action of permutations on $(\CC^d)^{\ot kt}$ for large $d$. Let $X=\text{diag}(x_1,x_2,\cdots, x_d)$ be a fixed $d\times d$ diagonal matrix. We have that
\begin{equation}\label{eq:2rowres}
\tr[RC X^{\ot kt}]=\sum_{\lambda\vdash kt, l(\lambda)\leq \min(k,t)} \tr[\rho_\lambda(RC)]s_\lambda(x_1,x_2,\cdots, x_d). \end{equation}
Now, set $x_3=x_4=\cdots=x_d=0$. This will introduce a method for restricting the above sum to the Young diagrams with $l(\lambda)\leq 2$, as $s_\lambda(x_1,x_2,0,0,\cdots, 0)$ automatically vanishes if $l(\lambda)\geq 2$.
\par 
By directly computing the left-hand side of~\cref{eq:2rowres} using the Cauchy identity~\cite{bump2004lie}, we obtain the following result:
\begin{thm}[\texorpdfstring{$\tr \left[\rho_\lambda (RC) \right]$}{} when \texorpdfstring{$l(\lambda)\leq 2$}{}]\label{thm:RC-formula}
Let $k$ and $t$ be positive integers and $a\geq 0$. Then,
\begin{equation} \label{eq:RC-formula-from-Q}
\tr[\rho_{(kt-a,a)}(RC) ] = \left \{ \begin{array}{ccc} Q(k,t,a) -Q(k,t,a-1) &&\text{if}\quad a>0\\Q(k,t,0)&&\text{if}\quad a=0 \end{array}\right. ,
\end{equation}
where $Q(k,t, a)$ is defined as: 
\[ Q(k,t, a) :=k!t! \sum_{|\mu|=|\nu|=a ,l(\mu)\leq k, l(\nu)\leq t}   (\text{IB}_{\mu\nu} )^2    \frac{\prod_{i=1}^k \mu_i! (t-\mu_i)!}{\#_\mu(0)!\cdots \#_\mu(t)! }\frac{\prod_{j=1}^t \nu_i! (k-\nu_i)!}{\#_\nu(0)!\cdots \#_\nu(k)!}, \]
where $\#_\lambda(i)$ (for $i\neq 0$) is the number of times that the number $i$ appears in the partition $\lambda$, $\#_\mu(0):=k-\sum_{i=1}^t \#_\mu(i)$, $\#_\nu(0):=t-\sum_{i=1}^k \#_\nu(i)$, and $\text{IB}_{\mu\nu}$ is the number of $0$-$1$ matrices with the row sum $\mu$ and the column sum $\nu$ (see~\cref{eq:IB_formula}).  
\end{thm}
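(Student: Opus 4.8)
The plan is to evaluate both sides of~\cref{eq:2rowres} after the substitution $x_3=\cdots=x_d=0$, which annihilates every term with $l(\lambda)\ge 3$ and reduces~\cref{eq:2rowres} to $\tr[RC\,X^{\ot kt}]=\sum_{a\ge 0}\tr[\rho_{(kt-a,a)}(RC)]\,s_{(kt-a,a)}(x_1,x_2)$. Since $s_{(kt-a,a)}(x_1,x_2)=\sum_{j=a}^{kt-a}x_1^{\,j}x_2^{\,kt-j}$ is explicit, for each $A\le kt/2$ the coefficient of $x_1^{\,kt-A}x_2^{\,A}$ on the right is $\sum_{a=0}^{A}\tr[\rho_{(kt-a,a)}(RC)]$ (the remaining $A>kt/2$ being pinned down by the $x_1\leftrightarrow x_2$ symmetry of the Schur polynomials). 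So the theorem follows once I identify the matching coefficient of the left-hand side with $Q(k,t,A)$: then the one-step inversion gives $\tr[\rho_{(kt-a,a)}(RC)]=Q(k,t,a)-Q(k,t,a-1)$ for $a>0$ and $\tr[\rho_{(kt,0)}(RC)]=Q(k,t,0)$, which is exactly~\cref{eq:RC-formula-from-Q}.

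The substance is therefore the direct evaluation of $\tr[RC\,X^{\ot kt}]$ with $X=\text{diag}(x_1,x_2,0,\dots,0)$ — a computation of Cauchy-identity flavour. Since $x_3=\cdots=0$ I would work inside $(\CC^2)^{\ot kt}$. Write $P_R=|R|^{-1}\sum_{r\in R}r$ and $P_C=|C|^{-1}\sum_{c\in C}c$; as recalled in~\cref{sec:pleth}, $P_R$ is the orthogonal projector onto the row-wise symmetric subspace $\bigotimes_{i=1}^{k}\Sym_t(\CC^2)$ and $P_C$ the projector onto the column-wise symmetric subspace. As $X^{\ot kt}$ is invariant under all of $S_{kt}$, it commutes with $P_R$, so $\tr[RC\,X^{\ot kt}]=t!^{\,k}k!^{\,t}\,\tr[P_R P_C P_R X^{\ot kt}]$. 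An orthonormal basis of $P_R(\CC^2)^{\ot kt}$ is indexed by the row-count vectors $a=(a_1,\dots,a_k)\in\{0,\dots,t\}^k$, with $\ket a$ the normalized symmetrization carrying $a_i$ copies of $\ket 2$ in row $i$; then $X^{\ot kt}\ket a=x_1^{\,kt-\abs a}x_2^{\,\abs a}\ket a$ with $\abs a=\sum_i a_i$, whence $\tr[P_R P_C P_R X^{\ot kt}]=\sum_a x_1^{\,kt-\abs a}x_2^{\,\abs a}\,\norm{P_C\ket a}^2$.

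To compute $\norm{P_C\ket a}^2$ I would expand $\ket a$ over the $\{1,2\}$-valued fillings of the $k\times t$ grid with row sums $a$, apply the column projectors, and use that $P_C$ of such a filling depends only on its column-count vector $b=(b_1,\dots,b_t)$, with the number of fillings having row sums $a$ and column sums $b$ being exactly $\text{IB}_{ab}$ (the number of $0$-$1$ matrices with those margins). Tracking the binomial normalizations yields $\norm{P_C\ket a}^2=\bigl(\prod_i\binom{t}{a_i}\bigr)^{-1}\sum_{b:\,\abs b=\abs a}\text{IB}_{ab}^{\,2}\bigl(\prod_j\binom{k}{b_j}\bigr)^{-1}$. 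Multiplying through by $t!^{\,k}k!^{\,t}$ converts $t!^{\,k}/\prod_i\binom{t}{a_i}$ into $\prod_i a_i!(t-a_i)!$ and likewise $k!^{\,t}/\prod_j\binom{k}{b_j}$ into $\prod_j b_j!(k-b_j)!$; finally, collecting the ordered vectors $a,b$ into their underlying partitions $\mu,\nu$ introduces the multinomial multiplicities $k!/\prod_{i\ge 0}\#_\mu(i)!$ and $t!/\prod_{j\ge 0}\#_\nu(j)!$ (with the padding conventions $\#_\mu(0)=k-l(\mu)$, $\#_\nu(0)=t-l(\nu)$; terms with $\mu_1>t$ or $\nu_1>k$ drop out since no $0$-$1$ matrix realizes them). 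The coefficient of $x_1^{\,kt-A}x_2^{\,A}$ is then visibly $Q(k,t,A)$, completing the argument.

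The main obstacle I expect is purely bookkeeping: keeping the symmetrizer normalizations straight along the chain $P_R(\CC^2)^{\ot kt}\to\text{fillings}\to P_C$, and correctly converting the sums over ordered count vectors into sums over partitions with the stabilizer factors $\#_\mu(i)!,\#_\nu(j)!$ — in particular remembering the zero-padding terms. The one genuinely combinatorial input, that fillings with prescribed row and column contents are enumerated by $\text{IB}_{\mu\nu}$, is immediate from the definition of $\text{IB}$; everything else is the diagonalization of $P_RP_C$ in the two-variable specialization together with the elementary monomial expansion of $s_{(kt-a,a)}(x_1,x_2)$.
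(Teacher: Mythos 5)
Your argument is correct, and it takes a genuinely different route from the paper's. The paper proves \cref{thm:RC-formula} by representing the (unnormalized) projectors as Gaussian averages, $R=\bigl(\EE_\alpha(\alpha\alpha^\dagger)^{\ot t}\bigr)^{\ot k}$ and $C=\bigl(\EE_\beta(\beta\beta^\dagger)^{\ot k}\bigr)^{\ot t}$, rewriting $\tr[RC\,X^{\ot kt}]$ as $\EE\prod_{i,j}\bigl|\alpha_i^\dagger\sqrt X\,\beta_j\bigr|^2$, invoking the dual Cauchy identity $\prod_{i,j}(1+v_iw_j)=\sum_\lambda s_\lambda(v)s_{\tilde\lambda}(w)$ together with the Kostka expansion $\sum_\lambda s_\lambda s_{\tilde\lambda}=\sum_{\mu,\nu}\text{IB}_{\mu\nu}m_\mu m_\nu$, and then evaluating the Gaussian moments term by term. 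You instead diagonalize $P_R$ directly in $(\CC^2)^{\ot kt}$ via the orthonormal row-count basis $\{\ket a\}$ of $\bigotimes_i\Sym_t(\CC^2)$, use $X^{\ot kt}\ket a=x_1^{kt-|a|}x_2^{|a|}\ket a$, and compute $\norm{P_C\ket a}^2$ by pure orbit counting on $\{1,2\}$-fillings of the $k\times t$ grid, getting $\text{IB}_{ab}$ directly as the number of fillings with the prescribed margins rather than via a symmetric-function identity. Both routes then convert ordered count vectors to partitions with the same multinomial stabilizer factors, and both finish by the same telescoping inversion of $s_{(kt-a,a)}(x_1,x_2)=\sum_{j=a}^{kt-a}x_1^jx_2^{kt-j}$. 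What your version buys is elementarity and transparency: no Gaussian integration, no Cauchy identity, and a combinatorial interpretation of $\norm{P_C\ket a}^2$ as a sum over filling-orbit overlaps. What the paper's version buys is uniformity with the rest of \cref{sec:RCRC_calc}: the Gaussian-vector representation of $R$ and $C$ is recycled almost verbatim in the proof of \cref{thm:RCRC-formula} for $\tr[\rho_\lambda(RCRC)]$, where the extra alternation $P_RP_CP_RP_C$ would, in your framework, require evaluating the less clean quantity $\bra a P_CP_RP_C\ket a$.
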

\begin{proof}
First, we construct $R$ and $C$ projectors. Let us start with $R$ and consider a set of $k$ vectors of dimension $d$, $\{ \alpha_1, \alpha_2,\cdots, \alpha_k\}$. We choose the components of all of these vectors to be random i.i.d. complex Gaussian numbers with variance $1$ and mean $0$. Then, we place the vectors in a $k\times t$ matrix with the following pattern:
\begin{equation}\label{eq:boxk}
\EE_{\{\alpha_i\}} \arraycolsep=1.2pt\def\arraystretch{1.5} \left(\begin{array}{cccccccccc}
\ {\alpha_1}{\alpha_1}^\dagger&\ot& \ {\alpha_1}{\alpha_1}^\dagger&\ot& \ {\alpha_1}{\alpha_1}^\dagger&\ot& \cdots &\ot& \ {\alpha_1}{\alpha_1}^\dagger&\ot\\
\ {\alpha_2}{\alpha_2}^\dagger&\ot& \ {\alpha_2}{\alpha_2}^\dagger&\ot& \ {\alpha_2}{\alpha_2}^\dagger&\ot& \cdots &\ot& \ {\alpha_2}{\alpha_2}^\dagger&\ot\\
\ {\alpha_3}{\alpha_3}^\dagger&\ot& \ {\alpha_3}{\alpha_3}^\dagger&\ot& \ {\alpha_3}{\alpha_3}^\dagger&\ot& \cdots &\ot& \ {\alpha_3}{\alpha_3}^\dagger&\ot\\
\vdots && \vdots&& \vdots && \ddots && \vdots&\\
\ {\alpha_k}{\alpha_k}^\dagger&\ot& \ {\alpha_k}{\alpha_k}^\dagger&\ot& \ {\alpha_k}{\alpha_k}^\dagger&\ot& \cdots &\ot& \ {\alpha_k}{\alpha_k}^\dagger&
\end{array}\right).
\end{equation}

Note that $\alpha_i \alpha_i^\dagger$ is a rank one matrix, and should not be confused with the inner product $\alpha_i^\dagger \alpha_i$ (in the Dirac's notation, if $\ket{\alpha_i} = \alpha_i$, then $\alpha_i \alpha_i^\dagger = \ketbra{\alpha_i}{\alpha_i}$). The average $\EE_{\alpha_i}$ is taken over all $d k$ complex numbers that constitute the coefficients of $\alpha_i$'s. 
\par 
\Cref{eq:boxk} can be more compactly written as $\left(\EE_\alpha \,({\alpha}{\alpha^\dagger})^{\ot t}\right)^{\ot k}$, where the tensor power inside the parentheses forms the rows of~\cref{eq:boxk}. Similarly, we can define $\left(\EE_\beta \,({\beta}{\beta^\dagger})^{\ot k}\right)^{\ot t}$ for the set of $t$ vectors $\{\beta_1,\cdots,\beta_t\}$, where the inside tensor power indicates the columns of a $k\times t$ matrix.. 
\par
It is easy to check that $\EE_\alpha \,({\alpha}{\alpha^\dagger})^{\ot t} \propto \sum_{\pi\in S_t} \pi$, as $[\EE_\alpha ({\alpha}{\alpha^\dagger})^{\ot t} ,U^{\ot t}]=0$ for all $U\in U(d)$. The proportionality constant is computed by comparing the traces: $\tr[\EE_\alpha \,({\alpha}{\alpha^\dagger})^{\ot t}] = \EE_\alpha \,({\alpha^\dagger \alpha})^t=t!\binom{d+t-1}{t}$, and $\tr[\sum_{\pi\in S_t} \pi] = t! \times \binom{d+t-1}{t}$. Hence, we conclude that, $\EE_\alpha\,( {\alpha}{\alpha})^{\ot t}=\sum_{\pi \in S_t} \pi$. To summarize, 
\[R =  \left(\EE_\alpha \,({\alpha}{\alpha^\dagger})^{\ot t}\right)^{\ot k}, \text{ and }C =  \left(\EE_\beta \,({\beta}{\beta^\dagger})^{\ot k}\right)^{\ot t}.\]
Substituting in~\cref{eq:2rowres},
\begin{multline*}
    \tr [RCX^{\ot kt}] = \tr \left[R\sqrt X^{\ot kt}C \sqrt X^{\ot kt}\right]= \tr\left[\left(\EE_\alpha \,({\alpha}{\alpha^\dagger})^{\ot t}\right)^{\ot k} \sqrt X^{\ot kt}   \left(\EE_\beta \,({\beta}{\beta^\dagger})^{\ot k}\right)^{\ot t}   \sqrt X^{\ot kt}  \right] = \\
    \EE_{\{\alpha_i\},\{\beta_i \}} \prod_{i=1,j=1}^{k,t} \left|{\alpha_i^\dagger}\,\sqrt{X} \,{\beta_j}\right|^2 =\EE_{\{\alpha_i\},\{\beta_i \}} \left| \prod_{i=1,j=1}^{k,t}\left(\alpha_i^1 \beta_j^1 \sqrt x_1+\alpha_i^2 \beta_j^2 \sqrt x_2+\cdots +\alpha_i^d \beta_j^d \sqrt x_d \right) \right|^2.
\end{multline*} 
\par
Setting $x_3=x_4=\cdots=x_d=0$ we have,
\begin{multline}\label{eq:208908}
\sum_{\lambda\vdash kt, l(\lambda)\leq 2} \tr[\rho_\lambda(RC)]s_\lambda(x_1,x_2) = \EE_{\{\alpha_i^1,\alpha_j^2,\beta_i^1,\beta_i^2\} } \left| \prod_{i=1,j=1}^{k,t}(\alpha_i^1 \beta_j^1 \sqrt x_1+\alpha_i^2 \beta_j^2 \sqrt x_2 ) \right|^2  = \\
(x_1)^{kt}\,\EE \prod_{i=1,j=1}^{k,t} \left|\alpha_i^1 \beta_j^1\right|^2 
 \left| \prod_{i=1,j=1}^{k,t}(1+\frac{\alpha_i^2 \beta_j^2 \sqrt x_2}{\alpha_i^1 \beta_j^1 \sqrt x_1} ) \right|^2.
\end{multline}
One of the Cauchy identities (see~\cite{bump2004lie,fulton2013representation}) is the following polynomial equality,
\begin{equation}\label{eq:cauchy}  \prod_{i=1,j=1}^{k,t}(1+v_i w_j ) = \sum_{\lambda\vdash kt} s_\lambda(v_1,v_2,\cdots, v_k) s_{\tilde\lambda}(w_1,w_2,\cdots, w_t).
\end{equation}
Using~\cref{eq:transition_m_to_s}, we have that $\sum_\lambda s_\lambda s_{\tilde \lambda} =\sum_{\lambda \mu\nu} K_{\lambda\mu} K_{\tilde \lambda \nu } m_\mu m_\nu =\sum_{\mu\nu} \text{IB}_{\mu\nu} m_\mu m_\nu $.
Setting $v_i = \alpha_i^2/\alpha_i^1$ and $w_j = (\beta_j^2/\beta_j^1)\sqrt{x_2/x_1}$, this leads to,
\begin{multline}  \left| \prod_{i=1,j=1}^{k,t}(1+v_i w_j )\right|^2 = \left|\sum_{\lambda\vdash kt} s_\lambda(v_1,v_2,\cdots, v_k) s_{\tilde\lambda}(w_1,w_2,\cdots, w_t)\right|^2=\\
 \left|\sum_{l(\mu)\leq k, l(\nu)\leq t} \text{IB}_{\mu\nu} m_{\mu}(v_1,\cdots,v_k)m_{\nu}(w_1,\cdots,w_t) \right|^2=\\
 \left|\sum_{l(\mu)\leq k, l(\nu)\leq t} \text{IB}_{\mu\nu} \sqrt{\frac{x_2}{x_1}}^{|\nu|} m_{\mu}\left(\frac{\alpha_1^2}{\alpha_1^1},\cdots,\frac{\alpha_k^2}{\alpha_k^1}\right)m_{\nu}\left(\frac{\beta_1^2}{\beta_1^1},\cdots,\frac{\beta_t^2}{\beta_t^1}\right) \right|^2.
\end{multline}
Combining this relation with~\cref{eq:208908}, we derive the following equation:
\begin{multline}
\sum_{\lambda\vdash kt, l(\lambda)\leq 2} \tr[\rho_\lambda(RC)]s_\lambda(x_1,x_2) =\\ \EE \left| \sum_{l(\mu)\leq k, l(\nu)\leq t} x_1^{\frac{kt-|\mu|}{2}}x_2^{\frac{|\nu|}{2}} \text{IB}_{\mu\nu} \left[(\alpha_1^1\cdots \alpha_k^1)^tm_{\mu}\left(\frac{\alpha_1^2}{\alpha_1^1},\cdots,\frac{\alpha_k^2}{\alpha_k^1}\right)  \right] \left[ (\beta_1^1\cdots \beta_t^1)^k  m_{\nu}\left(\frac{\beta_1^2}{\beta_1^1},\cdots,\frac{\beta_t^2}{\beta_t^1}\right)   \right] \right|^2.
\end{multline}
It is straightforward to see that,
\begin{multline}
\EE_{\{\alpha_i^1,\alpha_i^2\}}\,\left[(\alpha_1^1\cdots \alpha_k^1)^tm_{\mu}\left(\frac{\alpha_1^2}{\alpha_1^1},\cdots,\frac{\alpha_k^2}{\alpha_k^1}\right)  \right] \overline{\left[(\alpha_1^1\cdots \alpha_k^1)^t m_{\nu}\left(\frac{\alpha_1^2}{\alpha_1^1},\cdots,\frac{\alpha_k^2}{\alpha_k^1}\right)  \right]} =\\ 
\delta_{\mu\nu} \prod_{i=1}^k \mu_i! (t-\mu_i)! m_{\mu}(1,\cdots,1),
\end{multline}
where $m_\mu(1,\cdots,1)$ is the number of terms in the expansion of $m_\mu$. One can see that $m_\mu(1,\cdots,1) = k!/[\#_\mu(0)!\#_\mu(1)!\cdots \#_\mu(k)!]$. 
Using this definition, we conclude that:
\begin{multline}
\sum_{\lambda\vdash kt, l(\lambda)\leq 2} \tr[\rho_\lambda(RC)]s_\lambda(x_1,x_2) =\\
k!t! \sum_{l(\mu)\leq k, l(\nu)\leq t}  x_1^{{kt-|\mu|}}x_2^{{|\nu|}} (\text{IB}_{\mu\nu} )^2    \frac{\prod_{i=1}^k \mu_i! (t-\mu_i)!}{\#_\mu(0)!\cdots \#_\mu(k)! }\frac{\prod_{j=1}^t \nu_i! (k-\nu_i)!}{\#_\nu(0)!\cdots \#_\nu(t)!}.
\end{multline}
Let us define $Q(k,t, a)$ as
\[ Q(k,t, a) :=k!t! \sum_{|\mu|=|\nu|=a ,l(\mu)\leq k, l(\nu)\leq t}   (\text{IB}_{\mu\nu} )^2    \frac{\prod_{i=1}^k \mu_i! (t-\mu_i)!}{\#_\mu(0)!\cdots \#_\mu(k)! }\frac{\prod_{j=1}^t \nu_i! (k-\nu_i)!}{\#_\nu(0)!\cdots \#_\nu(t)!}. \]
Using the expansion $s_{(kt-a,a)}(x_1,x_2)$ into monomials we have
\[ \tr[\rho_{(kt-a,a)}(RC) ] = Q(k,t,a) -Q(k,t,a-1) \text{ if }a>0 \quad\text{and}\quad Q(k,t,0)\text{ when }a=0, \]
which is the desired result.
\end{proof}
When $a\geq 3$, it hard to compute $\tr \left[\rho_\lambda (RC) \right]$ using the result of~\cref{thm:RC-formula} by hand. We have computerized this calculation and report the results in the rest of this section:
\begin{itemize}
\item{\texorpdfstring{$t=2$}{} :}
This is the simplest case and explicit formulas for $\tr [\rho_\lambda(RC)]$ can be derived.
\par
In order to use~\cref{thm:RC-formula}, we have to compute $Q(k,2,a)$:
\[Q(k,2,a) =2 \times k! \sum_{|\mu|=|\nu|=a ,l(\mu)\leq k, l(\nu)\leq 2}   (\text{IB}_{\mu\nu} )^2    \frac{\prod_{i=1}^k \mu_i! (2-\mu_i)!}{\#_\mu(0)!\cdots \#_\mu(k)! }\frac{\prod_{j=1}^2 \nu_i! (k-\nu_i)!}{\#_\nu(0)!\cdots \#_\nu(2)!}.   \]
The terms in the sum are identified by two numbers $r$ and $s$: $\mu = (2^r,1^{l-2r})$, and $\nu = (a-s,s)$, with $a-k\leq r$, and $i^r$ means that the number $i$ is repeated $r$ times. After some investigation, one can easily see that the number of $k\times 2$ matrices of $0$ and $1$, with the row sum given by $\mu$ and column sum given by $\nu$ is equal to $\binom{a-2r}{s-r}$. Therefore, 
\begin{align}\label{eq:expandedbinom}
    Q(k,2,a) =2 \times k! \sum_{0\leq r\leq s \leq a/2}   \binom{a-2r}{s-r} ^2    \frac{2^{k-a+2r}}{r!(a-2r)!(k-a+r)! }\frac{s!(k-s)!(a-s)!(k-a+s)!}{1+\delta_{a,2s}}.
\end{align}
Then, it follows from~\cref{thm:RC-formula} that 
\begin{align}\label{eq:rawt2} 
\tr[\rho_{(2k-a,a)}(RC)]= Q(k,2,a)-Q(k,2,a-1)\text{ when } a>0\text{, and }1 \text{ when }a=0. 
\end{align}
This is a complicated relation, but surprisingly, it can be simplified to obtain a simple expression.
\begin{rem}\label{rem:t-2} 
We guess a simple form for $\tr[\rho_{(2k-a,a)}(RC)]$:
\begin{align}\label{eq:nonrawt2}
\tr[\rho_{(2k-a,a)}(RC)] =2^k (k!)^2 \times \delta_{a=0\text{ mod } 2}\times  2^{-a} \frac{\binom{a}{a/2}}{\binom{k}{a/2}}. 
\end{align}
Equality of~\cref{eq:rawt2} and~\cref{eq:nonrawt2} has been tested numerically for $a\leq k\leq 1000$. We are confident that~\cref{eq:nonrawt2} is valid, however, we have not been able to rigorously derive it from~\cref{eq:rawt2} as a binomial identity.
\end{rem}
\item{\texorpdfstring{$t=3$}{} :}
Now, we report the result of $\tr [ \rho_{\lambda=(kt-a,a)}(RC)]$ for $t=3$, and $2\leq k\leq 19$, and $k\geq a$. Due to $k \leftrightarrow t$ symmetry, the case of $k=2$ is already covered in the previous section. The results have been calculated using~\cref{thm:RC-formula}, but the calculations are tedious and have been computerized. Moreover, we used~\cref{eq:IB_formula} extensively to compute $\text{IB}_{\mu\nu}$.
\par 
One observes that for $2\leq k\leq 19$, 
\begin{equation}\label{eq:rct3}
    \tr[\rho_{(2k-a,a)}(RC)] = (3!)^k (k!)^3 \times \frac{1}{Q^3_{a,k}}\times p^3_{\text{RC},a}(k),
\end{equation}
where $Q^3_{a}(k)$ is defined as 
\begin{align}\label{eq:q3lkdef}
    Q^3_{a,k} =
      \binom{k}{\lfloor a/2 \rfloor}\lfloor a/2 \rfloor ! 3^{\lfloor2a/3\rfloor} \times \left\{\begin{array}{cc}
    \binom{k}{2 \lfloor \frac{a}{6}\rfloor}\frac{\left[2 \lfloor \frac{a}{6}\rfloor\right]!}{\frac{a}{2}!} 3^{-\lfloor a/6\rfloor} & \quad \text{for even }k\\
     & \\
        \binom{k}{2 \lfloor \frac{a+4}{6}\rfloor-1}\frac{\left[2 \lfloor \frac{a+4}{6}\rfloor-1\right]!}{\frac{a+3}{2}!} 3^{2-\lfloor a/6+2/3\rfloor}/2 & \quad \text{for odd }k
    \end{array}\right. ,
\end{align}
and $p^3_{\text{RC},a}(k)$ are the monic polynomials in~\cref{tab:t3}.
\begin{table}
\begin{center}
 \begin{tabular}{||c| c |} 
 \hline
 $a$& $p_{\text{RC,}a}^3(k)$\\
 \hline
$ 0 $ & $ 1 $  \\
$ 1 $ & $ 0 $  \\
$ 2 $ & $ 1 $  \\
$ 3 $ & $ 1 $  \\
$ 4 $ & $ 1 $  \\
$ 5 $ & $ 1 $  \\
$ 6 $ & $  k^2 + 35/9  k - 92/9 $  \\
$ 7 $ & $ 1 $  \\
$ 8 $ & $  k^2 + 77/9  k - 274/9 $  \\
$ 9 $ & $  k^2 + 4/9  k - 9 $  \\
$ 10 $ & $  k^2 + 143/9  k - 72 $  \\
$ 11 $ & $  k^2 + 25/9  k - 214/9 $  \\
$ 12 $ & $  k^4 + 194/9  k^3 - 15421/81  k^2 + 3310/27  k + 72640/81 $  \\
$ 13 $ & $  k^2 + 58/9  k - 467/9 $  \\
$ 14 $ & $  k^4 + 326/9  k^3 - 23149/81  k^2 - 45346/81  k + 127480/27 $  \\
$ 15 $ & $  k^4 + 130/27  k^3 - 30775/243  k^2 + 83170/243  k + 17384/81 $  \\
$ 16 $ & $  k^4 + 500/9  k^3 - 28777/81  k^2 - 244160/81  k + 471884/27 $  \\
$ 17 $ & $  k^4 + 331/27  k^3 - 52615/243  k^2 + 2743/9  k + 482308/243 $  \\
$ 18 $ & $  k^6 + 641/9  k^5 - 248621/243  k^4 - 5416259/2187  k^3 + 56013238/729  k^2 - 592053832/2187  k  $  \\
&+ 87973760/729 \\
$ 19 $ & $  k^4 + 598/27  k^3 - 78295/243  k^2 - 79210/243  k + 649976/81 $\\  
\hline
 \end{tabular}
 \end{center}
 \caption{The polynomials $p_{\text{RC},a}^3$ used in~\cref{eq:rct3}}\label{tab:t3}
\end{table}

\item Arbitrary \texorpdfstring{$t$}{} : 
Here we report the results of the calculation of $\tr \rho_{(kt-a,a)}(RC)$ for small values of $a$. We assume that $k,t\geq a$. Sadly, the polynomials get too complicated very quickly, and do not have enough space to report even the first $10$ of them. However, we can identify a pattern for the highest order terms:
\begin{equation}\label{eq:rcgent}
    \tr \rho_{(kt-a,a)}(RC) = (k!)^t (t!)^k \times 2^{1-(-1)^a}  \times \prod_{i=0}^{a-1} \left[(k-i)(t-i)\right]^{1-\lfloor \frac{a}{i+1}\rfloor}\times p_{\text{RC},a}(k,t),
\end{equation} 
where $p_{\text{RC},a}(k,t)$ are monic polynomials reported in~\cref{tab:rcgen}.
\begin{table}
\begin{center}
 \begin{tabular}{||c| c |} 
 \hline
 $a$& $p_{\text{RC,}a}(k,t)$\\
 \hline
$ 0 $ & $ 1 $  \\
$ 1 $ & $ 0 $  \\
$ 2 $ & $ 1 $  \\
$ 3 $ & $ 1 $  \\
$ 4 $ & $k^2 t^2 + k^2 t + k t^2 + 25 k t - 30 k - 30 t + 36 $  \\
$ 5 $ & $k^2 t^2 + 5 k^2 t + 5 k t^2 + 49 k t - 84 k - 84 t + 144 $  \\
$ 6 $ & $k^5 t^5 + 2 k^5 t^4 + 2 k^4 t^5 - 5 k^5 t^3 + 20 k^4 t^4 - 5 k^3 t^5 + \text{Lower order terms}$  \\
$ 7 $ & $ k^5 t^5 + 10 k^5 t^4 + 10 k^4 t^5 + 11 k^5 t^3 + 28 k^4 t^4 + 11 k^3 t^5 + \text{Lower order terms}$  \\
$ 8 $ & $k^8 t^8 + 4 k^8 t^7 + 4 k^7 t^8 - 16 k^8 t^6 - 4 k^7 t^7 - 16 k^6 t^8+ \text{Lower order terms}$  \\
$ 9 $ & $ k^9 t^9 + 14 k^9 t^8 + 14 k^8 t^9 + 12 k^9 t^7 - 4 k^8 t^8 + 12 k^7 t^9+ \text{Lower order terms}$  \\
$ 10 $ & $ k^{12} t^{12} + 5 k^{12} t^{11} + 5 k^{11} t^{12} - 46 k^{12} t^{10} - 43 k^{11} t^{11} - 46 k^{10} t^{12}+ \text{Lower order terms} $\\
\hline
 \end{tabular}
\end{center}
 \caption{The polynomials $p_{\text{RC},a}(k,t)$ used in~\cref{eq:rcgent}.}\label{tab:rcgen}
\end{table}
\end{itemize}

\subsection{The Cauchy identity and \texorpdfstring{$\tr \left[\rho_\lambda (RCRC) \right]$}{} for \texorpdfstring{$l(\lambda)=2$}{}}\label{sec:RCRC_calc}
The techniques of~\cref{sec:RC_calc} can also be used to find exact expressions for $\tr[\rho_\lambda(RCRC) ]$. This quantity is more complicated than $\tr[\rho_\lambda(RC) ]$, and the calculations are less manageable. Therefore, we can compute a smaller number of terms. 
\par
We remind the reader that our main goal is to compute the terms in the expansion~\cref{eq:expansion-formula} and show that they stay small as one increases $k$ and $t$. This goal is already achieved by the computations in~\cref{sec:RC_calc} and using the upper bounds of~\cref{sec:pleth}. However, it is beneficial to compute $\tr[\rho_\lambda(RCRC) ]$ in order to derive more accurate lower bounds for the permanent moments.
\par
To state the results in a compact form, define $\text{Rect}_a(k,t)$ to be the set of Young diagrams of $a$ boxes that fit in a rectangle of $t$ rows and $k$ columns. Let $K$ be the matrix of Kostka numbers restricted to $\text{Rect}_a(k,t)$, i.e., $(K)_{\alpha, \beta}=K_{\alpha\beta}$ where $(\cdot)_{i,j}$ indicate the matrix element in the $i$th row and the $j$th column, and $\alpha, \beta \in \text{Rect}_a(k,t)$. Similarly, define $\hat K$ to be the Kostka matrix restricted to the Young diagrams in $\text{Rect}_a(t,k)$. Moreover, define 
\begin{equation}\label{eq:omega_def}
 \Omega_{\mu}^{r,s} :=r! \frac{\prod_{i=1}^r \mu_i! (s-\mu_i)!}{\#_\mu(0)!\cdots \#_\mu(s)! }, \text{ with } \#_\mu(0) = r - l(r), 
\end{equation}  
where $\#_\mu(i)$ counts the number of times that the number $i$ appears in the Young diagram $\mu$. Lastly, let $\Delta$ to be the matrix that maps Young diagrams to their conjugate. With this rather lengthy list of definitions, we can state the main result of this section:
\begin{thm}[\texorpdfstring{$\tr \left[\rho_\lambda (RCRC) \right]$}{} when \texorpdfstring{$l(\lambda)\geq 2$}{}]\label{thm:RCRC-formula}
Let $k$ and $t$ be positive integers and $a \geq 0$. Then,
\begin{equation} \label{eq:RCRC-formula-from-Q}
\tr[\rho_{(kt-a,a)}(RC) ] = \Gamma(k,t,a) -\Gamma(k,t,a-1) \text{ if }a>0 \quad\text{and}\quad \Gamma(k,t,0)\quad\text{ when }a=0, 
\end{equation}
where $\Gamma(k,t, a)$ is defined as: 
\[ \Gamma(k,t, a) := \sum_{\alpha_1,\cdots, \alpha_8 \in \text{Rect}_a(k,t)} K_{\alpha_1 \alpha_2} \Omega_{\alpha_2}^{k,t} K_{\alpha_3\alpha_2} K_{\hat \alpha_3 \hat \alpha_4} \Omega_{\hat \alpha_4}^{t,k}K_{\hat \alpha_5 \hat \alpha_4}K_{\alpha_5\alpha_6}\Omega_{\alpha_6}^{k,t} K_{\alpha_7\alpha_6} K_{\hat \alpha_7 \hat\alpha_8 } \Omega_{\hat \alpha_8}^{t,k} K_{\hat \alpha_1\hat \alpha_8},\]
with $K_{\mu\nu}$ being the Kostka number and $\Omega_{\mu}^{a,b}$ defined in~\cref{eq:omega_def}. In the more compact matrix form, this reads as,
\[\Gamma(k,t,a) = \tr\left[(K \Omega K^T \Delta \hat K \hat \Omega \hat K^T \Delta)^2\right],\]
where $\Omega$ is the diagonal matrix with diagonal elements $\Omega^{k,t}_\mu$, and $\hat \Omega$ is the matrix with diagonal elements $\Omega^{t,k}_\mu$.
\end{thm}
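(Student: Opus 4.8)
The plan is to carry out the Cauchy-identity computation behind \cref{thm:RC-formula} once more, with the single operator product $RC$ replaced by $RCRC$. As there, begin from the Schur--Weyl identity $\tr[RCRC\,X^{\ot kt}]=\sum_{\lambda}\tr[\rho_\lambda(RCRC)]\,s_\lambda(x_1,\dots,x_d)$ (the $RCRC$ analogue of~\cref{eq:2rowres}) and set $X=\mathrm{diag}(x_1,x_2,0,\dots,0)$, which kills every $s_\lambda$ with $l(\lambda)>2$ and leaves $\tr[RCRC\,X^{\ot kt}]=\sum_{l(\lambda)\le2}\tr[\rho_\lambda(RCRC)]\,s_\lambda(x_1,x_2)$. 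Extracting the coefficient of $x_1^{kt-a}x_2^{a}$ and using $s_{(kt-a,a)}(x_1,x_2)=\sum_{b=a}^{kt-a}x_1^{b}x_2^{kt-b}$ reduces the theorem to proving that this coefficient equals $\Gamma(k,t,a)$; the recursion $\tr[\rho_{(kt-a,a)}(RCRC)]=\Gamma(k,t,a)-\Gamma(k,t,a-1)$ is then immediate, exactly as at the end of the proof of \cref{thm:RC-formula}.

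To compute $\tr[RCRC\,X^{\ot kt}]$, write $R=\big(\EE_\alpha(\alpha\alpha^\dagger)^{\ot t}\big)^{\ot k}$ and $C=\big(\EE_\beta(\beta\beta^\dagger)^{\ot k}\big)^{\ot t}$ as in \cref{thm:RC-formula}, using four independent complex-Gaussian families $\{\alpha_i\},\{\gamma_i\}$ for the two $R$-factors and $\{\beta_j\},\{\delta_j\}$ for the two $C$-factors. Since $R$ and $C$ are built from row- and column-symmetrizers, each commutes with every uniform tensor power $Y^{\ot kt}$; choosing $Y$ with $Y^4=X$ and sliding one copy of $Y^{\ot kt}$ into each of the four gaps of $RCRC$ gives
\begin{multline*}
\tr[RCRC\,X^{\ot kt}]=\tr\big[R\,Y^{\ot kt}\,C\,Y^{\ot kt}\,R\,Y^{\ot kt}\,C\,Y^{\ot kt}\big]\\
=\EE\prod_{i=1,j=1}^{k,t}(\alpha_i^\dagger Y\beta_j)(\beta_j^\dagger Y\gamma_i)(\gamma_i^\dagger Y\delta_j)(\delta_j^\dagger Y\alpha_i),
\end{multline*}
where every bilinear form now involves only the first two coordinates of the vectors, because $Y$ has rank~$2$. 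Factoring the coordinate-$1$ part out of each bracketed product and applying the Cauchy identity~\cref{eq:cauchy} to each of the four products converts it into $\sum_{\lambda_r}(y_2/y_1)^{|\lambda_r|}\,s_{\lambda_r}(\cdot)\,s_{\tilde\lambda_r}(\cdot)$, $r=1,\dots,4$, with one Schur factor in the ``left'' variables of that bond and the conjugate diagram in the ``right'' ones; expanding each $s_{\lambda_r}$ and $s_{\tilde\lambda_r}$ in the monomial basis by Kostka numbers then introduces four content partitions as well.

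Next I would perform the Gaussian expectations one family at a time. Each family is shared by two adjacent Cauchy factors, barred in one and unbarred in the other, so $\EE\,z^{a}\bar z^{b}=a!\,\delta_{ab}$ applied coordinatewise forces the two relevant content partitions to coincide, collapses the ordered-composition sums through $m_\mu(1,\dots,1)=k!/\prod_i\#_\mu(i)!$, and leaves precisely the weight $\sum_\mu K_{\cdot\,\mu}\,\Omega^{k,t}_\mu\,K_{\cdot\,\mu}$ for the two ``row'' families and $\sum_\nu K_{\cdot\,\nu}\,\Omega^{t,k}_\nu\,K_{\cdot\,\nu}$ for the two ``column'' families, with $\Omega$ as in~\cref{eq:omega_def}. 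Degree bookkeeping around the four-cycle forces $|\lambda_1|=\dots=|\lambda_4|$, and since we are extracting the $x_2^{a}$-coefficient this common value is~$a$; a dominance-order argument ($K_{\lambda\mu}\ne0\Rightarrow\mu_1\le\lambda_1$, combined with the row/column bounds built into the Cauchy identity) confines every diagram that occurs to $\mathrm{Rect}_a(k,t)$ or its transpose, so that $K,\hat K,\Omega,\hat\Omega$ really are the finite matrices of the statement. Summing out the four Schur labels then produces four factors $K^{\mathrm T}\Delta\hat K$ --- one conjugation $\Delta$ at each spot where a diagram passes from a row-block to a column-block --- and reassembling the cycle gives $\Gamma(k,t,a)=\tr\big[(K\,\Omega\,K^{\mathrm T}\,\Delta\,\hat K\,\hat\Omega\,\hat K^{\mathrm T}\,\Delta)^2\big]$, the outer square reflecting the $(\mathrm{row})(\mathrm{col})(\mathrm{row})(\mathrm{col})$ structure of $RCRC$; running the same argument for $RC$ gives the first power of that matrix, which one checks agrees with the $(\mathrm{IB}_{\mu\nu})^{2}$ formula of \cref{thm:RC-formula} via $\mathrm{IB}_{\mu\nu}=\sum_\lambda K_{\tilde\lambda\mu}K_{\lambda\nu}$.

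The main obstacle should be this last bookkeeping, not any individual estimate: one has to track, for each of the four Cauchy factors, which of $s_{\lambda_r}$ and $s_{\tilde\lambda_r}$ lands in which variable family --- and hence which Kostka factors are conjugated --- so that the four ``stitching'' sums assemble into exactly $(K\Omega K^{\mathrm T}\Delta\hat K\hat\Omega\hat K^{\mathrm T}\Delta)^2$ rather than some other arrangement of the same matrices, and one must confirm that the diagrams that appear are precisely those in $\mathrm{Rect}_a(k,t)$ and $\mathrm{Rect}_a(t,k)$. Conceptually this is the computation of \cref{thm:RC-formula} run once more, but the four-cycle makes it substantially more error-prone.
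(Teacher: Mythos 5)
Your proposal follows essentially the same route as the paper's own proof: expand $\tr[RCRC\,X^{\ot kt}]$ via Schur--Weyl in the two-variable regime, represent each $R$ and $C$ factor by an independent Gaussian family $\alpha,\beta,\gamma,\delta$ via $R=(\EE_\alpha(\alpha\alpha^\dagger)^{\ot t})^{\ot k}$ etc., slide a quarter-power of $X$ into each of the four gaps, hit each of the four bilinear products with the Cauchy identity \eqref{eq:cauchy}, and carry out the four Gaussian pairings to produce the $K$, $\hat K$, $\Omega$, $\hat\Omega$, $\Delta$ factors around the four-cycle. The paper performs exactly this computation (it even introduces the same shorthand $s_\lambda(v^1,v^2)$ for the ``dehomogenized'' Schur factor), so the two arguments coincide modulo the level of detail you leave to ``a few lines of algebra.''
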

\begin{proof}
To derive explicit formulas, consider expanding $ \tr[RCRCX^{\ot kt}]$ using the Cauchy identity. Skipping a few lines of algebra and following the logic of the proof of~\cref{thm:RC-formula}, we obtain, 
\begin{multline}
    \tr[ \rho_\lambda(RCRCX^{\ot kt})] = \EE \left[ \prod_{i=1,j=1}^{k,t}(\alpha_i^1 \overline{\beta}_j^1  x_1^{1/4}+\alpha_i^2 \overline{\beta}_j^2 x_2^{1/4} ) \right] 
    \left[ \prod_{i=1,j=1}^{t,k}(\beta_i^1 \overline{\gamma}_j^1  x_1^{1/4}+\beta_i^2 \overline{\gamma}_j^2 x_2^{1/4} ) \right]\times\\
    \left[ \prod_{i=1,j=1}^{k,t}(\gamma_i^1 \overline{\delta}_j^1  x_1^{1/4}+\gamma_i^2 \overline{\delta}_j^2 x_2^{1/4} ) \right] 
    \left[ \prod_{i=1,j=1}^{t,k}(\delta_i^1 \overline{\alpha}_j^1  x_1^{1/4}+\delta_i^2 \overline{\alpha}_j^2 x_2^{1/4} ) \right].
\end{multline}
Where similar to~\cref{sec:RC_calc}, the average is over vectors $\alpha^1, \beta^1, \gamma^1, \delta^1,\alpha^2, \beta^2, \gamma^2, \delta^2$ with elements distributed according to the i.i.d. standard complex Gaussian distribution. 
\par
Assume that $v^1$ and $v^2$ are vectors of dimension $k$, and $w^1$ and $w^2$ are vectors of dimension $t$. To further simplify our calculations, define $s_\lambda(v^1, v^2) : = (v^1_1\cdots v^1_k)^t s_\lambda\left(\frac{v^2_1}{v^1_1},\cdots,\frac{v^2_k}{v^1_k}\right)$. Similarly, when the dimension of the input vector is $t$, define $s_\lambda(w^1, w^2)$ by exchanging $t$ and $k$ in the above definition. With this notational remark and using the Cauchy identity, we get,
\begin{multline}
    \tr[ \rho_\lambda(RCRCX^{\ot kt})] = \\
    \EE \sum_{\lambda_1,\lambda_2,\lambda_3,\lambda_4} 
    s_{\lambda_1}(\alpha^1,\alpha^2)\overline{s_{\hat \lambda_1}(\beta^1,\beta^2)}
    s_{\lambda_2}(\beta^1,\beta^2)\overline{s_{\hat \lambda_2}(\gamma^1,\gamma^2)}
    s_{\lambda_3}(\gamma^1,\gamma^2)\overline{s_{\hat \lambda_3}(\delta^1,\delta^2)}
    s_{\lambda_4}(\delta^1,\delta^2)\overline{s_{\hat \lambda_4}(\alpha^1,\alpha^2)}\times\\
    x_1^{kt-\frac14(|\lambda_1|+|\lambda_2|+|\lambda_3|+|\lambda_4|)}x_2^{\frac14(|\lambda_1|+|\lambda_2|+|\lambda_3|+|\lambda_4|)}.
\end{multline}
Note that $\EE_\alpha s_\mu(\alpha^1,\alpha^2)\overline{s_\nu(\alpha^1,\alpha^2)} =\sum_\lambda K_{\mu\lambda} K_{\nu\lambda}\, \EE[m_\lambda(\alpha^1,\alpha^2)\overline{m_\lambda(\alpha^1,\alpha^2)}]$. 
\par 
After another few lines of algebra and using the techniques of the previous section, we have
\[\tr[ \rho_\lambda(RCRCX^{\ot kt})] =\sum_{a=0}^{kt} x_1^{kt-a}x_2^a \Gamma_a^{k,t}, \]
with 
\[ \Gamma_a^{k,t} = \sum_{\alpha_1,\cdots, \alpha_8 \in \text{Rect}_a(k,t)} K_{\alpha_1 \alpha_2} \Omega_{\alpha_2}^{k,t} K_{\alpha_3\alpha_2} K_{\hat \alpha_3 \hat \alpha_4} \Omega_{\hat \alpha_4}^{t,k}K_{\hat \alpha_5 \hat \alpha_4}K_{\alpha_5\alpha_6}\Omega_{\alpha_6}^{k,t} K_{\alpha_7\alpha_6} K_{\hat \alpha_7 \hat\alpha_8 } \Omega_{\hat \alpha_8}^{t,k} K_{\hat \alpha_1\hat \alpha_8}.\]
After basic manipulations, this can be rewritten as
\[\Gamma(k,t,a) = \tr\left[(K \Omega K^T \Delta \hat K \hat \Omega \hat K^T \Delta)^2\right].\]
\end{proof}
The matrix form of~\cref{thm:RCRC-formula} is more efficient, and is used in our computerized calculations. Here are the results:
\begin{itemize}
\item{\texorpdfstring{$t=2$}{} : \texorpdfstring{$\tr \left[\rho_\lambda(RCRC) \right]$}{}.}
Here, we report the explicit results for $\tr [\rho_\lambda(RCRC)]$ when $t=2$. For this specific case, the calculations are very simple, and we do not need to use the machinery of~\cref{thm:RCRC-formula}. As it was shown in~\cref{eq:pleth-k-2} all of the plethysm coefficients are either $0$ or $1$. Therefore, both inequalities in~\cref{thm:pleth_ineq} are tight, and $\tr [\rho_\lambda(RCRC)]=\left(\tr [\rho_\lambda(RC)]\right)^2$.
\par
We have already computed $\tr [\rho_\lambda(RC)]$ in~\cref{rem:t-2}, hence,
\begin{equation}\label{eq:rcrct2}
\tr[\rho_{(2k-a,a)}(RCRC)] =2^{2k} (k!)^4 \times \delta_{a=0\text{ mod } 2}\times  2^{-2a} \left(\frac{\binom{a}{a/2}}{\binom{k}{a/2}}\right)^2. \end{equation}
\item{\texorpdfstring{$t=3$}{} : \texorpdfstring{$\tr \left[\rho_\lambda(RCRC) \right]$}{}.}
We report the result of $\tr [ \rho_\lambda(RCRC)]$, with $2\leq k\leq 10$ and $k\geq a$.
\par 
Again, we can analyze the results of our computerized calculations of~\cref{thm:RCRC-formula} and find that they all have the form of 
\begin{equation}\label{eq:rcrct3form}
    \tr[\rho_{(2k-a,a)}(RCRC)] =k!^{2t}t!^{2k}\frac{1}{\left[Q^3_{a}(k)\right]^2}\times p^3_{\text{RCRC,}a}(k),
\end{equation}
where $Q^3_{a}(k)$ is defined in~\cref{eq:q3lkdef} and the monic polynomials $p^3_{\text{RCRC,}a}(k)$ are reported in~\cref{tab:rcrct3}.
\begin{table}
\begin{center}
 \begin{tabular}{||c| c |} 
 \hline
 $a$& $p_{\text{RCRC,a}}^3(k)$\\
 \hline
$ 0 $ & $ 1 $  \\
$ 1 $ & $ 0 $  \\
$ 2 $ & $ 1 $  \\
$ 3 $ & $ 1 $  \\
$ 4 $ & $ 1 $  \\
$ 5 $ & $ 1 $  \\
$ 6 $ & $  k^4 - 10/9k^3 + 1729/81k^2 - 7880/81k + 8464/81$  \\
$ 7 $ & $ 1 $  \\
$ 8 $ & $ k^4 + 14/9k^3 + 6037/81k^2 - 45976/81k + 75076/8 $  \\
$ 9 $ & $  k^4 - 16/3k^3 + 1582/81k^2 - 688/9k + 355/3$  \\
$ 10 $ & $  k^4 + 62/9k^3 + 18865/81k^2 - 21488/9k + 5184 $\\
\hline
 \end{tabular}
\end{center}
\caption{Value of the monic polynomial $p^3_{\text{RCRC,a}}(k)$ used in~\cref{eq:rcrct3form}.}\label{tab:rcrct3}
\end{table}
\item Arbitrary \texorpdfstring{$t$}{} : 
Lastly, we report $\tr \rho_{(kt-a,a)}(RCRC)$ for small values of $a$. We assume that $k,t\geq a$. Like the case of $\tr \left[\rho_\lambda(RC) \right]$, the polynomials are too complicated and do not fit in the page. By close inspection, we can see that:
\begin{equation}\label{eq:rcrcgent}
    \tr \rho_{(kt-a,a)}(RCRC) = \left((k!)^t (t!)^k \times 2^{1-(-1)^a}  \times \prod_{i=0}^{a-1} \left[(k-i)(t-i)\right]^{1-\lfloor \frac{a}{i+1}\rfloor} \right)^2\times p_{\text{RCRC},a}(k,t), 
\end{equation}
where $p_{\text{RCRC},a}(k,t)$ is reported in~\cref{tab:rcrcgent}. 
\begin{table}
\begin{center}
 \begin{tabular}{||c| c |} 
 \hline
 $a$& $p_{\text{RCRC,}a}(k,t)$\\
 \hline
$ 0 $ & $ 1 $  \\
$ 1 $ & $ 0 $  \\
$ 2 $ & $ 1 $  \\
$ 3 $ & $ 1 $  \\
$ 4 $ & $ k^4 t^4 + 2 k^4 t^3 + 2 k^3 t^4 + k^4 t^2 - 20 k^3 t^3 + k^2 t^4 + \text{Lower order terms}   $  \\
$ 5 $ & $ k^4  t^4 + 10  k^4  t^3 + 10  k^3  t^4 + 25  k^4  t^2 - 140  k^3  t^3 + 25  k^2  t^4 +\text{Lower order terms}   $  \\
$ 6 $ & $ k^{10}  t^{10} + 4  k^{10}  t^9 + 4  k^9  t^{10} - 6  k^{10}  t^8 - 56  k^9  t^9 - 6  k^8  t^{10} +  \text{Lower order terms}  $  \\
$ 7 $ & $ k^{10}  t^{10} + 20  k^{10}  t^9 + 20  k^9  t^{10} + 122  k^{10}  t^8 - 104  k^9  t^9 + 122  k^8  t^{10} +  \text{Lower order terms}   $  \\
\hline
 \end{tabular}
\caption{Polynomials $p_{\text{RCRC},a}(k,t)$ used in~\cref{eq:rcrcgent}} \label{tab:rcrcgent}
\end{center}
\end{table}
\end{itemize}
\subsection{Comments on \texorpdfstring{$\tr \left[\rho_\lambda (RC) \right]$}{} for general  \texorpdfstring{$\lambda$}{}}\label{sec:Comments_calc}\label{sec:comments}
We briefly comment on general methods for computing $\tr \left[\rho_\lambda (RC) \right]$ when $l(\lambda)\geq 2$. We will not be able to provide explicit formulas like the ones derived in~\cref{thm:RC-formula} and~\cref{thm:RCRC-formula}, but we can establish algorithmic procedures for computing $\tr \left[\rho_\lambda (RC) \right]$ and report some partial results. 
\par
As our starting point, we construct a class of representations of the permutation group $S_{kt}$.
Consider a Young diagram $\lambda$, where $|\lambda|=kt$. Let $\omega_\lambda$ be the set of all tuples of length $kt$, where each tuple contains $\lambda_1$ many 1's, $\lambda_2$ many 2's, an so on. It is easy to see that \[|\omega_\lambda| = \left( \begin{array}{c}kt\\ \lambda_1 ,\lambda_2,\cdots, \lambda_{kt}\end{array}\right).\]
The permutation group $S_{kt}$ naturally acts on $\omega_\lambda$ by permuting the elements of the tuples:
\[ \pi(v_1,v_2, \cdots, v_{kt}) = (v_{\pi^{-1}(1)},v_{\pi^{-1}(2)},\cdots, v_{\pi^{-1}(kt)}).\]
We can use this action to define a representation of $S_{kt}$.
Consider a vector space with the basis elements $e_v$ for every $v\in \omega_\lambda$. Clearly, the dimension of this vector space is $|\omega_\lambda|$. An element of the permutation group $\pi \in S_{kt}$ acts on this space by mapping basis element's index tuples under the permutation group: $e_v \rightarrow e_{\pi(v)}$. In this way, we construct a representation of the permutation group of dimension $|\omega_\lambda|$ which we call $\Psi_\lambda$.
\par
when $\lambda = (kt)$, $\Psi_\lambda$ is the trivial representation, as $\omega_\lambda$ has only one element. This is the only case where $\Psi_\lambda$ is irreducible. The next simple case is $\lambda = (kt-1,1)$, where $\Psi_\lambda$ is nothing but the standard $kt$ dimensional representation of the symmetric group. In general, this representation can be decomposed to irreducible representations, each one appearing with a degeneracy given by the Kostka numbers:
\begin{equation}\label{eq:psidec}
    \Psi_{\lambda}=\bigoplus_{\mu} (\rho_\mu)^{\times K_{\mu\lambda}}.
\end{equation}
See~\cite{fulton2013representation}, section 4.3 for a proof. Therefore, we have that
\[\tr [\Psi_\lambda (RC)] =  \sum_{\lambda \leq \mu} K_{\mu \lambda} \tr [\rho_\mu (RC)], \]
where the condition $\lambda \leq \mu$ (in lexicographical order) follows from~\cref{eq:zerokost}. Using the inverse Kostka numbers (defined the in paragraph above~\cref{eq:zerokostinv}) and~\cref{eq:zerokostinv} we have
\begin{equation}\label{eq:psirc}
    \tr [\rho_\lambda (RC)] =  \sum_{\lambda \leq \mu} (K^{-1})_{\mu \lambda} \tr [\Psi_\mu (RC)].
\end{equation}
\Cref{eq:psirc} shows that we can compute $ \tr [\rho_\lambda (RC)]$ if we know $ \tr [\Psi_\mu (RC)]$. More importantly, it indicates that we only need to compute $ \tr [\Psi_\mu (RC)]$ for $\lambda \leq \mu$, which limits our calculations to a small number of Young diagrams if most of the boxes of $\lambda$ are in its first row.
\par
We will shortly state the main result of this sections, generalizing~\cref{thm:RC-formula}. But before that, we need to define \emph{row and column types} of matrices:
\begin{dfn}[Row type and column type]\label{dfn:rctypemat}
Let A be a $k\times t$ matrix filled with numbers $0,1,\cdots, l$. Define a tuple $u = (u^1,u^2,\cdots, u^k)$, where each $u^i$ itself is a vector defined by,
\[u^i_n:= \text{number of times that }n\text{ appears in the }i\text{-th row of }A \quad(\text{defined for }1\leq n \leq l). \]
We call the tuple $u$ the ``row type'' of the matrix $A$. Similarly, we define the ``column type'' of $A$ as the tuple $w=(w^1,w^2,\cdots, w^t)$, with 
\[w^i_n:= \text{number of times that }n\text{ appears in the }i\text{-th column of }A \quad(\text{defined for } 1\leq n \leq l). \]
Moreover, we define the coefficients $\text{IB}^{(l)}_{vw}$ to be the number of $0,1,\cdots, l$ matrices with the row type $v$, and column type $w$.
\end{dfn}
\begin{thm}\label{thm:RC-gen-formula}
Let $l=l(\lambda)$ and $\text{RowType}(\lambda)$ be the set of all tuples $u =(u^1,\cdots,u^k)$, such that
\[ \sum_{n=1}^{l} u^i_n = \lambda_n,\quad \text{for }1\leq i\leq k\]
and $u^1\geq u^2\geq \cdots\geq u^k$ in lexicographical order. Similarly, $\text{ColumnType}(\lambda)$ is defined to be the set of all tuples $w =(w^1,\cdots,w^t)$ in lexicographical order where,
\[ \sum_{n=1}^{l(\lambda)} w^i_n = \lambda_n,\quad \text{for }1\leq i\leq t.\]
Then,
\begin{multline}\label{eq:badlylong}
\tr [ \Psi_\lambda(RC)] = k!t!\times \\ \sum_{\small\begin{array}{c}
u\in \text{RowType}(\lambda)\\ 
w\in \text{ColumnType}(\lambda)
\end{array} } 
\left(\text{IB}^{(l)}_{uw}\right)^2 
\frac{
\prod_{i=1}^k \left(u_1^i!\cdots u_l^i!\left(t-\sum_j u_j^i\right)!\right)
\prod_{i=1}^t \left(w_1^i!\cdots w_l^i!\left(k-\sum_j w_j^i\right)!\right)}
{|\text{Stab}_{S_k}(u)||\text{Stab}_{S_t}(w)|},
\end{multline}
with \[|\text{Stab}_{S_k}(u)|=\prod_{v\text{ a vector of length l}} (\text{number of times that }v \text{ appears in }u)!,\]
and 
\[|\text{Stab}_{S_t}(w)|=\prod_{v\text{ a vector of length l}} (\text{number of times that }v \text{ appears in }w)!.\]
\end{thm}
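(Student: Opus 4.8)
The plan is to prove \cref{eq:badlylong} by a direct fixed-point count, which is the shortest route and makes transparent why this generalizes \cref{thm:RC-formula}. By construction $\Psi_\lambda$ is the permutation representation of $S_{kt}$ on the set $\omega_\lambda$ of fillings of the $k\times t$ grid with content $\lambda$ (i.e.\ $\lambda_n$ cells carrying the value $n$), so $\tr[\Psi_\lambda(g)]$ is the number of such fillings fixed by $g$. With $R=\sum_{r\in R}r$, $C=\sum_{c\in C}c$ as in \cref{rem:notational-remark},
\[
\tr[\Psi_\lambda(RC)]=\sum_{r\in R,\,c\in C}\tr[\Psi_\lambda(rc)]=\#\bigl\{(r,c,v)\ :\ v\in\omega_\lambda,\ r(c(v))=v\bigr\},
\]
and the proof is the evaluation of this triple count.

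First I would fix a filling $v$ and count the pairs $(r,c)$. The condition $r(c(v))=v$ reads $c(v)=r^{-1}(v)$: a column-shuffle of $v$ must coincide with a row-shuffle of $v$. By orbit--stabilizer for $S_k$ acting on each column separately, the number of $c\in C$ carrying $v$ to a prescribed array $v'$ equals $\prod_{j=1}^t\prod_n w^j_n!$ if $v'$ has the same column type $w$ as $v$ (in the sense of \cref{dfn:rctypemat}) and is $0$ otherwise; symmetrically the number of $r\in R$ with $r^{-1}(v)=v'$ equals $\prod_{i=1}^k\prod_n u^i_n!$ if $v'$ has the row type $u$ of $v$, and $0$ otherwise. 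Hence the number of admissible $(r,c)$ for this $v$ equals $\bigl(\prod_{i,n}u^i_n!\bigr)\bigl(\prod_{j,n}w^j_n!\bigr)$ times the number of arrays $v'$ with row type $u$ and column type $w$, i.e.\ times $\text{IB}^{(l)}_{uw}$. Summing over $v\in\omega_\lambda$ and grouping by the ordered row type $u$ and ordered column type $w$ (both of content $\lambda$; there are exactly $\text{IB}^{(l)}_{uw}$ fillings with a given such pair) gives
\[
\tr[\Psi_\lambda(RC)]=\sum_{u,\,w}\bigl(\text{IB}^{(l)}_{uw}\bigr)^2\Bigl(\prod_{i,n}u^i_n!\Bigr)\Bigl(\prod_{j,n}w^j_n!\Bigr),
\]
the sum over ordered tuples $u=(u^1,\dots,u^k)$, $w=(w^1,\dots,w^t)$ of content $\lambda$.

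To reach the stated form I would then pass from ordered tuples to sorted representatives. The summand is invariant under the $S_k$-action permuting $u^1,\dots,u^k$ and the $S_t$-action permuting $w^1,\dots,w^t$ (permuting rows, resp.\ columns, of an array preserves content and column type, resp.\ row type, so $\text{IB}^{(l)}$ is unchanged, and the factorial weights are manifestly symmetric), so replacing each sum over ordered tuples by a sum over lexicographically sorted representatives multiplies by $k!/|\Stab_{S_k}(u)|$ and $t!/|\Stab_{S_t}(w)|$, with $\Stab$ as in the statement; this is exactly \cref{eq:badlylong}. The appearance of the symbol $0$ and of the extra factors $(t-\sum_j u^i_j)!$, $(k-\sum_j w^i_j)!$ is cosmetic: it amounts to labelling by $0$ the value attached to the largest part $\lambda_1$ (equivalently, the ``factor out the leading variable $x_1$'' step of \cref{thm:RC-formula}), so that these factorials are the factorials of the per-row and per-column multiplicities of that value.

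I expect no conceptual obstacle here; the work is entirely the bookkeeping of the two steps above (orbit--stabilizer counts, ordered-versus-sorted summation, and the relabelling introducing $0$). As a consistency check, specializing to $\lambda=(kt-a,a)$ gives $\tr[\Psi_{(kt-a,a)}(RC)]=Q(k,t,a)$, and the telescoping $\tr[\rho_{(kt-a,a)}(RC)]=\tr[\Psi_{(kt-a,a)}(RC)]-\tr[\Psi_{(kt-a+1,a-1)}(RC)]$ — which follows from \cref{eq:psidec} since $\Psi_{(kt-a,a)}=\bigoplus_{b\le a}\rho_{(kt-b,b)}$ — recovers \cref{thm:RC-formula}. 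A second, equivalent derivation, paralleling the proof of \cref{thm:RC-formula} more closely: with $l:=l(\lambda)$, the number $\tr[\Psi_\lambda(RC)]$ is the coefficient of $x_1^{\lambda_1}\cdots x_l^{\lambda_l}$ in $\sum_\mu\tr[\rho_\mu(RC)]\,s_\mu(x_1,\dots,x_l)$ (using \cref{eq:psidec} and $s_\mu=\sum_\nu K_{\mu\nu}m_\nu$), and that polynomial equals $\EE_{\alpha,\beta}\prod_{i,j}\bigl|\sum_{n=1}^{l}\alpha_i^n\beta_j^n\sqrt{x_n}\bigr|^2$ via the Gaussian realizations $R=(\EE_\alpha(\alpha\alpha^\dagger)^{\ot t})^{\ot k}$, $C=(\EE_\beta(\beta\beta^\dagger)^{\ot k})^{\ot t}$; expanding the product over arrays $A\in\{1,\dots,l\}^{k\times t}$ and using $\EE(\alpha_i^n)^a\overline{(\alpha_i^n)^b}=a!\delta_{ab}$ yields the same sum.
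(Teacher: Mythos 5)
Your main argument is correct and is essentially the paper's own proof: you compute $\tr[\Psi_\lambda(rc)]$ as a fixed-point count over $\omega_\lambda$, use orbit--stabilizer to count the pairs $(r,c)$ fixing a given filling (yielding the factor $\mathrm{IB}^{(l)}_{uw}$ times the row/column stabilizer factorials), group fillings by ordered row and column type (again $\mathrm{IB}^{(l)}_{uw}$ of them), and convert ordered tuples to sorted representatives via the $k!/|\Stab_{S_k}(u)|$ and $t!/|\Stab_{S_t}(w)|$ factors, exactly as in the paper, and you correctly identify the ``$0$''-labelling of the value attached to $\lambda_1$ as mere bookkeeping. The supplementary Gaussian/Cauchy-identity rederivation and the $Q(k,t,a)$ consistency check are fine but not needed.
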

\begin{proof}
Recall that the representation $\Psi_\lambda$ is nothing but the action of the unitary permutation group on set $\omega_\lambda$. Therefore,
\[ \tr[\Psi_\lambda(rc) ]=\sum_{x \in \omega_\lambda} \langle e_x,r(c(e_x))\rangle = \sum_{x\in \omega_\lambda} \delta[{x=r(c(x))}]=\sum_{x\in \omega_\lambda} \delta[r(x)=c(x)].\]
Hence,
\[ \sum_{r\in R, c\in C} \tr[\Psi_\lambda(rc) ]=\sum_{x\in \omega_\lambda} \left(\sum_{r\in R, c\in C} \delta[r(x)=c(x)]\right).\]
If $u$ and $w$ are row and column types of a matrix $x$ (as defined in~\cref{dfn:rctypemat}), then one can see that
\[\sum_{r\in R, c\in C} \delta[r(x)=c(x)] = \text{IB}^{(l)}_{uw} 
\prod_{i=1}^k \left(u_1^i!\cdots u_l^i!\left(t-\sum_j u_j^i\right)!\right)
\prod_{i=1}^t \left(w_1^i!\cdots w_l^i!\left(k-\sum_j w_j^i\right)!\right).\]
It is easy to see that the number of times that row and column types $u$ and $v$ show up in $\sum_{x\in \omega_\lambda}$ is $\text{IB}^{(l)}_{uw}$, and there are 
$k!/|\text{Stab}_{S_k}(u)|$ many row types that are equal to $u$ up to permutations, and there are 
$t!/|\text{Stab}_{S_t}(w)|$ many column types with that equal to $w$ up to permutations. Therefore, we have,
\begin{multline*}
    \sum_{r\in R, c\in C} \tr[\Psi_\lambda(rc) ]= \sum_{\small\begin{array}{c}
u\in \text{RowType}(\lambda)\\ 
w\in \text{ColumnType}(\lambda)
\end{array} } \frac{k!}{|\text{Stab}_{S_k}(u)!}\frac{t!}{|\text{Stab}_{S_t}(w)!}\times \\
\text{IB}^{(l)}_{uw} 
\prod_{i=1}^k \left(u_1^i!\cdots u_l^i!\left(t-\sum_j u_j^i\right)!\right)
\prod_{i=1}^t \left(w_1^i!\cdots w_l^i!\left(k-\sum_j w_j^i\right)!\right).
\end{multline*}
which is the same as~\cref{eq:badlylong}.
\end{proof}
\par
We can use~\cref{thm:RC-gen-formula} and~\cref{eq:psirc} to compute $\tr [\rho_\lambda (RC)]$ for a few Young diagrams of depth $3$. If $\lambda = (kt-2,1,1)$,$(kt-3,2,1)$, $(kt-4,3,1)$, then $\tr [\rho_\lambda (RC)]=0$. This is because $\Pl^{k,t}_{(kt-2,1,1)}=\Pl^{k,t}_{(kt-3,2,1)}=\Pl^{k,t}_{(kt-4,3,1)}=0$ as was shown in~\cref{sec:pleth-ex}. The first interesting case that we can compute with the techniques of this section is $\lambda = (kt-4,2,2)$, which leads to 
\begin{equation}\label{eq:lambda22}
    \tr [ \rho_{(kt-4,2,2)} (RC)] = \frac{( k-2) ( t-2)}{(k-1) k^2 ( t-1) t^2} \quad\text{ for }k,t\geq 3.
\end{equation}
Again, this calculation is very lengthy and computerized. Note that we do not have an explicit formula for $\text{IB}_{uw}^{(l(\lambda))}$ in~\cref{eq:badlylong}, and it has been computed by brute-force counting. Because $\Pl^{k,t}_{(kt-2,1,1)}=1$, $\tr [ \rho_{(kt-4,2,2)} (RCRC)] =(\tr [ \rho_{(kt-4,2,2)} (RC)] )^2$.
\par
Independently, in~\cref{eq:expres}, we report the results of a brute force calculation of $\tr[\rho_\lambda(RC)]$ for all of the Young diagrams contributing to the cases $t=3,k=3$, and $t=3, k=4$. Because those calculations are brute force, we do not know their dependence on $k$ and $t$.
\subsection{Moment lower bounds}\label{sec:lbmgc}
Having computed $\tr [\rho_{\lambda} (RCRC)]$ for a number of Young diagrams, we can use~\cref{thm:main_expansion_formula} to lower bound $\EE_{M\sim \Gauss k} |\Perm(M)|^{2t}$. We remind the reader that in order to exactly compute all moments of permanents, we need to know the quantity $\tr [\rho_{\lambda} (RCRC)]$ for all Young diagrams of $kt$ boxes and depth at most $\min(k,t)$. But as we observed, the contribution of Young diagrams to the permanent moment rapidly vanishes as we decrease the number of boxes in the first row. This strongly suggest that the few Young diagrams for which we have computed $\tr [\rho_{\lambda} (RCRC)]$ could provide a good lower bound for the moments of permanents.
\par 
When evaluating the permanent moment expansion~\cref{thm:main_expansion_formula}, we do not attempt to incorporate all of the terms that we have computed. Rather, we use only $4$ or $5$ terms that give a simple lower bound, and remind the reader that one can obtain slightly better lower bounds by including more terms.
\par
The first term that we have computed correspond to the trivial representation: \[f^{(kt)} \tr \rho_{(kt)} (RCRC) = k!^{2t}t!^{2k}.\] The next term, $f^{(kt-1,1)} \tr \rho_{(kt-1,1)} (RCRC)$, is always zero. The third term is \[f^{(kt-2,2)} \tr \rho_{(kt-2,2)} (RCRC) =  (1/2k^2t^2 - 3/2kt) (1/(k^2t^2))=1/2-3/(2kt),\] and so on. If we include the fourth term (see~\cref{tab:rcrcgent}) we obtain: 
\begin{multline}\label{eq:upb2} \EE_{M\sim \Gauss k} |\Perm(M)|^{2t} \geq f^{(kt)} \tr \rho_{(kt)}+f^{(kt-1,1)} \tr \rho_{(kt-1,1)}+\\f^{(kt-2,2)} \tr \rho_{(kt-2,2)}+f^{(kt-3,3)} \tr \rho_{(kt-3,3)} = 
k!^{2t} t!^{2k}\left(3/2 + 7/(6 kt) -16/(k^2t^2)+40/(3k^3t^3)\right).
\end{multline}
Note that this expression is valid for when $\min(t,k)\geq 3$. \par
In~\cref{sec:Comments_calc}, we also studied a few Young diagrams of three rows. We observed the first non-zero contribution comes form $ \rho_{(kt-4,2,2)} (RC)$. This term gives the following contribution to $\EE_{M\sim \Gauss k} |\Perm(M)|^{2t}$ (when $t,k\geq 3$):
 \[f^{(kt-4,2,2)}(\tr[\rho_{(kt-4,2,2)}(RCRC)]) =k!^{2t}t!^{2k}\frac1{12}\frac{(-2 + k)^2 (-2 + t)^2 (-1 + k t) (3 + k t) (4 + k t)}{(-1 + k)^2 k^3 (-1 + t)^2 t^3}. \]
This term is relatively small and goes to $1/12 \times k!^{2t}t!^{2k}$ as $t,k\rightarrow\infty$. 
\par
We can include the above term in~\cref{eq:upb2}, and prove the following theorem:
\begin{thm}[Main lower bound on the moments of permanents]\label{thm:mainlowerbound}
Let $M$ be drawn from the distribution of $k\times k$ random i.i.d. complex Gaussian matrices. Let $k,t\geq 3$, then  
\[\EE_{M\sim \Gauss k} |\Perm(M)|^{2t} \geq \frac{k!^{2t}t!^{2k}}{(kt)!} \left[3/2 + 7/(6 kt) -16/(k^2t^2)+40/(3k^3t^3)\right]\]
Moreover, if $k,t\geq 4$,
\[\EE_{M\sim \Gauss k} |\Perm(M)|^{2t} \geq 1.625\, \frac{k!^{2t}t!^{2k}}{(kt)!}\]
\end{thm}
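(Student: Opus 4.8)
The plan is to read the bound straight off the non-negativity of the expansion \cref{eq:repeated-expansion-formula} (equivalently the $d\to\infty$ limit of~\cref{thm:main_expansion_formula}),
\[
\EE_{M\sim\Gauss k}\left|\Perm(M)\right|^{2t}=\frac1{(kt)!}\sum_{\lambda}f^\lambda\tr\left[\rho_\lambda(RCRC)\right],
\]
every summand of which is non-negative (as noted after~\cref{eq:expansion-formula}, since $\rho_\lambda(R)$ and $\rho_\lambda(C)$ are proportional to projectors). Consequently any finite subfamily of Young diagrams yields an honest lower bound, and the whole argument amounts to choosing enough diagrams whose contribution has already been evaluated and adding them up.

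For the first inequality (requiring $k,t\ge3$) I would retain the four diagrams $\lambda\in\{(kt),(kt-1,1),(kt-2,2),(kt-3,3)\}$. Their contributions are in hand: the trivial diagram $(kt)$ contributes $k!^{2t}t!^{2k}$ (here $\rho_{(kt)}(RCRC)=|R|^2|C|^2$), the hook $(kt-1,1)$ contributes $0$, and $\tr[\rho_{(kt-2,2)}(RCRC)]$ and $\tr[\rho_{(kt-3,3)}(RCRC)]$ come from the Cauchy-identity computation of~\cref{sec:RCRC_calc} --- \cref{thm:RCRC-formula} specialized to $l(\lambda)=2$, together with the $a=2,3$ data in~\cref{tab:rcrcgent}. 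Their sum is exactly $k!^{2t}t!^{2k}\bigl(3/2+7/(6kt)-16/(k^2t^2)+40/(3k^3t^3)\bigr)$, as recorded in~\cref{eq:upb2}; dividing by $(kt)!$ gives the first displayed inequality.

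For the sharper bound ($k,t\ge4$) I would additionally include $\lambda=(kt-4,4)$ and $\lambda=(kt-4,2,2)$. The former needs the full two-row computation of~\cref{sec:RCRC_calc} (the $a=4$ row of~\cref{tab:rcrcgent}, lower-order terms included); the latter is cheap because $\Pl^{k,t}_{(kt-4,2,2)}=1$ (\cref{sec:pleth-ex}), so the two sides of~\cref{thm:pleth_ineq} coincide and $\tr[\rho_{(kt-4,2,2)}(RCRC)]=\bigl(\tr[\rho_{(kt-4,2,2)}(RC)]\bigr)^2$, the inner quantity being supplied by~\cref{eq:lambda22} (recall from~\cref{sec:Comments_calc} that $(kt-4,2,2)$ is the first non-vanishing depth-three diagram). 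Adding these two extra non-negative terms to~\cref{eq:upb2} produces a concrete rational function $S(k,t)$, symmetric in $k\leftrightarrow t$, and the theorem reduces to showing $S(k,t)\ge 13/8$ for all integers $k,t\ge4$.

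The main obstacle is that last inequality, because the chosen truncation is \emph{asymptotically tight}: the leading pieces contribute $1+1/2+1/24+1/12=13/8$ as $k,t\to\infty$, so everything hinges on the sign of the $O(1/k)$, $O(1/t)$, $O(1/(kt))$ corrections, which do not share a sign (the $(kt-2,2)$ term carries $-3/(2kt)$, and the $(kt-4,2,2)$ term already falls slightly below $1/12$ when $k\approx t$). I would handle it by clearing the common denominator (manifestly positive for $k,t\ge2$) to turn $S(k,t)-13/8\ge0$ into a polynomial-positivity statement, then offset the negative monomials against the $O(1/k)+O(1/t)$ surplus carried by $(kt-4,4)$ and the positive $O(1/(kt))$ from $(kt-3,3)$, disposing of the finitely many small pairs $4\le k,t\le K_0$ by direct substitution. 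Should the margin prove uncomfortably thin, the fallback is simply to enlarge the truncation --- $(kt-5,5)$, $(kt-5,3,2)$, etc., all non-negative and all evaluated in~\cref{sec:RCRC_calc}/\cref{sec:Comments_calc} --- buying enough slack that the polynomial positivity is routine. All the structural input (the expansion formula, the plethysm bound, the Cauchy-identity and Kostka-number evaluations) is already in place from the earlier sections.
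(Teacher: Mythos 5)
Your proof strategy is the same as the paper's: drop non-negative terms from the expansion in~\cref{eq:repeated-expansion-formula} to obtain a lower bound, then sum the explicitly-computed contributions. For the first display the four-diagram truncation $\{(kt),(kt-1,1),(kt-2,2),(kt-3,3)\}$ is exactly what the paper uses, and you correctly cite~\cref{eq:upb2}. For the $13/8$ bound, however, you have actually caught something: the paper's text before~\cref{thm:mainlowerbound} says only to "include the above term" (i.e.\ $(kt-4,2,2)$) in~\cref{eq:upb2}, and that five-diagram truncation asymptotes only to $1+\tfrac12+\tfrac1{12}=\tfrac{19}{12}\approx1.583<\tfrac{13}{8}$; your observation that $(kt-4,4)$ must also be included, giving $1+\tfrac12+\tfrac1{24}+\tfrac1{12}=\tfrac{13}{8}$ in the limit, is necessary, not optional, and the paper should be read as implicitly including it. So your version is the more careful reading of what the proof has to be.

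That said, you have correctly located but not closed the genuine gap. Because this six-diagram truncation asymptotes exactly to $13/8$, the theorem hinges on the sign of the $O(1/k)+O(1/t)$ corrections; these are of mixed sign (the $(kt-2,2)$ and $(kt-4,2,2)$ terms fall below their asymptotic values, the $(kt-4,4)$ term rises above it). Establishing $S(k,t)\geq 13/8$ for all $k,t\geq 4$ therefore requires the full rational function $S$, but the paper supplies $p_{\text{RCRC},4}(k,t)$ in~\cref{tab:rcrcgent} only "up to lower order terms," so you cannot actually write $S$ down from the ingredients you cite; you would first have to push the Cauchy-identity computation of~\cref{thm:RCRC-formula} to get the exact degree-$8$ polynomial. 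Your fallback of enlarging the truncation with $(kt-5,5)$, $(kt-5,3,2)$, etc.\ is sound in principle (those contributions are non-negative and give positive slack), but neither you nor the paper carries out the finite-$(k,t)$ verification, so the proposal reaches "reduces to a checkable inequality" rather than an actual proof.
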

We will later conjecture that this lower bound is close to being tight (up to a small multiplicative factor). 
\subsection{Exact moment calculations} \label{sec:alg}
We introduce an algorithm that can be used to explicitly compute the moments of permanents and report the exact moments for the cases of $k=3$, $1\leq t\leq 100$ and $k=4$, $1\leq t\leq 10$. Sadly, our techniques are ineffective for larger moments. (We remind the reader the $k\leftrightarrow t$ symmetry of the moments, so our calculations cover the cases of $t=3$, $1\leq k\leq 100$ and $t=4$, $1\leq k\leq 10$ as well.)
\par 
We comment that it is possible to compute the permanent moments using direct sampling from the permanent distribution. Unfortunately, such computations typically take a very long time to converge for large moments, as the permanent distribution has a very large tail (note that the first interesting case that is not yet computed using our algorithm is the $10$-th moment of the permanent of $5\times 5$ matrices). This, in addition to the fact that the computation of the individual permanent samples is very expensive, makes direct sampling a rather inefficient approach for calculating the permanent moments. Hence, we found it necessary to develop a new algorithm.
\par
To start the technical discussion of our algorithm, define $\rho_{\mathrm{std}}(\pi)$ for $\pi \in S_k$ to be the standard $k\times k$ dimensional representation of the permutation $\pi$, i.e., $(\rho_{\mathrm{std}}(\pi))_{ij}=\delta_{\pi(i),j}$. Now, consider a $k\times k$ matrix $X$ and expand $\Perm(X)^t$ as a polynomial in its matrix elements $X_{ij}$ :
\begin{equation}\label{eq:momo}
 \left[\Perm (X)\right]^t = \left[\sum_{\pi \in S_k} \prod_{i=1}^k X_{i \pi(i)}\right]^t=\sum_{\pi_1,\cdots,\pi_t\in S_k} \prod_{i,j=1}^{k} X_{ij}^{\sum_{n=1}^t (\rho_{\mathrm{std}}(\pi_n))_{ij}}.
\end{equation}
Before proceeding, we make a few definitions:
\begin{dfn}\label{def:ap}
    A~\emph{(weak) $t$-magic square} is a $k\times k$ matrix of integer entries where the row and column sums are equal to $t$. We define $A_+(t,k)$ to be the set of all $k\times k$ weak $t$-magic squares.   
\end{dfn}
As a result of the Birkhoff-von Neumann theorem (or more fundamentally, the Hall's marriage theorem), every matrix in $A_+(t,k)$ can be decomposed as sum of $t$ permutation matrices in their standard representation. The decomposition, however, can be non-unique. 
\begin{dfn}\label{def:qa}
Let $A \in A_+(t,k)$. Define $Q(A)$ to be the set of all Birkhoff-von Neumann decompositions of $A$:
\[Q(A):=\Big\{(\pi_1,\pi_2,\cdots ,\pi_t) \in S_k^{\times t}\,\Big|\, \sum_{i=1}^t \rho_{\mathrm{std}}(\pi_i)=A \Big\}. \]
Similarly, define $\hat Q(A)$ as:
\[\hat Q(A):=\Big\{(a_\pi)_{\pi \in S_k} \in \ZZ_{\geq 0}^{\times t!}\,\Big|\, \sum_{\pi \in S_k} a_\pi \rho_{\mathrm{std}}(\pi)=A \Big\}. \]
\end{dfn}
In simple words $\hat Q(A)$ characterizes the solutions to the Birkhoff-von Neumann problem ignoring the order of the permutations, therefore,
\begin{equation}
    |Q(A)| = \sum_{(a_1,\cdots,a_{k!})\in \hat Q(A)}  \binom{t}{a_1,\cdots ,a_{k!}}.
\end{equation}

To simplify~\cref{eq:momo}, note that the matrix $\sum_{n=1}^t (\rho_{\mathrm{std}}(\pi_n))$ is a (weak) $t$-magic square. Therefore, we have
\begin{equation}\label{eq:momo1}
 \left[\Perm (X)\right]^t = \sum_{A \in A_+(t,k)} |Q(A)| \prod_{i,j=1}^k X_{ij}^{A_{ij}}.
\end{equation}
This immediately leads to
\begin{multline}\label{eq:perm_gen_form}
    \EE_{M \sim \Gauss k} |\Perm(M)|^{2t} = \sum_{A \in A_+(t,k)} |Q(A)|^2 \prod_{i,j=1}^k A_{ij}!=\\
    \sum_{A \in A_+(t,k)} \left(\sum_{(a_1,\cdots,a_{k!})\in \hat Q(A)} \binom{t}{a_1,\cdots ,a_{k!}}\right)^2 \prod_{i,j=1}^k A_{ij}!.
\end{multline}

We implemented an algorithm to compute this formula for $k=3$ and $k=4$, with results reported in the table below:
\par
\begin{center}
{\small \begin{tabular}{|c||c|c||c|c|}\hline
    $t$ & $\EE \left| \Perm (M_{k=3})\right|^{2t}$ &$\frac{\EE \left| \Perm (M_{k=3})\right|^{2t}}{(t!^{2k}k!^{2t}/(kt)!)}$ & $\EE \left| \Perm (M_{k=4})\right|^{2t}$ & $\frac{\EE \left| \Perm (M_{k=4})\right|^{2t}}{(t!^{2k}k!^{2t}/(kt)!)}$ \\\hline
1&6&                        1.000& 24&1.000  \\
2&144&                      1.250& 2880&1.367\\
3&8784&                     1.464& 1092096&1.629\\
4&1092096&                  1.629& 1031049216&1.780\\
5&241920000&                1.752& 2076785049600&1.853\\
6&87157555200&              1.840& 7993079444275200&1.880\\
7&47800527667200&           1.901& 54084131717382144000&1.882\\
8&37952477724672000&        1.942& 602679058543248880435200&1.873\\
9&41935572001986969600&     1.969& 10493450241312304614762086400&1.860\\
10&62462555284423311360000& 1.984& 273409548213807664837794201600000&1.845  \\\hline
\end{tabular}
}
\end{center}
See~\cref{sec:algt3} for a larger table for $k=3$.
\par 
As we can see in~\cref{fig:first}, the lower bound of~\cref{thm:mainlowerbound} is very close to the final exact value of the permanent moment when $t=3$, $3\leq k\leq 100$ (or $k=3$, $3\leq t\leq 100$) and $t=4$, $3\leq k\leq 10$ (or $k=4$, $3\leq t\leq 10$). For the regime of interest $k,t\geq 3$, the moment calculation shows a tendency to get closer to our $1.625 (k!)^{2k}(t!)^{2t}/((kt)!)$ lower bound as one increases $k$, and $t$.
\subsection{Conjectured upper bound and the moment growth conjecture}\label{sec:conjsection}
Now, we state our conjectured permanent moment upper bound and the permanent moment growth conjecture:
\begin{conj}[Permanent moment growth conjecture]\label{conj:maintext}
Let $M$ be drawn from the distribution of $k\times k$ random i.i.d. complex Gaussian matrices. If $3 \leq t,k\in \mathbb N$, then 
\begin{equation}
    \EE_{M \sim \Gauss k} |\Perm(M)|^{2t} \leq C \frac{k!^{2t} t!^{2k}}{(kt)!}.
\end{equation} 
The numerics suggest that $C\leq 2$.
Moreover, 
\begin{equation}
    \EE_{M \sim \Gauss k} |\Perm(M)|^{2t}/ \left(\frac{k!^{2t} t!^{2k}}{(kt)!}\right)\rightarrow C_k,
\end{equation} 
as $t$ gets large, for constants $C_k\leq 2$. As we increase $k$, the $C_k$ coefficients should asymptote a constant: $C_k \rightarrow C_\infty$. 
\end{conj}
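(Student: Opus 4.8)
The plan is to bound the normalized moment
\[
\Phi(k,t) := \frac{\EE_{M\sim \Gauss k}|\Perm(M)|^{2t}}{k!^{2t}t!^{2k}/(kt)!}
= 1 + \sum_{\lambda\vdash kt,\ \lambda\neq(kt),\ l(\lambda)\le\min(k,t)} T_\lambda,
\qquad T_\lambda := \frac{f^\lambda\,\tr[\rho_\lambda(RCRC)]}{k!^{2t}t!^{2k}},
\]
term by term (all $T_\lambda\ge 0$), grouping the diagrams by $a := kt-\lambda_1$, the number of boxes outside the first row, and writing $S_a(k,t):=\sum_{\lambda:\,kt-\lambda_1=a}T_\lambda$. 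The trivial term gives the leading $1$ (this recovers \cref{eq:lower-bound-1}), so the whole conjecture reduces to showing $\sum_{a\ge 1}S_a(k,t)\le 1$ — equivalently $C\le 2$ — and, more precisely, to producing a sequence $\varepsilon_a$ with $\sum_{a\ge1}\varepsilon_a\le 1$ such that $S_a(k,t)\le\varepsilon_a$ \emph{uniformly} for all integers $k,t\ge 3$.

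First I would reduce from $\tr[\rho_\lambda(RCRC)]$ to $\tr[\rho_\lambda(RC)]$ via \cref{thm:pleth_ineq}, which gives $T_\lambda\le f^\lambda(\tr[\rho_\lambda(RC)])^2/(k!^{2t}t!^{2k})$, an equality precisely when the relevant plethysm coefficient is $1$ (which \cref{sec:pleth-ex} confirms for all the dominant small-$a$ diagrams). For $l(\lambda)\le 2$ one then uses the closed form of \cref{thm:RC-formula} together with the leading-order pattern \cref{eq:rcgent} and the hook-length value $f^{(kt-a,a)}=\binom{kt}{a}-\binom{kt}{a-1}$ to show that the two-row part of $S_a$ is $O(1)$ and in fact decays in $a$ once $t\ge 3$; the $t=2$ case, where the ratio instead grows like $\sqrt{k/\pi}$, is exactly why the hypothesis $t,k\ge3$ cannot be dropped. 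For the deeper diagrams $l(\lambda)\ge 3$, rather than hunting for exact formulas I would settle for a crude but uniform upper bound: feed \cref{thm:RC-gen-formula} through the decomposition \cref{eq:psirc} (using the triangularity \cref{eq:zerokostinv} of the inverse Kostka matrix to keep the sum finite) and bound the combinatorial counts $\text{IB}^{(l(\lambda))}_{uw}$ and the dimensions $f^\lambda$ very generously; a bound of the shape $\tr[\rho_\lambda(RC)]/(k!^t t!^k)\le \mathrm{poly}(a)\,(kt)^{-\lceil a/2\rceil}$, valid uniformly in $k,t$, would already make $\sum_a S_a$ a convergent series for $k,t$ large, leaving only finitely many small pairs $(k,t)$ to be settled by the exact algorithm of \cref{sec:alg}.

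\textbf{The main obstacle} is this last point: no general formula is known for the plethysm coefficients $\Pl_\lambda^{k,t}$ or for $\tr[\rho_\lambda(RC)]$ with $l(\lambda)\ge3$, so the argument cannot be pushed through by exact evaluation and genuinely needs a new \emph{a priori} estimate showing that the high-depth, high-$a$ diagrams contribute a small, summable tail. A route that sidesteps the representation theory is the probabilistic reading of \cref{eq:2norm}: $\Phi(k,t)=(kt)!\,\|p\|_2^2$, where $p$ is the law of $r_1 c\, r_2$ for independent uniform $r_1,r_2\in R$, $c\in C$, so the claim is that the three-step alternating product on $S_{kt}$ has collision probability at most twice that of the uniform distribution. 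One might attack this directly by a coupling or a second-moment argument against uniformity, or via the magic-square identity \cref{eq:perm_gen_form}, which reduces the conjecture to a near-independence statement for the entries of a random weak $t$-magic square realized as a sum of $t$ i.i.d.\ uniform permutation matrices.

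Finally, for the ``moreover'' part (existence of $C_k=\lim_{t\to\infty}\Phi(k,t)$ and of $C_\infty=\lim_k C_k$): once the slices $S_a(k,t)$ are dominated by a summable $\varepsilon_a$, dominated convergence applied to the series $1+\sum_{a\ge1}S_a$ reduces the claim to the convergence of each fixed slice $S_a(k,t)$ as $t\to\infty$ (then as $k\to\infty$). That convergence is exactly the content of the rational-function structure exhibited in \cref{tab:rcgen,tab:rcrcgent,tab:rcrct3} — each $T_\lambda$ is conjecturally a ratio of explicit polynomials in $k,t$ that tends to a rational limit — and the $k\leftrightarrow t$ symmetry of \cref{cor:gauss_duality} guarantees the iterated limits agree. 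The bound $C_k\le 2$ then follows from the same slice estimates specialized to the limiting values.
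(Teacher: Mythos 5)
This statement is \cref{conj:maintext}, a \emph{conjecture}, and the paper does not prove it anywhere; it only assembles evidence for it in \cref{sec:conjsection}. So there is no ``paper's own proof'' for you to reproduce, and your proposal --- which you yourself flag as incomplete --- correctly reflects that: you have written a plan of attack, not a proof.

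That said, your plan closely tracks the evidence the paper actually offers, and the gap you isolate is exactly the gap that leaves this a conjecture. You correctly set up the normalized expansion $\Phi(k,t)=1+\sum_{\lambda\neq(kt)}T_\lambda$ from \cref{eq:repeated-expansion-formula}, invoke \cref{thm:pleth_ineq} to trade $\tr[\rho_\lambda(RCRC)]$ for $\tr[\rho_\lambda(RC)]$ at the cost of a plethysm factor, and use the two-row closed forms of \cref{thm:RC-formula} together with the data in \cref{tab:rcgen,tab:rcrct3,tab:rcrcgent} to control the shallow terms; you also correctly note that the $t=2$ anomaly ($\Phi\sim\sqrt{k/\pi}$) is what forces the $k,t\ge3$ hypothesis. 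The paper makes essentially the same observations, supplemented by two additional pieces of evidence you also touch on: the exact moment computations from \cref{sec:alg} and the magic-square / $D_2$-divergence heuristic of \cref{eq:apprtwoprob}.

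The genuine gap --- which you name candidly --- is the absence of any uniform, summable bound for the deep diagrams ($l(\lambda)\ge 3$) and for $\Pl_\lambda^{k,t}$ in general. Your conjectural estimate $\tr[\rho_\lambda(RC)]/(k!^t t!^k)\lesssim\mathrm{poly}(a)\,(kt)^{-\lceil a/2\rceil}$ is consistent with the computed cases (and, combined with $f^\lambda\asymp(kt)^a/a!$, would give a slice bound $\sim 1/a!$ for even $a$ and $o(1)$ for odd $a$, in line with the paper's observed pattern), but it is itself unproven, and without such an a priori tail estimate the term-by-term program does not close. Likewise the ``moreover'' step needs not just summability but actual pointwise convergence of each slice $S_a(k,t)$ as $t\to\infty$; the rational-function form of the $T_\lambda$ is again only verified term by term. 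So: your reconstruction is faithful to the paper's line of reasoning and correctly locates the obstruction, but like the paper it stops short of a proof --- which is the correct outcome for a conjecture that the paper explicitly leaves open.
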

\par
{\bf Evidence for~\cref{conj:maintext}:}
\begin{itemize}
    \item Using the explicit results reported in~\cref{sec:RCRC_calc} and~\cref{sec:RC_calc}, we can compute the contribution of $f^\lambda \tr [\rho_{(kt-l,l)}(RCRC)]$ to the average permanent moment. It is easy to see that this contribution becomes smaller as one increases $l$. In fact $f^\lambda \tr [\rho_{(kt-l,l)}(RCRC)]/ \left({k!^{2t} t!^{2k}}/{(kt)!}\right)$ is subleading in $k$ and $t$ when $l$ is odd. For even $l$, the order $1$ term is equal to $1/l!$. Computation of a few Young diagrams of more than $2$ rows indicate that they decay quickly as well.
    \item In~\cref{eq:expres}, we have explicitly computed the contribution of different Young diagrams for $k=3,t=3$ and $k=4, t=3$. The results show that the main contribution to the average permanent comes from the Young diagrams with a dominating first row.
    \item In~\cref{sec:alg} we computed the moments of permanents. Our results strongly supports~\cref{conj:maintext}. See~\cref{fig:first}.
    \item Lastly, we give a new heuristic argument for showing that $\EE_{M\sim \Gauss k} |\Perm(M)|^{2t}$ is close to $k!^{2t}t!^{2k}/(kt)!$. 
    \par 
    We start by defining two probability distributions $p_1$ and $p_2$ on the space of weak magic squares $A_+(t,k)$ (see~\cref{def:ap}). If $A\in A_+(t,k)$, define
    \begin{align}\label{eq:defprobs1}
        p_1(A)&:=|Q(A)|/k!^t, \quad \text{and,}\\
        p_2(A)&:=\binom{kt}{t,t,\cdots,t}^{-1}\prod_{i=1}^k \binom{t}{A_{i1},A_{i2},\cdots,A_{ik} }.\label{eq:defprobs2}
    \end{align}
    \begin{lem} Both $p_1$ and $p_2$ are probability distributions.
    \end{lem}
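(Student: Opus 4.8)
The plan is to check, for each of $p_1$ and $p_2$, the two defining properties of a probability distribution on $A_+(t,k)$: nonnegativity, which is immediate in both cases, and normalization, which reduces to a one-line double-counting argument in each case. So the whole argument is organized as two independent counting identities.

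For $p_1(A)=|Q(A)|/k!^t$ the key observation I would make first is that the sets $\{Q(A)\}_{A\in A_+(t,k)}$ form a partition of $S_k^{\times t}$. Indeed, for any tuple $(\pi_1,\dots,\pi_t)\in S_k^{\times t}$ the matrix $\sum_{n=1}^t\rho_{\mathrm{std}}(\pi_n)$ is a sum of $t$ permutation matrices, hence has every row sum and every column sum equal to $t$, i.e.\ it lies in $A_+(t,k)$; and the tuple belongs to $Q(A)$ for exactly that one value of $A$. Summing cardinalities then gives $\sum_{A\in A_+(t,k)}|Q(A)|=|S_k|^t=k!^t$, so $\sum_A p_1(A)=1$, and $p_1(A)\ge 0$ trivially since $|Q(A)|\ge 0$.

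For $p_2$ I would count in two ways the number $N$ of colorings of the cells of a grid with $k$ rows and $t$ columns using colors $\{1,\dots,k\}$ in which each color is used exactly $t$ times. Choosing successively which $t$ of the $kt$ cells receive each color gives $N=\binom{kt}{t,t,\dots,t}$. Alternatively, classify a coloring by the $k\times k$ matrix $A$ with $A_{ij}=$ (number of cells in row $i$ colored $j$); then $\sum_j A_{ij}=t$ because row $i$ has $t$ cells, and $\sum_i A_{ij}=t$ because color $j$ is used $t$ times, so $A\in A_+(t,k)$, and the number of colorings giving a fixed such $A$ is $\prod_{i=1}^k\binom{t}{A_{i1},\dots,A_{ik}}$. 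Hence $N=\sum_{A\in A_+(t,k)}\prod_{i=1}^k\binom{t}{A_{i1},\dots,A_{ik}}$; equating the two expressions for $N$ and dividing by $\binom{kt}{t,\dots,t}$ yields $\sum_A p_2(A)=1$. Nonnegativity is clear since each multinomial coefficient is a nonnegative integer and the normalizing constant is positive.

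There is no genuine obstacle: the lemma is a routine consistency check, included only to make the subsequent arguments about $p_1$ and $p_2$ meaningful. The only point deserving a moment's care is matching each quantity to its combinatorial meaning — for $p_1$, that $Q(A)$ partitions $S_k^{\times t}$; for $p_2$, that the product of row multinomial coefficients counts grid colorings with prescribed per-row color multiplicities and that imposing the column-sum constraint of $A_+(t,k)$ is exactly the global ``each color used $t$ times'' condition.
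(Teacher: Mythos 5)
Your proposal is correct. For $p_1$, the argument is essentially the one in the paper: you show directly that the sets $Q(A)$ partition $S_k^{\times t}$, whereas the paper reaches the same identity $\sum_A |Q(A)|=k!^t$ by plugging the all-ones matrix $J_k$ into its polynomial expansion $\Perm(X)^t=\sum_A |Q(A)|\prod X_{ij}^{A_{ij}}$; the underlying count is identical, and your phrasing is if anything more self-contained. For $p_2$, however, you take a genuinely different route. The paper rewrites the target identity as $\sum_{A\in A_+(t,k)}(\prod_{i,j}A_{ij}!)^{-1}=(kt)!/t!^{2k}$ and proves it by extracting the coefficient of $x_1^t\cdots x_k^t\,y_1^t\cdots y_k^t$ in $\exp(\sum_{i,j}x_iy_j)$, i.e.\ a generating-function argument. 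You instead double-count colorings of a $k\times t$ grid by $k$ colors, each used $t$ times, classified by the row-content matrix $A$; after noting $\prod_i\binom{t}{A_{i1},\dots,A_{ik}}=t!^k/\prod_{ij}A_{ij}!$ and $\binom{kt}{t,\dots,t}=(kt)!/t!^k$, this is exactly the same identity. Your double-count is more elementary and makes the combinatorial meaning of $p_2$ (distribution of row contents of a uniformly random balanced coloring) explicit, which is arguably closer in spirit to the heuristic the paper is building toward in that section; the paper's exponential-function calculation is shorter to write down but hides that interpretation.
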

    \begin{proof}
    It is straightforward to see that $p_1$ is a probability distribution:
    \[\sum_{A \in A_+(t,k)} p_1(A) =\frac{1}{k!^t}\sum_{A \in A_+(t,k)} |Q(A)|=\frac{1}{k!^t} \Perm(J_k)^t=1,  \]
    where we used~\cref{eq:momo1} and $J_k$ is the $k\times k $ matrix of all ones.
    \par 
    To see that $p_2$ is a probability distribution, we need to show that
    \[\sum_{A \in A_+(t,k)} \left(\prod_{i,j=1}^k A_{ij}!\right)^{-1} = \frac{(kt)!}{t!^{2k}}.\]
    \par
    Consider the quantity $\exp(\sum_{i,j=1}^t X_{ij})$. By expanding the exponential function we obtain
    \[\exp\left(\sum_{i,j=1}^t X_{ij}\right)=\sum_{A_{11},A_{12},\cdots,A_{kk}=0}^{\infty}\,\,\prod_{i,j=1}^k X_{ij}^{A_{ij}}/A_{ij}!.\]

    If we set $X_{ij}=x_iy_j$, we have 
    \[\sum_{A \in A_+(t,k)} \left(\prod_{i,j=1}^k A_{ij}!\right)^{-1} =  \text{coefficient of }x_1^t\cdots x_k^t \,y_1^t\cdots y_k^t\text{ in } \exp\left(\sum_{i,j=1}^t x_{i}y_{j}\right).\]
    But,
    \begin{equation}\label{eq:insidelemm}
     \exp\left(\sum_{i,j=1}^t x_{i}y_{j}\right)=\exp\left(\left[\sum_{i=1}^t x_{i}\right]\left[\sum_{i=1}^t y_{i}\right]\right)=\sum_{n=0}^\infty \frac{1}{n!} \left[\sum_{i=1}^t x_{i}\right]^n\left[\sum_{i=1}^t y_{i}\right]^n.
     \end{equation}
    Now, the coefficient of $x_1^t\cdots x_k^t \,y_1^t\cdots y_k^t$ in~\cref{eq:insidelemm} is simply given by
    \[\frac{1}{(kt)!}\binom{kt}{t,t,\cdots, t}\binom{kt}{t,t,\cdots, t} = (kt)!/t!^{2k}.\]
    This concludes the proof of the lemma.
    \end{proof}
    Consider the process of choosing $t$ permutations $\pi_1,\cdots,\pi_t$ randomly, and constructing the matrix $A=\sum_{i=1}^t \rho_{\mathrm{std}}(\pi_i)$. When $t$ is very large, and in the limit of large $k$, one can heuristically assume that different matrix elements of $A$ are approximately independent, given the constraint that they form a magic square. This is equivalent to saying that all rows of $A$ are approximately independent.
    \par 
    Given that the probability of obtaining a particular vector $(v_1,\cdots, v_k)$, (with $\sum_i v_i=t$) is proportional to $ \binom{t}{v_1,\cdots ,v_k}$,
    the approximate independence of rows gives
    \[p_1(A)\appropto \prod_{i=1}^k \binom{t}{A_{i1},A_{i2},\cdots,A_{ik} }.  \]
    After normalizing the right hand side of this relation, we conclude that 
    \begin{equation}\label{eq:apprtwoprob}
        p_1(A) \approx p_2(A).
    \end{equation}
    We emphasize that the approximation in~\cref{eq:apprtwoprob} is not very tight, and we believe that it is only valid with multiplicative errors and in most probable cases. 
    \par 
    Now, we use~\cref{eq:perm_gen_form} and the definitions in~\cref{eq:defprobs1,eq:defprobs2} to conclude that
    \begin{multline} 
    \EE_{M \sim \Gauss k} |\Perm (M)|^{2t} =   \sum_{A \in A_+(t,k)} |Q(A)|^2 \prod_{i,j=1}^k A_{ij}! =\\
    \frac{k!^{2t}t!^{2k}}{(kt)!}\sum_{A\in A_+(t,k)} \frac{p_1(A)^2}{p_2(A)}\approx\frac{k!^{2t}t!^{2k}}{(kt)!} \exp\left[{D_2(p_1 \| p_2)}\right],
    \end{multline}
    where $D_2(p_1 \| p_2)$ is the $2$-R\'enyi divergence, a measure of distance between the two probability distributions $p_1$ and $p_2$. This relation provides a yet another proof for $\EE_{M \sim \Gauss k} |\Perm (M)|^{2t}\geq \frac{k!^{2t}t!^{2k}}{(kt)!}$, and suggests that $\EE_{M \sim \Gauss k} |\Perm (M)|^{2t}$ should be close to $\frac{k!^{2t}t!^{2k}}{(kt)!}$. In other words, the permanent moments are directly related to $D_2(p_1 \| p_2)$, which is nothing but a measure of independence of matrix elements of matrices drown from the probability distribution $p_1$.
\end{itemize}
\subsection{Concentration results and from the moment growth conjecture}\label{sec:concentration}
Now, we are going to assume the moment growth conjecture and compute the tail of the log-permanent distribution. In fact, we need to assume a different form of the conjecture that we define below:
\begin{conj}[Non-integer moment growth conjecture]\label{conj:ext}
Let $3\leq t \in \RR, 3\leq k\in \NN$. Then 
\[\EE_{M \sim \Gauss k} |\Perm(M)|^{2t} =C_{k,t} \frac{k!^{2t}t!^{2k }}{(kt)!}, \]
where $\lim_{k\rightarrow \infty} \log(C_{k,t})/k=0$.
\end{conj}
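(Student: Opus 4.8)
\medskip
Here is the approach I would take to \cref{conj:ext}. Since it concerns the sub-exponential (in $k$) growth of $C_{k,t}$ for a \emph{fixed} real $t\ge 3$, I would first rescale: put $W_k:=|\Perm(M)|^2/k!$ with $M\sim\Gauss k$, so $\EE W_k=1$ and $W_k=e^{2kY_k}$ where $Y_k=\tfrac1{2k}\log W_k$ is the variable of \cref{sec:concentration}. A routine Stirling estimate of $k!^{2t}t!^{2k}/(kt)!$ shows that \cref{conj:ext} is equivalent to
\[
  \lim_{k\to\infty}\tfrac1k\log\EE\bigl[W_k^{\,t}\bigr]
  \;=\;\lim_{k\to\infty}\tfrac1k\log\EE\bigl[e^{2ktY_k}\bigr]\;=\;g(t),
  \qquad g(t):=2\log\Gamma(t+1)-t\log t ,
\]
for every fixed real $t\ge 3$; one checks $g(3)=\log\tfrac43$ and $g(t)=t\log t-2t+O(\log t)$ as $t\to\infty$, the latter matching the large-$y$ form $I(y)\approx e^{2y+1}$ of the rate function extracted in \cref{sec:concentration}. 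So the conjecture is exactly the statement that the scaled cumulant generating function of $Y_k$ converges, to the explicit limit $s\mapsto g(s/2)$.

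\smallskip\noindent\textbf{Main plan: a large deviation principle plus Varadhan's lemma.} I would try to establish a full LDP for $(Y_k)_{k\ge1}$ at speed $k$ with a good (coercive) rate function $I$; Varadhan's lemma then yields $\tfrac1k\log\EE[e^{2ktY_k}]\to\sup_y(2ty-I(y))$. Two points need care. First, since $W_k=e^{2kY_k}$ is heavy-tailed, the LDP must be exponentially tight at speed $k$ (equivalently, the upper tail $\tfrac1k\log\mathrm{Pr}(Y_k>A)\to-\infty$ fast as $A\to\infty$), so the upper tail of $W_k^{\,t}$ does not swamp Varadhan's estimate; this is part of the LDP rather than a side hypothesis, and is consistent with the super-exponential decay $p_{Y_k}(y)\approx e^{-ke^{2y+1}+o(k)}$ anticipated in \cref{sec:concentration}. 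Second, one must then identify $\sup_y(2ty-I(y))$ with $g(t)$ for $t\ge3$; that is, the LDP must come with $I$ equal, on the range of $y$ selected by tilts $2t\ge6$, to the Legendre dual of $s\mapsto g(s/2)$. Thus \cref{conj:ext} reduces to \emph{identifying the large-deviation rate function of $\log|\Perm(M)|$, $M\sim\Gauss k$, on its upper range}.

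\smallskip\noindent\textbf{Supporting observations, and a partial result.} (i) For \emph{integer} $t$ the size--moment duality (\cref{cor:gauss_duality}) gives $\EE_{M\sim\Gauss k}|\Perm(M)|^{2t}=\EE_{M\sim\Gauss t}|\Perm(M)|^{2k}$, so the integer moment-growth conjecture (\cref{conj:maintext}) is equivalent to the assertion that $|\Perm(M)|^2$ for a \emph{fixed} $t\times t$ Gaussian matrix has upper tail $\log\mathrm{Pr}\bigl(|\Perm(M)|^2>x\bigr)\sim -c_t\,x^{1/t}$ with $c_t=t^2/\Gamma(t+1)^{2/t}$ --- a finite-dimensional large-deviation statement that one can hope to settle by a saddle-point analysis of the polynomial $\Perm$ in the $t^2$ Gaussian entries. (ii) The map $t\mapsto\log\EE[W_k^{\,t}]$ is convex (Lyapunov), so its non-integer values lie between the secants joining neighbouring integer values; with the proven bound $C_{k,n}\ge1$ and the conjectured bound $C_{k,n}\le2$ this already gives \cref{conj:ext} \emph{up to a multiplicative error $e^{\Theta(k)}$}. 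But this is the best convexity can do: the normalization $\log\bigl(k!^{2t}t!^{2k}/(kt)!\bigr)$ has $\partial_t^2\asymp k/t$, so it already differs from its own linear interpolation over a unit $t$-interval by $\Theta(k)$; no amount of integer-moment data can be interpolated to the required $o(k)$ precision. Interpolation is therefore genuinely insufficient, and \cref{conj:ext} carries content beyond its integer counterpart.

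\smallskip\noindent\textbf{The main obstacle.} Both routes bottleneck at the same place: there is no estimate of $\EE[W_k^{\,t}]$ at non-integer $t$ accurate to $e^{o(k)}$, equivalently no second-order control of the fluctuations of $\log|\Perm(M)|$ for $M\sim\Gauss k$. The main plan needs the full LDP for $\log|\Perm|$ of a growing Gaussian matrix --- precisely the central problem this paper only approaches --- and the representation-theoretic machinery of \cref{sec:permiid}, though it pins down integer moments, does not see the rate function. A concrete, equivalent-strength intermediate target is to show that the variance of $\log|\Perm(M)|^2$ under the $|\Perm(M)|^{2t}$-tilted Gaussian measure equals $\partial_t^2\log\bigl(k!^{2t}t!^{2k}/(kt)!\bigr)+o(k)=k/t+o(k)$, locally uniformly in $t\ge3$; integrating this twice, with \cref{conj:maintext} as boundary data at integer $t$, would give \cref{conj:ext}. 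This reformulates --- but does not remove --- the need for a sharp fluctuation estimate for $\log|\Perm|$.
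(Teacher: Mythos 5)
The target statement is Conjecture~\ref{conj:ext}, and the paper provides \emph{no proof} of it --- it is stated as a conjecture, accompanied by a remark that it is the weakest hypothesis the author could identify that still yields the large-deviation analysis of \cref{sec:concentration}. So there is no proof in the paper for your argument to be measured against, and you have correctly recognized this: your proposal is not a proof but an honest roadmap together with a precise statement of the obstacle, which is the appropriate response to a statement the paper itself leaves open.

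Your reduction is correct and consistent with the paper's own calculation: the Stirling estimate showing that \cref{conj:ext} is equivalent to $\tfrac1k\log\EE[W_k^{\,t}]\to g(t)=2\log\Gamma(t+1)-t\log t$ reproduces exactly the $\lambda(t)=2\log(t!)-t\log t$ that the paper derives for integer $t\ge3$, and your sanity checks $g(3)=\log\tfrac43$ and $g(t)\sim t\log t-2t$ are right. Your LDP-plus-Varadhan plan is the natural converse of the paper's G\"artner--Ellis route (the paper assumes cumulant convergence and deduces a rate function; you would deduce cumulant convergence from an LDP), and both bottom out at the same unresolved analytic problem. Your observation (ii) --- that $\partial_t^2\log\bigl(k!^{2t}t!^{2k}/(kt)!\bigr)=\Theta(k)$, so convexity interpolation from integer moments can at best give \cref{conj:ext} up to a factor $e^{\Theta(k)}$ --- is a useful point the paper does not make explicit, and it correctly shows that \cref{conj:ext} carries content beyond \cref{conj:maintext} rather than following from it. The bottleneck you name (no $e^{o(k)}$-accurate control of $\log|\Perm(M)|$ at non-integer tilts) is exactly why this remains a conjecture in the paper as well; you have not missed an argument that exists, because none exists.
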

\begin{rem}
Note that in~\cref{conj:ext} we did not use our strongest lower bounds for the integer moments (which show that $C_{k,t}>1.625$ for $4\leq k,t\in \NN$), nor used our strongest upper bound conjecture (which implies that $C_{k,t}\leq 2$). We made the weakest conjecture that could give the desired result.
\end{rem}
\par 
To study the tail of the permanent distribution, we should properly normalize is. As it was mentioned before, $\EE_{M \sim \Gauss k} |\Perm(M)|^{2}=k!$. Therefore, we choose the normalization factor to be $1/\sqrt{k!}$. We take the logarithm of the distribution to obtain better behaved distribution. To summarize, we consider the following random variable:
\begin{equation}\label{eq:dfnY}
   Y_k := \frac{1}{k} \log \,\left[\frac{|\Perm(M)|}{\sqrt{k!}}\right]\quad \text{for}\quad M\sim \Gauss{k}. 
\end{equation}  
Our goal is to estimate $p_{Y_k}(y)$ (the probability density function of $Y_k$) for $y$ ranging from infinity to near $y=0$.
\par 
We follow the standard literature of large deviation theory (see, e.g.,~\cite{touchette2011basic}), and compute the rate function of $p_Y(y)$, i.e., a function $I(y)$ such that: \[p_{Y_k}(y) = \exp(-k I(y)+o(k)).\]
Here, we used the small $o$ notation where $x=o(k)$ means that $\lim_{k\rightarrow \infty} x/k=0$.
\par 
We start our analysis by computing the~\emph{scaled cumulant generating function} defined as,
\[ \lambda_k(t) :=\frac{1}{k} \log \EE_{y \sim Y_k} [ e^{2t k y}].\]
We will see in a few lines that the key quantity for computing the rate function is $\lambda(t):=\lim_{k\rightarrow \infty} \lambda_k(t)$. Using~\cref{conj:ext} for $t\geq 3$, we have,
\begin{equation*} 
\lambda(t):=\lim_{k \rightarrow \infty}\lambda_k(t)=\lim_{k \rightarrow \infty}\frac{1}{k} \log \EE_{M\sim \Gauss k}\left[\frac{|\Perm(M)|}{\sqrt{k!}}\right]^{2t}=\lim_{k \rightarrow \infty}\frac{1}{k} \log\left[ C_{k,t}\frac{k!^{t}t!^{2k}}{(kt)!}\right]=
2 \log(t!)-t \log(t).
\end{equation*}
In fact, we know more about $\lambda(t)$. Using exact $0$th, $2$nd, and $4$th moment calculations one can see that $\lambda(0)=\lambda(1)=\lambda(2)=0$. Given that $\lambda$ is convex function (the cumulant generating function is always convex), we conclude that $\lambda(t)=0$ for all $t\in [0,2]$. A similar convexity argument shows that $0\leq \lambda(t) \leq \lambda(3)=\log(4/3)$ for $2\leq t\leq 3$. To summarize:
\par
\begin{equation}\label{eq:expstir}
    \lambda(t)\left\{\begin{array}{ll}
        =0                    & 0\leq t\leq 2 \\
        \in [0,\log(4/3)]     &  2<t<3\\
        =2 \log(t!)-t\log (t) & 3\leq t
    \end{array} \right. .
\end{equation}
See~\cref{fig:lambdapl}.
\begin{figure}
    \centering
    \includegraphics[width=9cm]{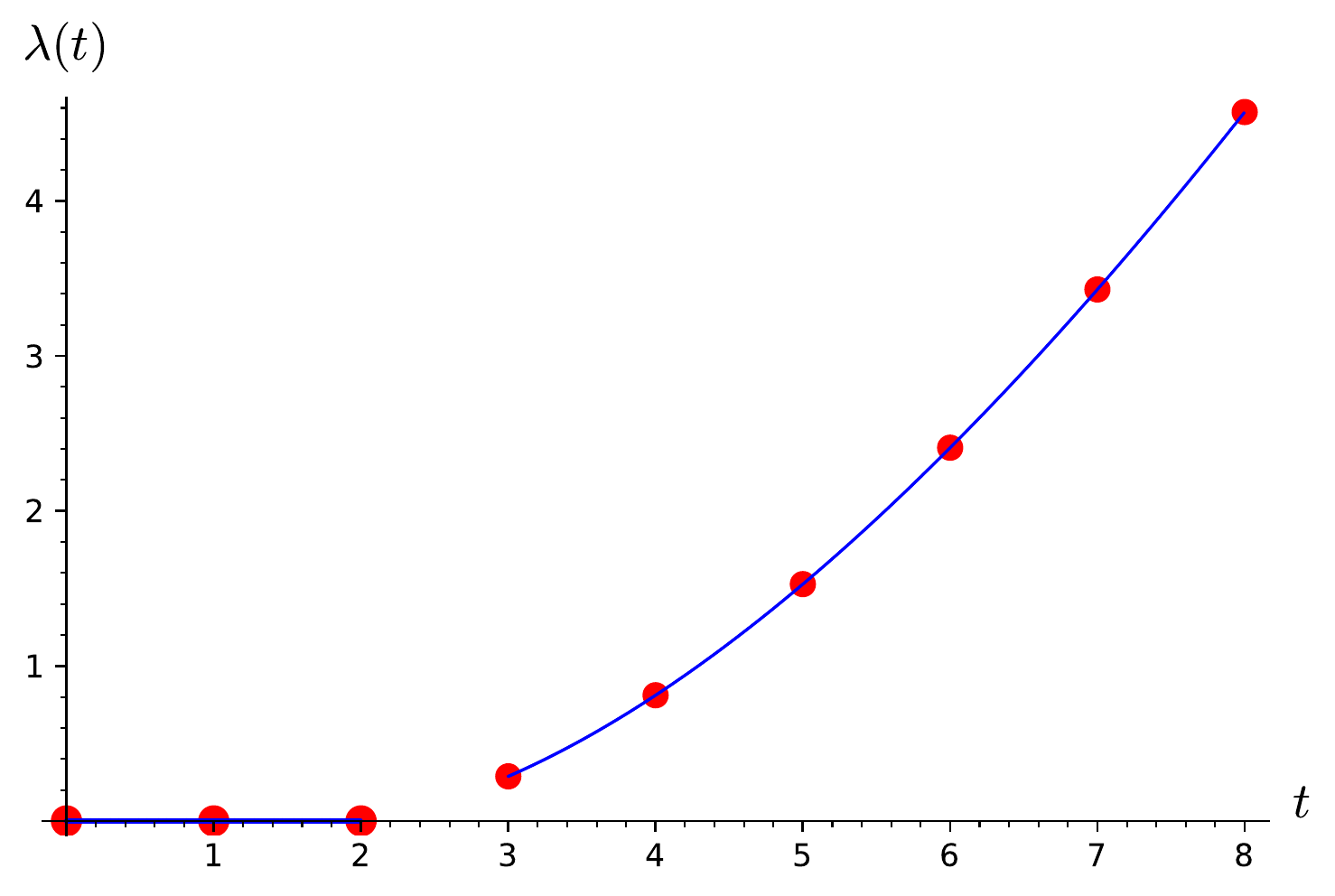}
    \caption{Plot of $\lambda(t)$. Red dots indicate the value of $\lambda(t)$ for integer $t$. The first three dots are consequences of known formulas for the $0$th, $2$nd, and $4$th moments of the permanent of random matrices. Other dots for $t\geq 3$ are consequences of~\cref{conj:maintext}. Blue line is the predicted form of $\lambda(t)$. Since $\lambda$ is a convex function, it should be identically zero for $t\in [0,2]$ to pass through the first three red dots. The piece of the blue curve for $t\geq 3$ follows from~\cref{conj:ext}. We do not have a prediction for $\lambda(t)$ when $2<t<3$ except that $0=\lambda(2)\leq\lambda(t)\leq \lambda(3)=\log(4/3)$.}
    \label{fig:lambdapl}
\end{figure}

Now, assuming that $k$ is very large, we have,
\[e^{k \lambda_k(t)}=\EE_{y \sim Y_k} \left[e^{2 t k y}\right]=\int{p_{Y_k}(y) \,e^{2t k y} \,dy} \approx  \int{e^{k(2t y- I(y))} \,dy}\approx e^{k \sup_{y}(2t y -I(y))}, \]
We conclude that 
\[ \lim _{k\rightarrow \infty} \lambda_k(t)=\lambda(t)=\sup_y (2t y -I(y)).\]
In other words, $\lambda(t)/2$ is the Legendre transform of $I(y)/2$. This is a well known and common technique for computing the rate function using the cumulant generating function (see~\cref{rem:gartner}). Inverting this transformation gives,
\begin{equation}\label{eq:legtr} I(y)=\sup_{t}\left(2 t y -\lambda(t) \right),
\end{equation}
Assume that the supremum is attained at $t=t^*$ ($t^*$ depends on $y$). Taking the first derivative when $t^*\geq 3$, we obtain
\begin{equation}\label{eq:derivativeleg}
2y=\lambda'(t^*)=2\psi(t^*+1)-\log(t^*)-1,
\end{equation}
where $\psi(t):=\frac{d}{d t}\log (\Gamma(t))$ is the ``digamma'' function. 
\par 
When $t^*$ is large, it is known that $\psi(t^*+1)\approx \log(t^*)$, therefore, $t^*\approx e^{2y+1}$. Using the Stirling's approximation, $\lambda(t^*)=2\log(t^*!)-t^*\log(t^*)\approx t^* \log(t^*)-2t^* \approx 2yt^*-e^{2y+1}$.
Therefore,
\[ I(y)=2yt^*-\lambda(t^*)=e^{2y+1}\quad \text{for large }y.\]

For generic $y$, we assume that $I(y)=e^{2y+1}\omega(y)$, where $\omega(y)$ is the correction of $I(y)$, deviating from $1$ for small $y$. Given a value of $y$, we can solve for $t^*$ in~\cref{eq:derivativeleg}, and compute $\omega(y)$. The resulting function is plotted in~\cref{fig:omegaplot}. We can numerically observe that when $y>0.21$, then $t^*>3$, therefore this calculation is only valid for $y>0.21$.
\par 
On the other hand, when $0\leq 6 y \leq \lambda(3)=\log(4/3)$, we know that the supremum of~\cref{eq:legtr} is attained at a value of $t\leq 3$, because for $t>3$ the value of $2ty - \lambda(t)$ is negative. Therefore, for this range of $y$, 
\[4y=2 t y -\lambda(t)\big |_{t=2} \leq I(y)=\sup_{0\leq t\leq 3}\left(2 t y -\lambda(t) \right)\leq \sup_{0\leq t\leq 3}(2ty)\leq 6y. \]
Strictly speaking, this result does not rely on any of our conjectures, rather, it relies on~\cref{thm:mainlowerbound} and the assumption of existence of a rate function. 
\par 

    \begin{figure}
        \centering
        \includegraphics[width=11cm]{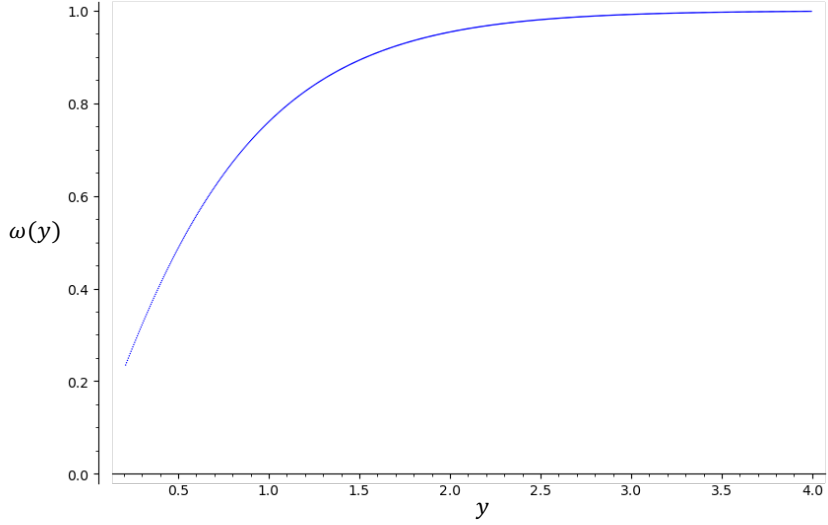}
        \caption{Plot of $\omega(y)$ defined in~\cref{thm:tail-dist}. When $y$ is small, $\omega(y)$ gets closer to zero. This is expected since it is believed that the most of the mass of the random permanent probability distribution is at around $\sqrt{k!}$, which translates to $y=0$ in~\cref{eq:dfnY}. }
        \label{fig:omegaplot}
    \end{figure}
\par 
To summarize, we have the following theorem:
\begin{thm}\label{thm:tail-dist}
Assuming~\cref{conj:ext} and the existence of a rate function $I(y)$, the probability density function of the normalized log-permanent, $p_{Y_k}$, has the form of,
\[p_{Y_k}(y) = e^{-k I(y)+o(k)}, \quad\text{with }\quad I(y)=e^{2y+1}\omega(y) \quad\text{ for }y>0.21,\]
where $\omega(y)\approx 1$ for large $y$, and is defined as
\[\omega(y) = 2^{-2y-1} \sup_{t}\left(2yt-(2\log(t!)-t \log(t)\right). \]
Moreover, only assuming the existence of the rate function and no extra assumption (such as~\cref{conj:ext}), we have,
\[ 4y\leq I(y)\leq 6y \quad\text{for}\quad 0\leq y\leq \frac16\log(4/3)\approx 0.048.\]
\end{thm}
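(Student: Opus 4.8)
The plan is to assemble the large–deviation computation carried out in the paragraphs preceding the statement into a proof, checking the two regimes $y>0.21$ and $0\le y\le\tfrac16\log(4/3)$ in turn. First I would justify \cref{eq:expstir}, the form of the scaled cumulant generating function $\lambda(t)=\lim_k\tfrac1k\log\EE_{M\sim\Gauss k}[(|\Perm M|/\sqrt{k!})^{2t}]$: the exact moments in \cref{eq:moments12} give $\lambda(0)=\lambda(1)=\lambda(2)=0$; for every real $t\ge 3$, substituting \cref{conj:ext} and applying Stirling's formula to $k!^{t}t!^{2k}/(kt)!$ gives $\lambda(t)=2\log(t!)-t\log t$, the prefactor $C_{k,t}$ contributing $o(k)$ by hypothesis; and on $(2,3)$ convexity of the cumulant generating function together with $\lambda(2)=0$ and $\lambda(3)=\log(4/3)$ forces $0\le\lambda(t)\le\log(4/3)$. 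Assuming that a rate function exists (so that the Gärtner–Ellis theorem applies in the form quoted in \cref{rem:gartner}), I would then record the Legendre representation $I(y)=\sup_{t}\bigl(2ty-\lambda(t)\bigr)$ of \cref{eq:legtr}.

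For the regime $y>0.21$ I would show that the supremum in \cref{eq:legtr} is attained at some $t^{*}>3$, i.e.\ inside the range where $\lambda$ is the explicit analytic function $2\log t!-t\log t$. Indeed the one–sided derivative of $2ty-\lambda(t)$ at $t=3$ is $2y-\lambda'(3^{+})=2y-\bigl(2\psi(4)-\log 3-1\bigr)$, which is positive exactly when $y>\lambda'(3)/2\approx 0.21$, and $\lambda$ is convex, so the optimizer lies past $3$ and is the unique stationary point; one also checks that the uncontrolled piece of $\lambda$ on $(2,3)$ cannot beat this, since there $2ty-\lambda(t)\le 2ty$ is dominated by the value produced at $t^{*}$. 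Solving the stationarity equation $2y=\lambda'(t^{*})=2\psi(t^{*}+1)-\log t^{*}-1$ of \cref{eq:derivativeleg} and substituting back gives $I(y)=2yt^{*}-\lambda(t^{*})$; setting $\omega(y):=2^{-2y-1}I(y)$ reproduces the stated closed form. Finally $\omega(y)\to 1$ as $y\to\infty$ follows from $\psi(s+1)=\log s+O(1/s)$, which forces $t^{*}\sim e^{2y+1}$, and then Stirling gives $\lambda(t^{*})\sim t^{*}\log t^{*}-2t^{*}\sim 2yt^{*}-e^{2y+1}$, so $I(y)\sim e^{2y+1}$; see \cref{fig:omegaplot}.

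For the regime $0\le y\le\tfrac16\log(4/3)$ only \cref{thm:mainlowerbound} and convexity are needed, and no conjecture. From \cref{thm:mainlowerbound} and Stirling, $\lambda(3)\ge 2\log 6-3\log 3=\log(4/3)$, while $\lambda(2)=0$ exactly; convexity then gives $\lambda(t)\ge\lambda(3)(t-2)$ for $t\ge 3$ (extrapolate forward the chord of $\lambda$ over $[2,3]$), whence $2ty-\lambda(t)\le 2ty-\lambda(3)(t-2)\le 0$ for all $t\ge 3$ as soon as $y\le\lambda(3)/6$. Thus the supremum in \cref{eq:legtr} is attained on $[0,3]$, where $\lambda\equiv 0$ on $[0,2]$ and $\lambda\ge 0$ on $[2,3]$ (Jensen, since $t\ge 1$); taking $t=2$ gives $I(y)\ge 4y$, while $\lambda\ge 0$ on $[0,3]$ gives $I(y)\le\sup_{0\le t\le 3}2ty=6y$, which is the claimed $4y\le I(y)\le 6y$.

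The genuinely delicate point is the legitimacy of the Gärtner–Ellis step: it requires $\tfrac1k\log\EE[e^{2tkY_k}]$ to converge to a proper, essentially smooth convex limit, whereas we only control $\lambda$ explicitly on $[0,2]\cup[3,\infty)$ and must take the existence (and mild regularity) of the limit as an assumption — which is precisely why the theorem is stated conditionally on the existence of a rate function. A secondary technicality is ensuring the uncontrolled piece of $\lambda$ on $(2,3)$ does not affect $I(y)$ in the first regime, which is handled by the crude bound $2ty-\lambda(t)\le 6y$ there together with the fact that the value at $t^{*}>3$ dominates; everything else reduces to Stirling's formula and elementary convex analysis.
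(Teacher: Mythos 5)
Your proposal reconstructs the paper's argument faithfully and in essentially the same order: Stirling on the conjectured moment formula to get $\lambda(t)=2\log(t!)-t\log t$ for $t\geq 3$, the exact $0$th/$2$nd/$4$th moments plus convexity to pin down $\lambda\equiv 0$ on $[0,2]$ and $0\leq\lambda\leq\log(4/3)$ on $(2,3)$, G\"artner--Ellis to pass to $I(y)=\sup_t(2ty-\lambda(t))$, stationarity at $t^*\geq 3$ to produce $\omega$, and the Legendre transform restricted to $[0,3]$ for the unconditional $4y\leq I(y)\leq 6y$. The one genuine improvement over the paper is that you make the cutoff $y>0.21$ analytic --- you identify it as $y>\lambda'(3^+)/2=\tfrac12(2\psi(4)-\log 3-1)$ --- whereas the paper simply observes numerically that $t^*>3$ for $y>0.21$; this is worth incorporating. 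The only wobble is the parenthetical handling of the uncontrolled piece of $\lambda$ on $(2,3)$: the chain $\sup_{t\in(2,3)}(2ty-\lambda(t))\leq 6y\leq g(t^*)$ that seems to be implied does not hold as written (one only has $g(t^*)\geq g(3)=6y-\lambda(3)$, not $g(t^*)\geq 6y$). This is not a fatal gap because the correct argument is the one you lead with: $g(t)=2ty-\lambda(t)$ is concave, its right derivative at $3$ is positive for $y>\lambda'(3^+)/2$, and since one-sided derivatives of a concave function are non-increasing, $g$ is increasing on $(2,3]$, so $\sup_{(2,3)}g\leq g(3)\leq g(t^*)$. I would drop the ``$2ty$ is dominated by the value at $t^*$'' sentence and let the monotonicity-from-concavity argument carry the weight; everything else is sound and matches the paper.
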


\begin{rem} \label{rem:gartner}
The argument presented in this section was rather imprecise. However (except for~\cref{conj:ext} which is the underlying assumption of this section), the rest can be made precise. The G\"artner-Ellis theorem (see~\cite{touchette2011basic,dembod1996large}) states that when $\lambda(\alpha)=\lim_{k\rightarrow \infty} \lambda_k(\alpha)$ exists and is differentiable, the distribution satisfies the large deviation principle, i.e.,  $\lim_{k\rightarrow \infty} \frac{1}{k}P_{Y_k} (y)\rightarrow -I(y)$ for some rate function $I(y)$. Furthermore, the rate function is given by the Legendre transform of $\lambda(\alpha)$. This means that under mild assumptions, the assumption of existence of the rate function can be dropped in~\cref{thm:tail-dist}. 
\end{rem}

\subsubsection{Comparison to the case of determinants}
In this section, we run a similar analysis for the case of determinants, and obtain a quantitatively different behavior.
\par 
Similar to the previous section, define,
\[Z_k:= \frac{1}{k} \log \,\left[\frac{|\det(M)|}{\sqrt{k!}}\right]\quad \text{for}\quad M\sim \Gauss{k} . \]
If we repeat the argument of the previous section, and use the formula for the moments of determinant of random Gaussian matrices (see~\cref{eq:detgg}), we can easily see that the scaled cumulant generating function $\lim_{k\rightarrow \infty} \frac1k \log \EE_{z\sim Z} e^{2ktz}$ vanishes. Consequently, the rate function for determinant diverges as $k\rightarrow \infty$. In other words, the tail of the distribution decays faster than $e^{-ck}$ for some $c$.  
\par 
This shows that in the limit of large $k$, the tail of the probability density function of log-permanent of random matrices (when normalized by $1/k$) is longer and has more mass than the tail of the probability density function of log-determinants (when normalized by $1/k$). Note that it is not in priory clear if the same result holds if we normalize these distributions with a factor other than $1/k$.
\par 
We can normalize the log-determinant distribution with a different normalization factor. If we define 
\[Z_k':= \frac{1}{\log(k)} \log \,\left[\frac{|\det(M)|}{\sqrt{k!}}\right]\quad \text{for}\quad M\sim \Gauss{k}, \]
then the scaled cumulant generating function does not vanish as $k \rightarrow \infty$:
\begin{multline}
\lambda(t) = \lim_{k\rightarrow\infty} \frac{1}{\log(k)} \log\left(\EE_{z\sim Z_k'} [e^{2\log(k) t z}]\right)=\\
\lim_{k\rightarrow\infty} \frac{1}{\log(k)}\log\EE_{M\sim \Gauss k} \left(\frac{|\det(M)|}{\sqrt{k!}}\right)^{2t}=\lim_{k\rightarrow\infty} \frac{1}{\log(k)} \log \left(\frac{\prod_{i=1}^{t}\prod_{j=1}^{k} (i+j-1)}{k!^{t}}\right)= \frac{t(t-1)}2
\end{multline}
 
Any other normalization factor will lead to a vanishing or diverging $\lambda$. Again, the rate function for of the determinant is given by the Legendre transform of $\lambda (t)$:
\[p_{Z_k'}(z):=e^{-\log(k) I(z)+o(\log(k))},\quad\text{where} \quad I(z)/2=\sup_{t\geq 0}(t z - \lambda(t)/2). \]
This Legendre transform is easy to compute, and gives the result of $I(z)=2(z+1/4)^2$. Therefore, we have,
\begin{equation}\label{eq:detld}
P_{Z_k'}(z)\approx e^{-2\log(k)(z+1/4)^2+o(\log(z))},\quad \text{for  }z>0. \end{equation}
This mimics the law of random determinants~\cite{nguyen2014random,girko2012theory}. 

\section{Permanents of minors of random unitary matrices}\label{sec:permsub}
In this section, we study the moments of permanents of minors of Haar random unitary matrices:
\[\EE_{M \sim \Usub d k} |\Perm (M)|^{2t}.\]
Using~\cref{{eq:main_avg_formula}} this can be expanded as 
\begin{align*}
\EE_{M\sim \Usub d k} \left| \Perm(M) \right|^{2t} = \sum_{\lambda \vdash kt, l(\lambda)\leq \min(k,t)} \frac{1}{\WD_\lambda(d)} \left(\frac{f^\lambda}{(kt)!}\right)^2\tr \left[\rho_\lambda(RCRC) \right].
\end{align*}
In~\cref{sec:permiid} we have extensively studied the terms $\tr \left[\rho_\lambda(RCRC) \right]$. We will observe that unlike the case of Gaussian i.i.d. matrices, many Young diagrams can contribute to the above sum. Therefore, an analog of~\cref{conj:maintext} cannot be easily stated in this case. Nevertheless, in~\cref{sec:lbhj}, we prove a different lower bound for the moments of permanents, and argue that it should be almost tight for the specific case of $k=d$ (i.e., the moments of the permanents of random unitary matrices) as a consequence of the Hunter-Jones conjecture.
\par 
First, we study the case of $t=1$ and $t=2$:
\begin{itemize}
    \item $t=1$. In this case, the only contribution comes from $\lambda = (k)$. Therefore, 
    \begin{equation} \label{eq:hmt1}
    \EE_{M\sim \Usub d k} \left| \Perm(M) \right|^{2} =  \frac{1}{\WD_{(k)}(d)} \left(\frac{f^{(k)}}{k!}\right)^2\tr \left[\rho_{(k)}(RCRC) \right]=\frac{1}{\WD_{(k)}(d)}=\frac{1}{\binom{k+d-1}{k}}. 
    \end{equation}
    \item $t=2$. This case is slightly more complicated, but we have an explicit formula for $\tr \left[\rho_{(kt)}(RCRC) \right]$ from~\cref{rem:t-2}, which leads to the following formula:
    \begin{multline}\label{eq:hht2}
    \EE_{M\sim \Usub d k} \left| \Perm(M) \right|^{4} = \sum_{a=0}^k \frac{1}{\WD_{(2k-a,a)}(d)} \left(\frac{f^{(2k-a,a)}}{(2k)!}\right)^2\tr \left[\rho_{(2k-a,a)}(RCRC) \right]=\\
    (d-1)!(d-2)!\sum_{a=0}^{\lfloor k/2 \rfloor} 4^{k-2a}
    \frac{2k-4a+1}{2k-2a+1}\binom{2a}{a}\binom{2k-2a}{k-a}^{-1}\frac{(k!)^2}{(2a+d-2)!(2k-2a+d-1)!}.
    \end{multline}
    This formula is very complicated, but it is possible to read its asymptotic growth in some cases (see~\cref{conj:hj} and the paragraphs that follows).
    \item $t\geq3$. We have not been able to derive an explicit formula for this case. However, some lower bounds can be derived following the logic of~\cref{sec:lbmgc} (see~\cref{sec:lbhj} for a different type of lower bound). For the specific cases of $t=3$ and $k=3,4$, we can use the explicit results of~\cref{eq:expres} combined with~\cref{eq:generalized-expansion-formula} to find expressions for the moments of permanents (See~\cref{conj:hj} and what follows). 
\end{itemize}
\subsection{Lower bounds}\label{sec:lbhj}
Now, we proceed to prove a lower bound for the moments of permanents of minors of Haar random unitary matrices. Let $t\leq k$, and start by the expansion formula~\cref{eq:generalized-expansion-formula}, and define $p_\lambda $ and $q_\lambda$ as follows:
\begin{equation}\label{eq:defpq}
    p_\lambda := \frac{\Pl_\lambda^{t,k}\WD_\lambda(d)}{\binom{\binom{d+k-1}{k}+t-1}{t}}, \quad\text{and},\quad q_\lambda:= \frac{f^\lambda}{(kt)!}\tr\left[\rho_\lambda(RC)\right],\quad \text{ for }\lambda\vdash kt, l(\lambda)\leq t.
\end{equation} 
\begin{lem}
Both $p_\lambda$ and $q_\lambda$ defined in~\cref{eq:defpq} are probability distributions.
\end{lem}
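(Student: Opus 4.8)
The plan is to verify the two defining properties of a probability distribution---nonnegativity of every coefficient, and summation to $1$---separately for $q_\lambda$ and for $p_\lambda$, in each case reducing the normalization to an identity or a dimension count already available in the paper.

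\emph{The distribution $q_\lambda$.} Nonnegativity is the easy half. Taking $\rho_\lambda$ unitary, the operators $P_R := \rho_\lambda(R)/|R|$ and $P_C := \rho_\lambda(C)/|C|$ are orthogonal projectors (onto the $R$- and $C$-invariant subspaces), so, using $\rho_\lambda(RC)=\rho_\lambda(R)\rho_\lambda(C)$, one gets $\tr[\rho_\lambda(RC)] = |R|\,|C|\,\tr[P_R P_C P_R] = |R|\,|C|\,\|P_C P_R\|_{\mathrm{HS}}^2 \ge 0$, and $f^\lambda\ge 0$. For the normalization I would first sum over \emph{all} $\lambda\vdash kt$: by the decomposition of the regular representation (the identity $\sum_{\lambda\vdash kt} f^\lambda\tr[\rho_\lambda(\pi)] = (kt)!\,\delta(\pi,e)$ behind~\cref{eq:reg-dec}), $\sum_{\lambda\vdash kt}\tfrac{f^\lambda}{(kt)!}\tr[\rho_\lambda(RC)] = \sum_{r\in R,\,c\in C}\delta(rc,e) = |R\cap C|$. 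Now $R\cap C = \{e\}$: a permutation of the cells preserving every row setwise and every column setwise must fix the unique cell lying in a prescribed row and a prescribed column, hence is the identity; so the total is $1$. Finally, every term with $l(\lambda) > t$ already vanishes because $\rho_\lambda(C) = 0$ whenever $l(\lambda) > t$---this is exactly the Pieri argument from the end of the proof of~\cref{thm:main_expansion_formula} showing $\rho_\lambda(R) = 0$ for $l(\lambda) > k$, applied to the columns (equivalently, to the transposed $t\times k$ grid). Hence the constraint $l(\lambda)\le t$ removes only zeros and $\sum_{l(\lambda)\le t} q_\lambda = 1$.

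\emph{The distribution $p_\lambda$.} Here nonnegativity holds because $\Pl_\lambda^{t,k}$ is a representation multiplicity (a nonnegative integer), $\WD_\lambda(d) > 0$ as long as $l(\lambda)\le t\le d$, and the binomial denominator is positive. For the normalization I would take dimensions in the plethysm decomposition~\cref{eq:pleth} with $k$ and $t$ interchanged, i.e.\ $\Sym_t(\Sym_k(\CC^d)) = \bigoplus_{\lambda\vdash kt}\CC^{\Pl_\lambda^{t,k}}\otimes V_\lambda^{U(d)}$, which gives $\sum_{\lambda\vdash kt}\Pl_\lambda^{t,k}\,\WD_\lambda(d) = \dim\Sym_t(\Sym_k(\CC^d))$. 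Since $\dim\Sym_k(\CC^d) = \binom{d+k-1}{k}$, the right-hand side equals $\binom{\binom{d+k-1}{k}+t-1}{t}$, precisely the denominator in $p_\lambda$. As before, I must also know the constraint $l(\lambda)\le t$ loses nothing: $\Pl_\lambda^{t,k} = 0$ when $l(\lambda) > t$, since $\Sym_t(\Sym_k(\CC^d))$ is a subrepresentation of $(\Sym_k(\CC^d))^{\otimes t} = (V_{(k)}^{U(d)})^{\otimes t}$, and iterating Pieri's rule (each tensor factor adds $k$ boxes, no two in the same column, hence at most one new row) bounds the number of rows of every $U(d)$-irrep occurring there by $t$---the $k\leftrightarrow t$-transposed version of the length bound recorded in~\cref{sec:pleth}. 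Therefore $\sum_{l(\lambda)\le t} p_\lambda = 1$.

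I expect the only genuinely delicate points to be the two support claims, $\rho_\lambda(C) = 0$ and $\Pl_\lambda^{t,k} = 0$ for $l(\lambda) > t$, which are what guarantee that the sums restricted to $l(\lambda)\le t$ already capture all nonzero contributions; both follow by transposing the grid in Pieri-type arguments the paper has given, so no new ingredient is needed. Everything else---the projector bound, the regular-representation trace, the observation $R\cap C = \{e\}$, and the two dimension counts---is routine bookkeeping.
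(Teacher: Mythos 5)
Your proof is correct and follows essentially the same route as the paper: the normalization of $q_\lambda$ via the regular-representation identity together with $R\cap C=\{e\}$, and the normalization of $p_\lambda$ by taking dimensions in the plethysm decomposition of $\Sym_t(\Sym_k(\CC^d))$. You merely spell out two points the paper handles by reference---positivity of $\tr[\rho_\lambda(RC)]$ via the projector/Hilbert--Schmidt argument and the vanishing of $\rho_\lambda(C)$ and $\Pl_\lambda^{t,k}$ for $l(\lambda)>t$ via the transposed Pieri argument---which is consistent with the paper's intent.
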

\begin{proof}
It is obvious that both quantities are positive, therefore, it only remains to show that they sum to $1$. Moreover, based on the argument at the end of~\cref{sec:expproof}, it is easy to see that both $q_\lambda$ and $p_\lambda$ are zero if $l(\lambda)\geq t$. Therefore, we can implicitly remove the constraint $l(\lambda)\leq t$.
\par
We start by proving that $p_\lambda$ is a probability distribution. Using the decomposition of~\cref{eq:pleth}, it is straightforward to see that 
\[\dim \Sym_t(\Sym_k(\CC^d)) = \sum_{\lambda\vdash kt} \Pl^{t,k}_\lambda \dim \left(V_\lambda^{U(d)}\right)= \sum_{\lambda\vdash kt} \Pl^{t,k}_\lambda \WD_\lambda(d).\]
Substituting $\dim \Sym_t(\Sym_k(\CC^d)) = \binom{\binom{d+k-1}{k}+t-1}{t}$ gives the desired result for $p_\lambda$.
\par
To prove the similar fact for $q_\lambda$, note that using the decomposition of the regular representation we have that 
\[\sum_\lambda q_\lambda =  \sum_\lambda \frac{f^\lambda}{(kt)!}\tr [\rho_\lambda(RC) ]=\frac{\tr(RC)}{(kt)!} = \sum_{r\in R, c\in C} \frac{\tr(rc)}{(kt)!},\]
where $r$ and $c$ are in the regular representation. Since $rc=e$ if and only if $r=c=e$, there is only one term that contributes to the above sum. This shows that $\sum_{r\in R, c\in C} \frac{\tr(rc)}{(kt)!} =  \frac{\tr(e)}{(kt)!}=1$.
\end{proof}
Now, it is easy to see that 
\begin{multline*}
\EE_{M\sim \Usub d k} \left| \Perm(M) \right|^{2t} = \sum_{\lambda \vdash kt, l(\lambda)\leq \min(k,t)} \frac{1}{\WD_\lambda(d)} \left(\frac{f^\lambda}{(kt)!}\right)^2\tr \left[\rho_\lambda(RCRC) \right]\geq  \\
\sum_{\lambda \vdash kt, l(\lambda)\leq t} \frac{1}{\Pl_\lambda^{t,k}\WD_\lambda(d)} \left(\frac{f^\lambda \tr \left[\rho_\lambda(RC) \right]}{(kt)!}\right)^2 = \binom{\binom{d+k-1}{k}+t-1}{t}^{-1}\sum_{\lambda \vdash kt, l(\lambda)\leq t}\frac{q_\lambda^2}{p_\lambda}\geq\\ \binom{\binom{d+k-1}{k}+t-1}{t}^{-1},
\end{multline*}
where the first inequality is proved in~\cref{thm:pleth_ineq}, and the second one is the Cauchy-Schwarz inequality  $(\sum_\lambda q_\lambda^2/p_\lambda)(\sum_\lambda p_\lambda) \geq (\sum_\lambda p_\lambda)^2 = 1$. Therefore, we proved the following theorem:
\begin{thm}[Lower bound for moments of permanents of minors of Haar random unitary matrices]\label{thm:minorbound}
Let $k,t$ be integers, $k\leq d$, and $a:= \min(k,t), b:=\max(k,t)$. Then, we have the following inequality for the moments of permanents of minors of Haar random unitary matrices:
\begin{equation}\label{eq:minorbound}
    \EE_{M\sim \Usub d k} \left| \Perm(M) \right|^{2t} \geq \binom{\binom{d+b-1}{b}+a-1}{a}^{-1}.
\end{equation}
\end{thm}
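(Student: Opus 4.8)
The plan is to establish the inequality separately in the two regimes $t\le k$ and $t>k$, in each case coupling the plethysm inequality of~\cref{thm:pleth_ineq} to the expansion formula~\cref{eq:generalized-expansion-formula} through one Cauchy--Schwarz step against a pair of explicit probability distributions on Young diagrams. It should be stressed at the outset that the size--moment duality of~\cref{cor:size-moment-general} does \emph{not} reduce the second regime to the first, since that duality needs $t\le d$ whereas only $k\le d$ is assumed here; the argument below uses nothing beyond $k\le d$.

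For $t\le k$ (so $a=t$, $b=k$) this is precisely the computation displayed just above the theorem: from~\cref{eq:generalized-expansion-formula} one applies~\cref{thm:pleth_ineq} on the transposed, $t\times k$ grid --- under which $(R,C)\mapsto(C,R)$ while $\tr[\rho_\lambda(RC)]$ and $\tr[\rho_\lambda(RCRC)]$ are unchanged by cyclicity of the trace --- to obtain $\tr[\rho_\lambda(RCRC)]\ge(\tr[\rho_\lambda(RC)])^2/\Pl_\lambda^{t,k}$, sets $q_\lambda=\tfrac{f^\lambda}{(kt)!}\tr[\rho_\lambda(RC)]$ and $p_\lambda=\Pl_\lambda^{t,k}\WD_\lambda(d)/\dim\Sym_t(\Sym_k(\CC^d))$, and uses that both are probability distributions, via $\sum_\lambda\Pl_\lambda^{t,k}\WD_\lambda(d)=\dim\Sym_t(\Sym_k(\CC^d))=\binom{\binom{d+k-1}{k}+t-1}{t}$ (see~\cref{eq:pleth}); Cauchy--Schwarz then yields $\EE_{M\sim\Usub d k}|\Perm(M)|^{2t}\ge\binom{\binom{d+k-1}{k}+t-1}{t}^{-1}$. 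For $t>k$ (so $a=k$, $b=t$) I would rerun the same argument with the grid left in its original $k\times t$ shape and the other denominator of~\cref{thm:pleth_ineq}, $\tr[\rho_\lambda(RCRC)]\ge(\tr[\rho_\lambda(RC)])^2/\Pl_\lambda^{k,t}$, taking $p_\lambda=\Pl_\lambda^{k,t}\WD_\lambda(d)/\dim\Sym_k(\Sym_t(\CC^d))$ with $\dim\Sym_k(\Sym_t(\CC^d))=\binom{\binom{d+t-1}{t}+k-1}{k}$. The new ingredient here is that $\Pl_\lambda^{k,t}=0$ unless $l(\lambda)\le k$ --- this is the Pieri argument at the end of~\cref{sec:expproof} applied to $\Sym_t(\CC^d)^{\otimes k}$, of which $\Sym_k(\Sym_t(\CC^d))$ is a subrepresentation --- so the index set $l(\lambda)\le\min(k,t)=k$ of the expansion formula already carries the entire plethysm decomposition, and since $k\le d$ every $V_\lambda^{U(d)}$ occurring there is nonzero, making the identity $\sum_\lambda\Pl_\lambda^{k,t}\WD_\lambda(d)=\binom{\binom{d+t-1}{t}+k-1}{k}$ hold exactly on that set. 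The same Cauchy--Schwarz step gives $\EE\ge\binom{\binom{d+t-1}{t}+k-1}{k}^{-1}$, and the two regimes together are exactly $\binom{\binom{d+b-1}{b}+a-1}{a}^{-1}$.

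The points that need care, though none is difficult, are: that $q_\lambda$ and $p_\lambda$ sum to $1$ and are supported on $l(\lambda)\le\min(k,t)$ --- for $q_\lambda$ this is $\sum_\lambda q_\lambda=\tr(RC)/(kt)!=1$ because $R\cap C=\{e\}$ in the regular representation, together with $\rho_\lambda(R)=0$ for $l(\lambda)>k$ and $\rho_\lambda(C)=0$ for $l(\lambda)>t$ from~\cref{sec:expproof}; that no term $q_\lambda^2/p_\lambda$ is divided by zero, which holds since $q_\lambda\ne0$ forces the relevant plethysm coefficient to be nonzero by~\cref{thm:pleth_ineq}; and the grid-transpose/cyclicity observation that legitimizes using whichever of $\Pl_\lambda^{k,t}$, $\Pl_\lambda^{t,k}$ is convenient in each regime. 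The only genuinely conceptual subtlety is the one flagged in the first paragraph --- the $t>k$ case must be treated directly, not by duality, because $t$ may exceed $d$ --- and beyond that the proof is the bookkeeping already shown in the text.
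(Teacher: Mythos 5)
Your proposal is correct, and for $t\le k$ it coincides with the paper's own argument. What you add, and correctly flag as necessary, is the regime $t>k$: the written proof opens with ``Let $t\le k$'' and never returns to the other case, yet the theorem statement — and its intended application to $\EE_{M\sim U(d)}\left|\operatorname{Perm}(M)\right|^{2t}$ with $k=d$ fixed and $t$ growing — requires $t>k$. You are right that~\cref{cor:size-moment-general} cannot supply the missing case, since that duality needs $t\le d$ while here only $k\le d$ is assumed. Your fix is sound: on the untransposed $k\times t$ grid, \cref{thm:pleth_ineq} directly gives $\tr[\rho_\lambda(RCRC)]\ge(\tr[\rho_\lambda(RC)])^2/\Pl_\lambda^{k,t}$; the Pieri argument at the end of~\cref{sec:expproof}, applied to $\Sym_k(\Sym_t(\CC^d))\subset(\Sym_t(\CC^d))^{\otimes k}$, forces $\Pl_\lambda^{k,t}=0$ unless $l(\lambda)\le k$; and since $k\le d$ every contributing $V_\lambda^{U(d)}$ is nonzero, so $\sum_\lambda\Pl_\lambda^{k,t}\WD_\lambda(d)=\dim\Sym_k(\Sym_t(\CC^d))=\binom{\binom{d+t-1}{t}+k-1}{k}$ holds with the sum ranging over exactly the index set of~\cref{eq:generalized-expansion-formula}. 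Cauchy--Schwarz then closes as in the written case. You have also made explicit something the paper elides: \cref{thm:pleth_ineq} is stated with $\Pl_\lambda^{k,t}$, while the paper's $t\le k$ calculation invokes $\Pl_\lambda^{t,k}$, and your transpose-plus-cyclicity remark (swap $R\leftrightarrow C$, note $\tr[\rho_\lambda(RC)]$ and $\tr[\rho_\lambda(RCRC)]$ are invariant) is exactly the justification. In short, this is the paper's method carried out with the care the statement actually demands, and it closes a real gap in the written proof.
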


\begin{conj}[Hunter-Jones conjecture]\label{conj:hj}
Nick Hunter-Jones conjectured that:
\begin{equation}\label{eq:hjj}
\EE_{M\sim U(d)} \left| \Perm(M) \right|^{2t} \approx \frac{t!}{\binom{2d-1}{d}^t}.
\end{equation}
\end{conj}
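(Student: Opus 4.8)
\medskip
\noindent\emph{A possible route to \cref{conj:hj}.}
Because the conjecture is stated with ``$\approx$'', we take the target to be the two-sided asymptotic
\[
\EE_{M\sim U(d)}\left|\Perm(M)\right|^{2t}=\frac{t!}{\binom{2d-1}{d}^{t}}\,(1+o(1))
\]
as $d\to\infty$ with $t$ fixed (or growing slowly in $d$); the plan below is a research program rather than a completed argument. Two preliminary reductions make the problem transparent. First, the lower half is essentially free: \cref{thm:minorbound} at $k=d$ gives $\EE_{M\sim U(d)}|\Perm(M)|^{2t}\ge\binom{\binom{2d-1}{d}+t-1}{t}^{-1}$, and since $\binom{N+t-1}{t}^{-1}=\frac{t!}{N^{t}}\prod_{j=1}^{t-1}(1+j/N)^{-1}$ with $N=\binom{2d-1}{d}$ exponentially large, this already equals $\frac{t!}{\binom{2d-1}{d}^{t}}(1-o(1))$. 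So the entire content is the matching upper bound. Second, by \cref{eq:hmt1} at $k=d$ one has $\EE_{M\sim U(d)}|\Perm(M)|^{2}=\binom{2d-1}{d}^{-1}$, so \cref{conj:hj} is precisely the statement that $|\Perm(M)|^{2}$ has asymptotically the moments of an exponential random variable, equivalently that $\Perm(M)\big/\bigl(\EE|\Perm(M)|^{2}\bigr)^{1/2}$ satisfies a standard complex-Gaussian central limit theorem with convergence of moments.

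The approach I would pursue is to sharpen the proof of \cref{thm:minorbound}, which loses in exactly two inequalities: (i) the plethysm bound $\tr[\rho_\lambda(RCRC)]\ge(\tr[\rho_\lambda(RC)])^{2}/\Pl_\lambda^{t,d}$ of \cref{thm:pleth_ineq}, tight iff the normalized operator $\rho_\lambda(\hat R\,C\,\hat R)$ has a flat spectrum on its support; and (ii) the Cauchy--Schwarz step $\bigl(\sum_\lambda q_\lambda^{2}/p_\lambda\bigr)\bigl(\sum_\lambda p_\lambda\bigr)\ge1$ of \cref{eq:defpq}, tight iff $q_\lambda=p_\lambda$ for all $\lambda$. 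If both are shown to hold up to $1+o(1)$ for the Young diagrams that dominate \cref{eq:main_avg_formula} at $k=d$, the expansion collapses to $\bigl(\dim\Sym_t(\Sym_d(\CC^{d}))\bigr)^{-1}=\binom{\binom{2d-1}{d}+t-1}{t}^{-1}$, yielding the upper bound. Concretely I would: (a) prove, paralleling the Gaussian analysis of \cref{sec:permiid}, that the $\lambda$-contributions decay rapidly as boxes leave the first row, so that only $\lambda$ with $O(1)$ boxes below the top row matter; (b) for those $\lambda$, compute $\tr[\rho_\lambda(RC)]$ by the Cauchy-identity method of \cref{thm:RC-formula}, $\Pl_\lambda^{t,d}$ from \cref{sec:pleth-ex}, and $\WD_\lambda(d)$ from \cref{eq:weyl_dim_formula_definition}, and verify $q_\lambda/p_\lambda\to1$ together with the flatness needed in (i); (c) assemble a uniform tail bound over the remaining $\lambda$ to close the estimate.

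A complementary (but more conditional) route is the duality \cref{cor:gen-gauss_duality}: with $k=d$ and $t\le d$ it gives $\EE_{M\sim U(d)}|\Perm(M)|^{2t}=\EE_{M\sim\Usub d t}|\Perm(M)|^{2d}$, reducing the problem to a high, order-$2d$ moment of the permanent of a fixed $t\times t$ unitary minor, which one would then try to pin down via a Gaussian approximation of $\Usub d t$ together with \cref{conj:maintext}. I expect this to be where the main obstacle concentrates, and the reason is structural rather than technical: the approximation $\sqrt d\,\Usub d t\to\Gauss t$ is valid only for moments of bounded order, whereas here the moment order $2d$ grows with the dimension, and the entries of $\Usub d t$ are bounded by $1$ while a Gaussian is unbounded, so the relevant high-moment behavior is genuinely non-Gaussian (the duality itself is exact; it is this subsequent step that fails). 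The same difficulty surfaces in the representation-theoretic route in disguise: the sum in \cref{eq:main_avg_formula} at $k=d$ runs over $\lambda\vdash dt$ with up to $t$ rows --- that is, over partitions of far more than $d$ boxes --- which is exactly the regime in which no closed form is known for the plethysm coefficients or for $\tr[\rho_\lambda(RCRC)]$ at depth $\ge3$, and in which the underlying Weingarten functions $\mathrm{Wg}(\sigma,d)$ carry nontrivial $1/d$ corrections. Thus the crux is a uniform spectral estimate: controlling the non-flatness of $\rho_\lambda(\hat R\,C\,\hat R)$ and the tail of the $\lambda$-sum --- precisely the ingredient that the present paper currently supplies only numerically.
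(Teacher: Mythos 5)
Your proposal correctly recognizes that this is an open conjecture, not a theorem, and that the substantive half — the matching upper bound — is not in the paper. What the paper actually supplies as \emph{evidence} is different from your theoretical program: (i) the matching lower bound of \cref{thm:minorbound}, which you reproduce and correctly observe equals $(1-o(1))\,t!/\binom{2d-1}{d}^t$; and (ii) direct verification at small orders, namely exact equality at $t=1$ via \cref{eq:hmt1}, the closed-form fourth moment \cref{eq:hht2} compared against the conjectured value in \cref{fig:relerr}, and the $t=3$, $d=3,4$ sixth-moment values obtained by feeding the Young-diagram data of \cref{eq:expres} into \cref{eq:generalized-expansion-formula} and checking relative errors of roughly $-0.07$ and $-0.002$. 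You carried out none of these explicit checks, which is the actual evidence the paper offers.

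That said, your structural diagnosis is accurate and somewhat sharper than what the paper makes explicit: you correctly locate the slack in the two inequalities used to prove \cref{thm:minorbound} (the plethysm bound of \cref{thm:pleth_ineq} and the Cauchy--Schwarz step between $p_\lambda$ and $q_\lambda$ in \cref{eq:defpq}), and you correctly observe that the paper itself remarks the tightness of Cauchy--Schwarz would force $p_\lambda\sim q_\lambda$ on the dominant shapes. Your identification of the core obstruction is also right and worth stating plainly: unlike the Gaussian case, at $k=d$ the sum in \cref{eq:main_avg_formula} is not visibly dominated by near-one-row shapes, so the paper's machinery (explicit $\tr[\rho_\lambda(RCRC)]$ for $l(\lambda)\le 2$ plus a few depth-$3$ cases) does not close the estimate; and the size--moment duality only relocates the difficulty to an order-$2d$ moment of a $t\times t$ minor, where the Gaussian approximation of $\Usub d t$ fails for moments whose order grows with $d$. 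These are genuine open gaps, and the paper leaves them open as well; it makes no claim to prove \cref{conj:hj}.
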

Since 
\[\frac{t!}{\binom{2d-1}{d}^t}\approx \binom{\binom{2d-1}{d}+t-1}{t}^{-1},\]
this conjecture means that~\cref{eq:minorbound} is close to being tight.
\par
Note that the Hunter-Jones conjecture can also inform the asymptotics of $\tr[\rho_\lambda(RCRC)]$, as the tightness of Cauchy-Schwarz inequality in the proof of~\cref{thm:minorbound} indicates that $p_\lambda \sim q_\lambda$ when $q_\lambda$ is relatively large. 
\par 
Now, we provide evidence for~\cref{conj:hj}. When $t=1$,~\cref{eq:hmt1} shows that~\cref{eq:hjj} is in fact an equality. For $t=2$, we can use~\cref{eq:hht2} and see that the~\cref{conj:hj} holds with great accuracy for large $d$. See~\cref{fig:relerr} for a comparison of our conjecture and exact results. When $t=3$, we can exactly compute the unitary permanent moments for $d=k=3$ and $d=k=4$ using the results of~\cref{eq:expres} and~\cref{eq:generalized-expansion-formula}. We obtain the following numbers:
\begin{align}
    \EE_{M\sim U(3)} \left| \Perm(M) \right|^{6} &= 323/57750,\\
    \EE_{M\sim U(4)} \left| \Perm(M) \right|^{6} &= 578047/4138509375.
\end{align}
The corresponding relative errors are:
\begin{align*} 
\left(\EE_{M\sim U(3)} \left| \Perm(M) \right|^{6}- \frac{3!}{\binom{2\times 3-1}{3}^3}\right)/\EE_{M\sim U(3)} \left| \Perm(M) \right|^{6} \approx -0.072,\\
\left(\EE_{M\sim U(4)} \left| \Perm(M) \right|^{6}- \frac{3!}{\binom{2\times 4-1}{4}^3}\right)/\EE_{M\sim U(4)} \left| \Perm(M) \right|^{6} \approx -0.0019,
\end{align*}
which again support~\cref{conj:hj}.
\begin{figure}
    \centering
    \includegraphics[width=10cm]{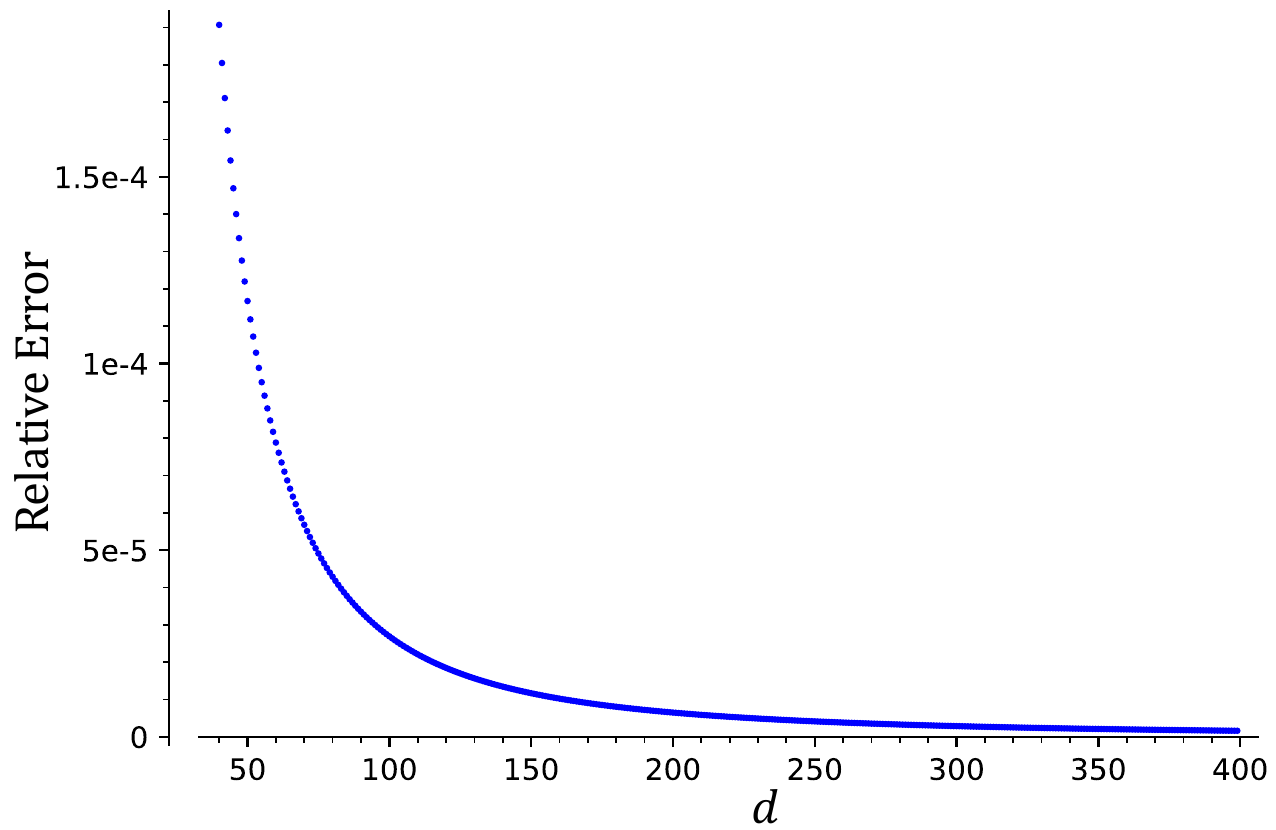}
    \caption{Relative error of~\cref{conj:hj}. Vertical axis is the relative error, $1-\frac{\EE_{M\sim \Usub d d} \left| \Perm(M) \right|^{4}}{{2!}/{\binom{2d-1}{d}^2}}$, and the horizontal axis is $d$.   }
    \label{fig:relerr}
\end{figure}

\section{Moments of the determinant}
In this section, we exactly compute the moments of determinants of random matrices and prove that they constitute a lower bound for the moments of permanents. This lower bound gives better estimate for some small permanent moments, but quickly fails as one considers higher moments. In such cases, we should use the main bounds that we discussed above.
\par
Recall the definition of the determinant:
\[\det(M) = \sum_{\pi \in S_t} \prod_{i=1}^k \text{sgn}(\pi) M_{i \pi (i)}. \]
Skipping a few lines of algebra and repeating the arguments of~\cref{sec:intro_1}, we get
\begin{multline}
\EE_{M\sim \Gauss k} |\!\det(M)|^{2 t} = \frac1{(kt)!} \sum_{r_1 ,r_2 \in R, c_1,c_2 \in C}  \text{sign}(c_1) \text{sign}(c_2) \tr (r_1 c_1 r_2 c_2)\leq  \\ \frac1{(kt)!} \sum_{r_1 ,r_2 \in R, c_1,c_2 \in C}  \tr (r_1 c_1 r_2 c_2) =\EE |\Perm(M)|^{2 t}. 
\end{multline}
Let us indicate the rectangular Young diagram of $k$ rows and $t$ columns by ``$\rectangle$''. Then, $\sum_{r \in R, c \in C}  \text{sign}(c) cr $ is nothing but the Young symmetrizer of the Young diagram $\rectangle$, i.e., $c_{\rectangle}$. We have, 
\begin{equation}\label{eq:detgg}
\EE_{M \sim \Gauss k} |\!\det(M)|^{2 t}  = \frac{1}{(kt)!} \tr(c_{\rectangle}c_{\rectangle}) =\frac{1}{(kt)!}\hook(\!\rectangle) \tr(c_{\rectangle}) = \hook(\rectangle) = \prod_{i=1 \cdots k, j = 1 \cdots t} (i+j-1),
\end{equation}
where we used~\cref{eq:cchook}, and the fact that $\tr c_\lambda$ is always equal to $(kt)!$ as the identity element appears only once in the expansion on the Young symmetrizer. 
This lower bound gives the correct value of permanent moments when $\min(k,t)\leq 2$, but as mentioned earlier, quickly fails for the other values.
\subsection{Exact moments of determinants of Haar random unitary minors}\label{sec:detmom}
The distribution of the determinant of random Gaussian matrices has been extensively studied before, and its moments and the limiting distributions are well known. In this section, we consider the (absolute value of) determinant of truncated random unitaries matrices, and for the first time, prove a simple formula for its moments. This section is a quick digression from the rest of the paper as it is not immediately relevant to the problem of permanents.
\par 
As we mentioned before, we indicate the distribution of $k \times k$ minors of Haar random unitary $d\times d$ matrices by $\Usub d k$. This distribution should take radically different forms as one increases the value of $k$ from $1$ to $d$:
\begin{itemize}
    \item If $k\ll d$, as pointed out in~\cref{rem:gauss_usub}, the distribution $\Usub d k$ is very similar to the distribution of (sub-normalized) random Gaussian matrices. Therefore, the distribution of determinants should be given by the \emph{logarithmic law}~\cite{girko2012theory,nguyen2014random}.
    \item When $k=d$, the absolute value of the determinant is always equal to $1$. Therefore, the distribution is a delta function at $1$ 
\end{itemize}
It is interesting to study how the distribution of determinants of random matrices evolve from the logarithmic law to a delta function, and how the moments of determinant behave in this transition.
\par
The calculations leading to exact moments of determinants are parallels of the calculations that led to~\cref{eq:generalized-expansion-formula} and will be omitted here. The analog of~\cref{eq:generalized-expansion-formula} for the case of determinants is 
\begin{align*}
\EE_{U \sim \Usub d k} \left| \det \left(M\right) \right|^{2t} = \sum_{\lambda \vdash kt, l(\lambda)\leq \min(k,t)} \frac{1}{\WD_\lambda(d)} \left(\frac{f^\lambda}{(kt)!}\right)^2\tr \left[\rho_\lambda(c_{\rectangle}c_{\rectangle}) \right],
\end{align*}
However, the formula simplifies significantly in this case. Note that the support of $c_{\rectangle}$ lies only in the irrep with $\lambda = \rectangle$. Hence, $\lambda={\rectangle}$ is the only term that survives in the sum and we obtain the following relation:
\begin{align*}
\EE_{U \sim \Usub{d}{k}} \left| \det \left(M\right) \right|^{2t} = \frac{1}{\WD_{\rectangle}(d)} \left(\frac{f^{\rectangle}}{(kt)!}\right)^2\tr \left[\rho_{\rectangle}(c_{\rectangle}c_{\rectangle}) \right].
\end{align*}
Using $c_{\rectangle}c_{\rectangle}=\hook({\rectangle})c_{\rectangle}$, and $\rho_{\rectangle}(c_{\rectangle})= \hook({\rectangle})$, this simplifies to
\begin{align}\label{eq:detform}
\EE_{U \sim \Usub{d}{k}} \left| \det \left(M\right) \right|^{2t} = \frac{1}{\WD_{\rectangle}(d)} \left(\frac{1}{\hook({\rectangle})}\right)^2 \hook({\rectangle})^2 =\frac{1}{\WD_{\rectangle}(d)}=\prod_{i=1,j=1}^{k,t}\frac{i+j-1}{(d-k)+i+j-1}.
\end{align}
This relation clearly shows that when $d=k$, all of the moments are equal to $1$, indicating that the distribution is concentrated at the value of $1$. Moreover, as $d\rightarrow \infty$, the normalized moments $\EE_{U \sim \Usub{d}{k}} \left| \det \left(\sqrt d M\right) \right|^{2t} $ reproduce the result of~\cref{eq:detgg}. The exact formula~\cref{eq:detform} is complement to the results of~\cite{zyczkowski2000truncations} which studies the eigenvalue distribution of truncated random unitary matrices.

\section{Summary and open questions}
In this paper, we systematically studied the moments of permanents of random matrices. We analyzed the both cases of the random Gaussian i.i.d. matrices as well as submatrices of random unitary matrices. We proved strong lower bounds for all moments of permanents for the cases of random Gaussian and random unitary matrices, and argued that the lower bound are close the true value of the permanent moments. With our results and quantitative conjectures, we derived concentration results for the log-permanent of random Gaussian matrices. This work leads to new questions, which we discuss below:
\begin{itemize}
    \item The first obvious open question is proving/disproving our moment growth conjecture and our moment upper bound conjecture for the permanent of random Gaussian matrices (\cref{conj:maintext}). Is it possible to bound the individual terms in the expansion~\cref{eq:generalized-expansion-formula}? or is it possible to show that the matrix elements of the weak magic squares defined at end of~\cref{sec:conjsection} are nearly independent?
    \item The second question is proving the Hunter-Jones conjecture for the permanent of random unitary matrices (\cref{conj:hj}). Similar to the Gaussian case, we provided evidence for this conjecture and proved a closely matching lower bound. The proof of this lower bound is different and more subtle than the Gaussian case, but, the exact distribution of random unitary permanents could be easier derive. In particular, this distribution cannot have a large tail because the permanent of a unitary matrix is always smaller than $1$.
    \item Using basic large deviation techniques, we computed the rate function of the distribution of $\frac 1 k \log\left[|\Perm(M)|/\sqrt{k!}\right]$ for $M\sim \Gauss k$, in the limit of $k\rightarrow \infty$. But, the rate function is not differentiable at $0$, which suggests that one should look at the less harshly normalized quantities, such as $\frac 1 {\log(k)} \log\left[|\Perm(M)|/\sqrt{k!}\right]$ for a more detailed computation of the log-permanent distribution. Sadly, such calculation seems to require the knowledge of (non-integer) moments smaller than $2$, which is beyond our work. 
    \item There could be a roadblock for computing the full permanent distribution function from the integer moment information. To argue in details, recall the Stieltjes moment problem:
\begin{dfn}[Stieltjes moment problem]\label{dfn:stm}
Let $m_0,m_1,\cdots$ be a list of positive real numbers. What are the necessary and sufficient conditions such that the probability distribution $p(x)$ exists where
\[ \int_0^\infty p(x) x^t = m_t.\]
Moreover, when the probability distribution exists, is it unique?
\end{dfn}
We have computed the moments from the permanent probability distribution, therefore, we are not concerned about the existence part of the problem. However, if the solution to the moment problem is not unique, then, it is not possible to read-off the permanent distribution purely from the moment data. 
\par 
We do not know if the moment problem of random permanents is non-unique, but given the fast growth of permanent moments we find it unlikely to be unique.
\item We have computed a large number of moments of $3\times 3$ and $4\times 4$ matrices. Is it possible to directly guess a general formula for such moments? An exact analytical formula can fully determine the permanent distribution, and resolve the permanent anti-concentration conjecture. One may also modify the matrix ensemble with the hope of finding simple moment formulas. For instance, the moments of permanents of $3\times 3$ matrices where the first row elements are random phases and the other matrix elements are random complex Gaussians has a simple analytical formula. Is it possible to find such hybrid ensembles where the permanent moments are easily computable for all matrix sizes?   
\end{itemize}

\par
\ssection{Acknowledgments}
The author would like to thank 
Scott Aaronson, 
Alex Arkhipov,
David Gross, 
Nick Hunter-Jones, 
Richard Keung,
Saeed Mehraban, 
Terence Tao, 
Van Vu, and
Michael Walter 
for illuminating discussions. The author is supported by the Walter Burke Institute for Theoretical Physics and IQIM at Caltech.

 \clearpage

\renewcommand\refname{References}
\printbibliography
\appendix 
\clearpage
\section{Explicit moment calculations for \texorpdfstring{$t=3,k=3$}{}, and \texorpdfstring{$t=3,k=4$}{}}\label{eq:expres}
In this section, we report the values of $\tr[\rho_\lambda(RC)]$ for the cases of $t=3$, and $k=3,4$. One can easily see that (e.g. using the sagemath software) all of the plethysm coefficients for these Young diagrams are $0$ or $1$, therefore, $\tr_\lambda(RCRC)=(\tr_\lambda(RC))^2$.
For $t=3,k=3$, there are $12$ Young diagrams with the total number of $9$ boxes, and depth $\leq 3$. Among these, only $5$ of them are non-zero, which are reported below:
\newcommand{\centered}[1]{\begin{tabular}{@{}l@{}} #1 \end{tabular}}

\begin{center}
    \begin{tabular}{|c|c|c|c|}
    \hline
    $\lambda$ & Shape of $\lambda$ & $\tr[\rho_\lambda(RC)]$& General formula from~\cref{sec:RC_calc}\\
    \hline
    \hline
     \centered{ $ (9)$ } & \raisebox{-0.45\totalheight}{ \includegraphics[height=1cm]{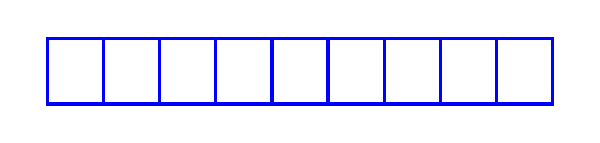}}& $2^63^6 $ &$k!^tt!^k$\\\hline
    $(7,2)$  &\raisebox{-0.45\totalheight}{ \includegraphics[height=1.5cm]{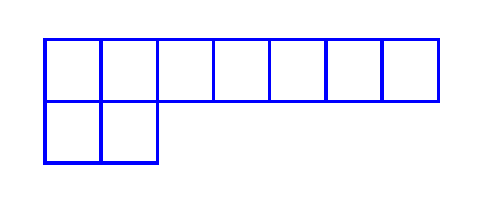}}& $2^63^4 $ &$k!^tt!^k/(td)$\\\hline
    $(6,3)$  & \raisebox{-0.45\totalheight}{ \includegraphics[height=1.5cm]{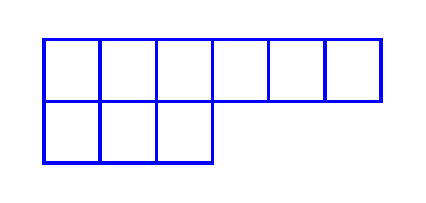}}& $2^83^2 $ &$k!^tt!^k\times 4/(t^2d^2)$\\\hline
    $(5,2,2)$&\raisebox{-0.45\totalheight}{  \includegraphics[height=2cm]{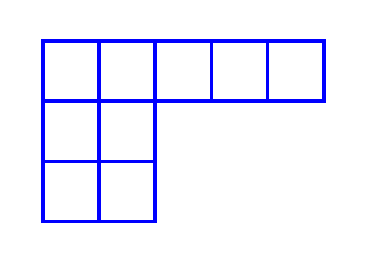}}& $2^43^2 $ & $k!^tt!^k\times \frac{( k-2) ( t-2)}{(k-1) k^2 ( t-1) t^2}$           \\\hline
    $(4,4,1)$&\raisebox{-0.45\totalheight}{  \includegraphics[height=2cm]{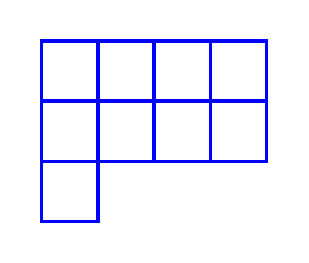}}& $2^63^2 $ & \\
    \hline
    \end{tabular}
\end{center}
For $t=3$ and $k=4$, only $9$ Young diagrams contribute:
\begin{center}
    \begin{tabular}{|c|c|c|c|}
    \hline
    $\lambda$ & Shape of $\lambda$ & $\tr[\rho_\lambda(RC)]$& General formula from~\cref{sec:RC_calc}\\
    \hline
    \hline
     \centered{ $(12)$ } & \raisebox{-0.45\totalheight}{ \includegraphics[height=1cm]{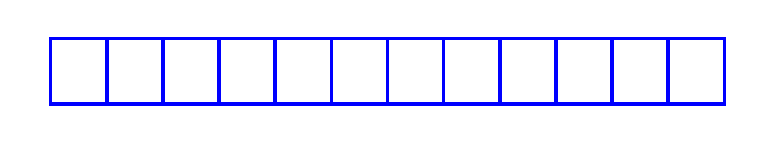}}& $2^{13}  3^7 $ &$k!^tt!^k$\\\hline
    $(10,2)$             &\raisebox{-0.45\totalheight}{ \includegraphics[height=1.5cm]{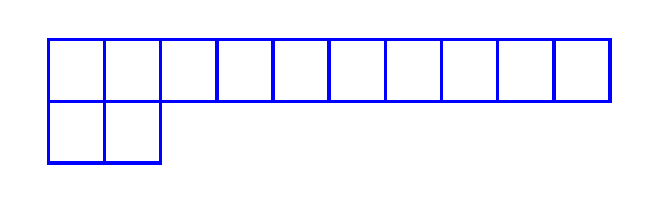}}& $2^{11}  3^6$ &$k!^tt!^k/(td)$\\\hline
    $(9,3)$              & \raisebox{-0.45\totalheight}{ \includegraphics[height=1.5cm]{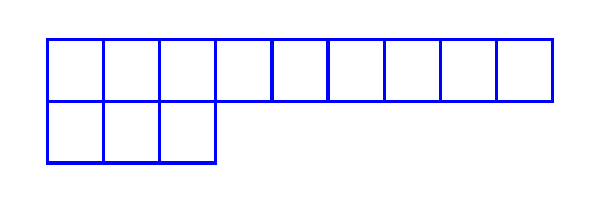}}& $2^{11}  3^5 $ &$k!^tt!^k\times 4/(t^2d^2)$\\\hline
    $(8,4)$              &\raisebox{-0.45\totalheight}{  \includegraphics[height=1.5cm]{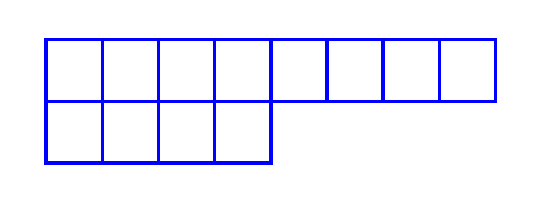}}& $2^{12}  3^4$ & $k!^tt!^k\times \frac29\frac{k}{k^3 - k^2}$                \\\hline
    $(6,6)$&\raisebox{-0.45\totalheight}{  \includegraphics[height=1.5cm]{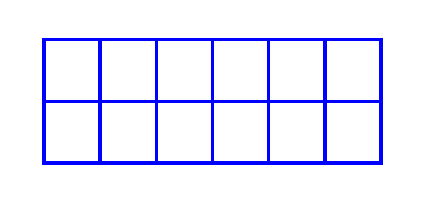}}& $2^{12}  3^3$ & \\
    \hline
    $(8, 2, 2)$&\raisebox{-0.45\totalheight}{  \includegraphics[height=2cm]{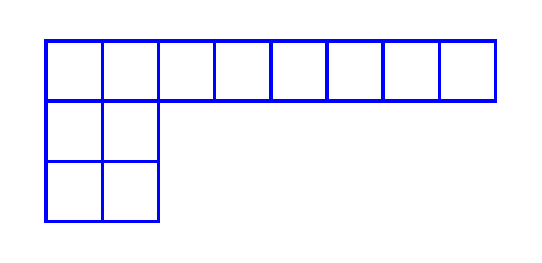}}& $2^9  3^4 $ & $k!^tt!^k\times \frac{( k-2) ( t-2)}{(k-1) k^2 ( t-1) t^2}$ \\
    \hline
    $(7, 4, 1)$&\raisebox{-0.45\totalheight}{  \includegraphics[height=2cm]{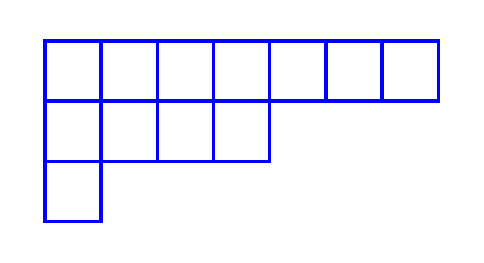}}& $2^{10}  3^4 $ & \\
    \hline
    $ (6, 4, 2)$&\raisebox{-0.45\totalheight}{  \includegraphics[height=2cm]{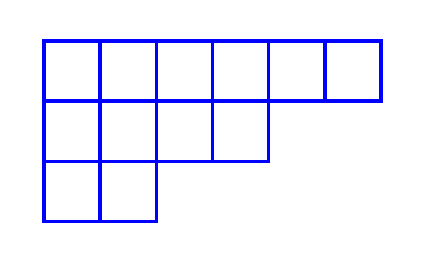}}& $2^{11}  3^2$ & \\
    \hline
    $(4, 4, 4)$&\raisebox{-0.45\totalheight}{  \includegraphics[height=2cm]{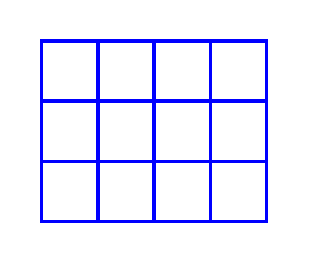}}& $2^9  3^3 $ &\\
    \hline
    \end{tabular}
\end{center}
\section{Table of the permanent moments for \texorpdfstring{$k=3$}{}}\label{sec:algt3}
In this section, we will report the result of the sixth moment of permanents of $k\times k$ matrices. The numbers are generically too large, so we have tried to divide them by a product of factorials to fit them into a table. Unfortunately, we could not find or guess an explicit formula for the moments of permanents.
\par
We see that the permanent moments stay above $1.625 \times k!^{2t}t!^{2k}/(kt)!$ and below $2 \times k!^{2t}t!^{2k}/(kt)!$ when $t\geq 4$, as expected.  
{\tiny
\begin{center}
\begin{tabular}{|c|c|c|}\hline
    $t$ & $\EE \left| \Perm (M_{k=3})\right|^{2t} / t!(t-1)!\lfloor\frac t3\rfloor! \lfloor\frac t4\rfloor! \lfloor\frac t5\rfloor! \lfloor\frac t7\rfloor!$ &   $\EE \left| \Perm (M)\right|^{2t} / (k!^{2t}t!^{2k}/((kt)!))$\\\hline
    $ 1 $ & $ 6 $ & $ 1.00000000000000 $\\
$ 2 $ & $ 72 $ & $ 1.25000000000000 $\\
$ 3 $ & $ 732 $ & $ 1.46433470507545 $\\
$ 4 $ & $ 7584 $ & $ 1.62974751371742 $\\
$ 5 $ & $ 84000 $ & $ 1.75215049154092 $\\
$ 6 $ & $ 504384 $ & $ 1.84007733807704 $\\
$ 7 $ & $ 6586272 $ & $ 1.90141045844492 $\\
$ 8 $ & $ 46690560 $ & $ 1.94268621025552 $\\
$ 9 $ & $ 238845888 $ & $ 1.96908588019468 $\\
$ 10 $ & $ 1976434560 $ & $ 1.98461658724633 $\\
$ 11 $ & $ 35140412736 $ & $ 1.99232746105098 $\\
$ 12 $ & $ 55727266176 $ & $ 1.99451020893753 $\\
$ 13 $ & $ 1131093100800 $ & $ 1.99286917695807 $\\
$ 14 $ & $ 12202705912320 $ & $ 1.98865931503188 $\\
$ 15 $ & $ 18602849959680 $ & $ 1.98279525913578 $\\
$ 16 $ & $ 112390279723008 $ & $ 1.97593613726976 $\\
$ 17 $ & $ 2862815239454976 $ & $ 1.96855071453618 $\\
$ 18 $ & $ 12779353218700800 $ & $ 1.96096697625898 $\\
$ 19 $ & $ 359090847778602240 $ & $ 1.95340958290772 $\\
$ 20 $ & $ 528199187933007360 $ & $ 1.94602797836253 $\\
$ 21 $ & $ 773206609416933888 $ & $ 1.93891735688319 $\\
$ 22 $ & $ 24793217701106190336 $ & $ 1.93213421167903 $\\
$ 23 $ & $ 827889140509167744000 $ & $ 1.92570779661754 $\\
$ 24 $ & $ 598830989173253022720 $ & $ 1.91964852166174 $\\
$ 25 $ & $ 4317372411030828230400 $ & $ 1.91395405910916 $\\
$ 26 $ & $ 161375039894370452806656 $ & $ 1.90861374900449 $\\
$ 27 $ & $ 694064861294677860653568 $ & $ 1.90361174706010 $\\
$ 28 $ & $ 992499042379082029747200 $ & $ 1.89892924766034 $\\
$ 29 $ & $ 41060506623285608097730560 $ & $ 1.89454603039995 $\\
$ 30 $ & $ 29223129728659662019768320 $ & $ 1.89044151500365 $\\
$ 31 $ & $ 1286825956126314742827177984 $ & $ 1.88659546158441 $\\
$ 32 $ & $ 7297375514997659075363782656 $ & $ 1.88298841727536 $\\
$ 33 $ & $ 30980004273557163387248716800 $ & $ 1.87960198342170 $\\
$ 34 $ & $ 1488013910740373372116559769600 $ & $ 1.87641895752227 $\\
$ 35 $ & $ 2098693489678321945943778048000 $ & $ 1.87342338927132 $\\
$ 36 $ & $ 985153598609641299723155927040 $ & $ 1.87060057907881 $\\
$ 37 $ & $ 51256810213850056430278938193920 $ & $ 1.86793703936844 $\\
$ 38 $ & $ 2735164895775501868131145173811200 $ & $ 1.86542043302811 $\\
$ 39 $ & $ 11507626102899101257666168771829760 $ & $ 1.86303949906575 $\\
$ 40 $ & $ 8059299314330223666526571739832320 $ & $ 1.86078397238716 $\\
$ 41 $ & $ 462284296222289424199930047945627648 $ & $ 1.85864450235242 $\\
$ 42 $ & $ 323011260972275492313076690037274624 $ & $ 1.85661257315070 $\\
$ 43 $ & $ 19389095613689669633863411518989030400 $ & $ 1.85468042788833 $\\
$ 44 $ & $ 108154047860222226745809571120370442240 $ & $ 1.85284099748334 $\\
$ 45 $ & $ 50225014425692664920894784672307691520 $ & $ 1.85108783490726 $\\
$ 46 $ & $ 3215644392659577376710736031790702329856 $ & $ 1.84941505494173 $\\
$ 47 $ & $ 210167065857927773530182212723814604111872 $ & $ 1.84781727937288 $\\
$ 48 $ & $ 73000894982941520392127414736732633907200 $ & $ 1.84628958739195 $\\
$ 49 $ & $ 709397333815298882841078973866373982976000 $ & $ 1.84482747088061 $\\
$ 50 $ & $ 4920133537744546672264161440843391131136000 $ & $ 1.84342679421239 $\\
$ 51 $ & $ 20458927216447983273834531539671709940295680 $ & $ 1.84208375818503 $\\
$ 52 $ & $ 113346353260925291399743562400627050525736960 $ & $ 1.84079486770090 $\\
$ 53 $ & $ 8314574422219217877655745462771240829039820800 $ & $ 1.83955690282776 $\\
$ 54 $ & $ 34500154563812199832753196200423476103949352960 $ & $ 1.83836689289419 $\\
$ 55 $ & $ 238432041038408167094756668766278897709419315200 $ & $ 1.83722209330029 $\\
$ 56 $ & $ 164676676203627468970357206882080223885360414720 $ & $ 1.83611996475186 $\\
$ 57 $ & $ 681999861767638242983015292245103844475980738560 $ & $ 1.83505815465391 $\\
$ 58 $ & $ 54573997356138363766693363305148008723045636096000 $ & $ 1.83403448042581 $\\
$ 59 $ & $ 4439790035100283845678937980662381414861625010790400 $ & $ 1.83304691452571 $\\
$ 60 $ & $ 101975349958352280490764879669744920199211594076160 $ & $ 1.83209357099465 $\\
$ 61 $ & $ 8567936319705929309459776317134905109846249700950016 $ & $ 1.83117269335212 $\\
$ 62 $ & $ 731296854673252328188674061496432915272299105098268672 $ & $ 1.83028264369378 $\\
$ 63 $ & $ 335413015049091357299052094007521449520575754258022400 $ & $ 1.82942189285912 $\\
$ 64 $ & $ 1845167454673977226018766301655167043034750179509534720 $ & $ 1.82858901155228 $\\
$ 65 $ & $ 12682256746680458760243753700787130977083129728863846400 $ & $ 1.82778266231236 $\\
$ 66 $ & $ 52276868762074913021526303076355218701929049349712609280 $ & $ 1.82700159224222 $\\
$ 67 $ & $ 4810420494471412979007411287036242626786207046400382812160 $ & $ 1.82624462641455 $\\
$ 68 $ & $ 26415219830028198673102330954653415134032765654140895232000 $ & $ 1.82551066188405 $\\
$ 69 $ & $ 108743820803513005688074019486373054938055526153150572134400 $ & $ 1.82479866224234 $\\
$ 70 $ & $ 74580752041543444800113724539276965641643010019336359116800 $ & $ 1.82410765265959 $\\
$ 71 $ & $ 7260478986629636082091206658585523413380603478120191670026240 $ & $ 1.82343671536324 $\\
$ 72 $ & $ 1658543716135460363333612862531407314663868817934410027171840 $ & $ 1.82278498550972 $\\
$ 73 $ & $ 165882266747924865206904084016310339756148556672966090106880000 $ & $ 1.82215164741010 $\\
$ 74 $ & $ 16812159284890750979749243220614306695110886412329191789076480000 $ & $ 1.82153593107479 $\\
$ 75 $ & $ 4603531624150913959448362425199232529734268515131705294196736000 $ & $ 1.82093710904641 $\\
$ 76 $ & $ 25202209012410533540246834563319118233637390214181198060814336000 $ & $ 1.82035449349305 $\\
$ 77 $ & $ 241366977197967677538973657807799949747740103309025947202792980480 $ & $ 1.81978743353741 $\\
$ 78 $ & $ 990370612570418445427360530484689979937185043796494879714417049600 $ & $ 1.81923531279978 $\\
$ 79 $ & $ 106975541100328991711318407253896890333906467020352210539219357532160 $ & $ 1.81869754713491 $\\
$ 80 $ & $ 36555154227540053054153934850010826302616194014823661503146548264960 $ & $ 1.81817358254549 $\\
$ 81 $ & $ 149851736673363398929001194123847945118734199135584279281761945747456 $ & $ 1.81766289325595 $\\
$ 82 $ & $ 16785663926886993783588316183126473147402708126885324650447584814825472 $ & $ 1.81716497993260 $\\
$ 83 $ & $ 1902626661050162065250048422831769419698530758031523578321484155933491200 $ & $ 1.81667936803728 $\\
$ 84 $ & $ 30923475731107214452096082991518502460176669039705846470147052753223680 $ & $ 1.81620560630270 $\\
$ 85 $ & $ 211033905115271417271660172990407703232255473542069106030658226809241600 $ & $ 1.81574326531946 $\\
$ 86 $ & $ 24764371396919770832019746470968483879210345845263185710296564094429429760 $ & $ 1.81529193622492 $\\
$ 87 $ & $ 101346920171814647629695589724332790530159075926733679909289241095523860480 $ & $ 1.81485122948567 $\\
$ 88 $ & $ 552866781404733560486999559143123682510616270452709684858371540835172352000 $ & $ 1.81442077376579 $\\
$ 89 $ & $ 67088937039147459499754442509276049267011574928837849289798555373109533081600 $ & $ 1.81400021487398 $\\
$ 90 $ & $ 15241683690608024309993834064099842381432707503708748880342411792796460646400 $ & $ 1.81358921478300 $\\
$ 91 $ & $ 145398351752565018216111572213953306791698137456938745871197983890113710981120 $ & $ 1.81318745071601 $\\
$ 92 $ & $ 792401796457255403746361349635606869469307241749908813534695742680107311431680 $ & $ 1.81279461429416 $\\
$ 93 $ & $ 3238115619049237198790651919466157440602271940770629820254905680155048148992000 $ & $ 1.81241041074103 $\\
$ 94 $ & $ 414522052599544861492950324749471785188557424644373479510407159418931153941299200 $ & $ 1.81203455813922 $\\
$ 95 $ & $ 2821947459113873562606816047459173388991579282609847003837813920767783874330624000 $ & $ 1.81166678673537 $\\
$ 96 $ & $ 480171136596231557363608952686919350292034173913395737093202851833349060716134400 $ & $ 1.81130683828988 $\\
$ 97 $ & $ 63388819518945706718347449300458544859519346709249567830809222050220147678932172800 $ & $ 1.81095446546787 $\\
$ 98 $ & $ 603761225868438398106902388365133121350001348602049530759138537188486716137062400000 $ & $ 1.81060943126867 $\\
$ 99 $ & $ 2464066580622273744937507904234552825676575325097370269243681346381530312880332800000 $ & $ 1.81027150849064 $\\
$ 100 $ & $ 670288255898885382815113952829631306581975481222939788400366747112035699423887360000 $ & $ 1.80994047922909 $\\\hline
\end{tabular}
\end{center}
}
\end{document}